\setlist{noitemsep}
\renewcommand{\H}{\mathcal{H}}
\newcommand{\X}{\mathcal{X}}
\newcommand{\executeiffilenewer}[3]{%
\ifnum\pdfstrcmp{\pdffilemoddate{#1}}%
{\pdffilemoddate{#2}}>0%
{\immediate\write18{#3}}\fi%
} 
\newcommand{%
\executeiffilenewer{figs/.svg}{figs/.pdf}%
{inkscape -z -D --file=figs/.svg %
--export-pdf=figs/.pdf --export-latex}%
{\input{figs/.pdf_tex}}}[1]{%
\executeiffilenewer{figs/#1.svg}{figs/#1.pdf}%
{inkscape -z -D --file=figs/#1.svg %
--export-pdf=figs/#1.pdf --export-latex}%
{\input{figs/#1.pdf_tex}}}%
\newcommand{\svg}[2]{\def\svgwidth{#1}%
\executeiffilenewer{figs/#2.svg}{figs/#2.pdf}%
{inkscape -z -D --file=figs/#2.svg %
--export-pdf=figs/#2.pdf --export-latex}%
{\input{figs/#2.pdf_tex}}}
\newtheorem{theorem}{Theorem}[section]
\newtheorem{lemma}[theorem]{Lemma}
\newtheorem{corollary}[theorem]{Corollary}
\newtheorem{proposition}[theorem]{Proposition}
\theoremstyle{definition}
\newtheorem{definition}[theorem]{Definition}
\newtheorem{remark}[theorem]{Remark}
\newcommand{\ba}{\mathbf{a}}
\newcommand{\bb}{\mathbf{b}}
\newcommand{\bx}{\mathbf{x}}
\newcommand{\be}{\mathbf{e}}
\newcommand{\tw}{\textup{tw}}
\newcommand{\csp}{\textup{CSP}}
\newcommand{\G}{\mathcal{G}}
\newcommand{\R}{\mathcal{R}}
\newcommand{\dirG}{\overrightarrow{G}}
\newcommand{\mya}{\mathbf{a}}
\newcommand{\myb}{\mathbf{b}}
\newcommand{\xx}{\mathbf{x}}
\newcommand{\yy}{\mathbf{y}}
\newcommand{\zz}{\mathbf{z}}
\newcommand{\uu}{\mathbf{u}}
\newcommand{\vv}{\mathbf{v}}
\newcommand{\ww}{\mathbf{w}}
\newcommand{\maxflow}{\mathsf{max\text{-}flow}}
\newcommand{\dmax}{\Delta}
\newcommand{\dem}{\mathsf{\mathsf{d}}}
\newcommand{\capa}{\mathsf{\mathsf{c}}}
\newcommand{\Rs}{\textup{\sf R}}
\newcommand{\Hs}{\textup{\sf H}}
\newif\ifappendix
\newif\ifmain
\newif\ifmove
\newif\ifabstract
\newif\iffull
\newcommand{\appstar}{$[*]$\ }
\DeclareMathAlphabet{\mathcal}{OMS}{cmsy}{m}{n}
\newcommand{\appstar}{}
\renewcommand{\phi}{\varphi}
\newcommand{\eps}{\varepsilon}
\newcommand{\dommax}{\delta}
\begin{document}
\title{The limited blessing of low dimensionality: when $1-1/d$ is the best possible exponent for $d$-dimensional geometric problems}
\author{
D\'aniel Marx\thanks{Institute of Computer Science and Control,
 Hungarian Academy of Sciences (MTA SZTAKI), Budapest, Hungary.  Research supported by the European Research Council (ERC)  grant 
``PARAMTIGHT: Parameterized complexity and the search for tight
complexity results,'' reference 280152 and OTKA grant NK105645. \texttt{dmarx@cs.bme.hu}}
\and
Anastasios Sidiropoulos\thanks{Dept.~of Computer Science \& Engineering, and Dept.~of Mathematics,
The Ohio State University, Columbus, OH, USA. \texttt{sidiropoulos.1@osu.edu}}
}
\date{}

\maketitle
\ifabstract
\thispagestyle{empty}
\setcounter{page}{0}  
\fi
\begin{abstract}
  We are studying $d$-dimensional geometric problems that have
  algorithms with $1-1/d$ appearing in the exponent of the running
  time, for example, in the form of $2^{n^{1-1/d}}$ or
  $n^{k^{1-1/d}}$. This means that these algorithms perform somewhat
  better in low dimensions, but the running time is almost the same
  for all large values $d$ of the dimension. Our main result is
  showing that for some of these problems the dependence on $1-1/d$ is
  best possible under a standard complexity assumption. We show that,
  assuming the Exponential Time Hypothesis,
\begin{itemize}
\item $d$-dimensional Euclidean TSP on $n$ points cannot be solved in
  time $2^{O(n^{1-1/d-\epsilon})}$ for any $\epsilon>0$, and
\item the problem of finding a set of $k$ pairwise nonintersecting
  $d$-dimensional unit balls/axis parallel unit cubes cannot be solved
  in time $f(k)n^{o(k^{1-1/d})}$ for any computable function $f$.
\end{itemize}
These lower bounds essentially match the known algorithms for these
problems. To obtain these results, we first prove lower bounds on the
complexity of Constraint Satisfaction Problems (CSPs) whose constraint
graphs are $d$-dimensional grids. We state the complexity results on
CSPs in a way to make them convenient starting points for
problem-specific reductions to particular $d$-dimensional geometric
problems and to be reusable in the future for further results of
similar flavor.
\end{abstract}

\ifabstract
\clearpage
\fi
\section{Introduction}

The curse of dimensionality is a ubiquitous phenomenon observed over
and over again for geometric problems: polynomial\-/time algorithms that
work for low dimensions quickly become infeasible in high dimensions,
as the running time depends exponentially on the dimension
$d$. Consider, for example, the Euclidean $k$-center problem, which
can be formulated as follows: given a set of $P$ points in
$d$-dimensions, find a set of $k$ unit balls whose union covers
$P$. For $k=1$, the problem can be solved in linear time
\cite{DBLP:journals/siamcomp/Megiddo83a}, but the problem becomes
NP-hard even for $k=2$ \cite{DBLP:journals/jsc/Megiddo90} and only
algorithms with running time of the form $n^{O(d)}$ is known \cite{DBLP:journals/csur/AgarwalS98}.  In
recent years, the framework of W[1]-hardness has been used to give
evidence that for several problems (including Euclidean 2-center), the
exponent of $n$ has to depend on the dimension $d$; in fact, for many
of these problems tight lower bounds have been given that show that no
$n^{o(d)}$ algorithm is possible under standard complexity assumptions
\cite{DBLP:journals/comgeo/GiannopoulosKRW13,DBLP:journals/jc/GiannopoulosKWW12,DBLP:journals/talg/CabelloGKMR11,DBLP:conf/stacs/KnauerTW11,DBLP:conf/tamc/Knauer10,DBLP:conf/iwpec/GiannopoulosKR09,DBLP:journals/ipl/CabelloGK08,DBLP:conf/compgeom/Chan08b}.

For certain other geometric problems, however, the dimension $d$ affects the
complexity of the problem in a very different way. Consider, for
example, the classical Traveling Salesperson Problem (\textsc{TSP}): given a
distance metric on a set of $n$ points, the task is to find a shortest
path\footnote{This variant of TSP is also known as \emph{path-TSP}.
Our lower bound also holds for \emph{tour-TSP}, i.e.~the variant where one seeks to find a cycle visiting all vertices. In order to simplify the discussion, we restrict our attention to path-TSP for now; on Section \ref{sec:TSP} we explain how our proof can be modified to obtain the same lower bound for cycle-TSP.} visiting all $n$ points. This problem can be solved in time
$2^n\cdot n^{O(1)}$ using a well-known dynamic programming algorithm
of Bellman \cite{Bellman:1962:DPT:321105.321111} and of Held and Karp \cite{heldkarp} that works for any
metric. However, in the special case when the points are in
$d$-dimensional Euclidean space, \textsc{TSP} can be solved in time
$2^{d^{O(d)}}\cdot n^{O(dn^{1-1/d})}$ by an algorithm of Smith and Wormald~\cite{DBLP:conf/focs/SmithW98},
that is, treating $d$ as a fixed constant, the running time is
$n^{O(n^{1-1/d})}=2^{O(n^{1-1/d}\cdot \log
  n)}=2^{O(n^{1-1/d+\epsilon})}$ for every $\epsilon>0$. This means
that, as the dimension $d$ grows, the running time quickly converges
to the $2^n\cdot n^{O(1)}$ time of the standard dynamic programming
algorithm that does not exploit any geometric property of the
problem. On the other hand, when the dimension $d$ is small, the
algorithm has a moderate gain over dynamic programming: for example,
for $d=2$, we have $2^{O(\sqrt{n}\log n)}$ instead of $2^{n}\cdot
n^{O(1)}$. This behavior is very different compared to the $n^{O(d)}$
running time of algorithms for problems affected by the curse of
dimensionality: for those problems, complexity gets constantly worse and worse as $d$ grows, while for \textsc{TSP} the complexity is essentially the same
for all large values of $d$. Therefore, we may call this phenomenon observed for $d$-dimensional \textsc{TSP} 
the ``limited blessing of low dimensionality'': the running time is
almost uniformly bad for large values of $d$, but some amount of
improvement can be achieved for low dimensions. 

A slightly different example of the same phenomenon appears in the
case of packing problems.  Consider the following problem: Given a set
of $n$ unit balls in $d$-dimensional space and an integer $k$, the
task is to find $k$ pairwise disjoint balls, or in other words, we
have to find an independent set of size $k$ in the intersection graph
of the balls. Clearly, this can be done in time $n^{O(k)}$ by brute
force for any intersection graph. However, using the geometric nature
of the problem, one can reduce the running time to
$n^{O(k^{1-1/d})}$ (for $d=2$, this has been proved by Alber and Fiala \cite{MR2070519}; in Appendix~\ref{sec:packing-alg}, we sketch a simple algorithm for any $d\ge 2$  based on a standard sweeping argument and dynamic programming\footnote{We thank Sariel Har-Peled for suggesting this approach for the algorithm.}).  Again, we are in a similar situation as in the
case of $d$-dimensional \textsc{TSP}: as $d$ grows, the running time quickly
converges to the $n^{O(k)}$ running time of brute force, but there is
a moderate improvement for low dimensions (for example,
$n^{O(\sqrt{k})}$ vs.~$n^{O(k)}$ for $d=2$).

\textbf{Our results.}  Can we make the blessing of low dimensionality
more pronounced? That is, can we improve $1-1/d$ in the exponent of
the running time of the algorithms described above to something like
$1-1.1/d$ or $1-1/\sqrt{d}$? The main result of the current paper is
showing that the exponent $1-1/d$ is best possible for these problems.
We prove these results under the complexity assumption called {\em
  Exponential Time Hypothesis (ETH)}, introduced by Impagliazzo,
Paturi, and Zane \cite{MR1894519}, stating that $n$-variable 3SAT
cannot be solved in time $2^{o(n)}$. This complexity assumption is the
basis of tight lower bounds for many problems, see the survey of
Lokshtanov et al.~\cite{survey-eth-beatcs}.

For $d$-dimensional \textsc{TSP}, we prove the following result:
\begin{theorem}\label{th:intro-tsp}
If for some $d\ge 2$ and $\epsilon>0$, \textsc{TSP} in $d$-dimensional Euclidean space can be solved in time $2^{O(n^{1-1/d-\epsilon})}$, then ETH fails.
\end{theorem}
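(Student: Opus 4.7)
The natural intermediate problem is the CSP lower bound for $d$-dimensional grid constraint graphs developed earlier in the paper. Given a CSP whose constraint graph is the $d$-dimensional grid of side $N$ (so $N^d$ variables over a constant alphabet), the standard slice-by-slice dynamic program solves it in $2^{O(N^{d-1})}$ time, and the CSP hardness result of the paper implies that under ETH no algorithm of running time $2^{O(N^{(d-1)(1-\delta)})}$ exists, for any $\delta>0$. Since $n^{1-1/d}=\Theta(N^{d-1})$ whenever $n=\Theta(N^d)$, it suffices to give a polynomial-time reduction that turns an $N^d$-variable grid CSP into a Euclidean $d$-dimensional TSP instance on $n=O(N^d)$ points, with the property that the TSP has a tour of length at most some threshold $L^\star$ iff the CSP is satisfiable. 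A hypothetical $2^{O(n^{1-1/d-\epsilon})}$ algorithm for TSP would then yield a $2^{O(N^{(d-1)-d\epsilon})}$ algorithm for grid CSP, contradicting the CSP lower bound and hence ETH.

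For the reduction, I would place each CSP variable $v=(i_1,\ldots,i_d)$ at the scaled grid position $\alpha\cdot(i_1,\ldots,i_d)\in\mathbb{R}^d$ for a large spacing $\alpha$, and surround it with a \emph{variable gadget}: a small constant-size point configuration such that every Hamiltonian-like traversal of the gadget must follow one of exactly $|\Sigma|$ canonical patterns, each encoding an assignment to $v$. Between two adjacent variable gadgets (corresponding to a CSP edge along one coordinate axis), I would insert a \emph{constraint gadget}, namely a constant-size set of auxiliary points placed along the segment joining the two positions, designed so that the length traveled on that segment is minimized precisely when the two incident traversal patterns correspond to a pair of values satisfying the constraint; every other combination incurs a quantifiable detour. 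A rigid ``backbone'' (for instance, long forced edges connecting rows of the grid) makes any optimal tour visit all gadgets in a predictable, row-by-row order, so the total length decomposes into a large fixed contribution plus a sum of local contributions that meet the threshold $L^\star$ iff every constraint is satisfied. Because each variable and each constraint contributes only $O(1)$ points, the total number of points is $O(N^d)$, matching the counting required above.

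The main obstacle is the geometric accuracy of the gadgets in the Euclidean ($\ell_2$) metric. Unlike reductions in graph or rectilinear metrics, $\ell_2$-distances between gadget points are typically irrational, and one must ensure that canonical traversals are strictly shorter than alternatives by a gap that survives when composed globally along $\Omega(N^d)$ gadgets. A natural approach is to construct each gadget inside a $2$-dimensional affine flat of $\mathbb{R}^d$ (reusing or adapting the planar TSP gadgetry already known in the literature), so that only the macroscopic placement of the gadgets truly uses the $d$-dimensional ambient space; verifying that no unexpected shortcut through $\mathbb{R}^d$ undercuts a canonical traversal is the part that requires real care. A secondary issue is keeping all point coordinates polynomially bounded and representable with polynomially many bits, handled in the standard way by working on a sufficiently fine integer subgrid. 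Finally, the cycle-TSP variant promised in the footnote is obtained by closing the backbone into a single forced cycle, which is a minor modification of the same construction.
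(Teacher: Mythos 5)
Your plan matches the paper's at the structural level: instantiate the \emph{bounded-domain} grid-CSP lower bound (Theorem~\ref{th:gridbounded-domain} rather than the large-domain one, as you correctly note, so that each variable and constraint contributes only $O(1)$ points), then give a gadget reduction in the spirit of Papadimitriou's planar TSP hardness, producing $O(N^d)$ points from an $N^d$-variable grid CSP, and close the backbone for cycle-TSP. You also correctly flag the crux: ruling out $\ell_2$ shortcuts among gadgets embedded in $\mathbb{R}^d$, which the paper handles via the $a$-component / $(a,k)$-compactness decomposition (Lemma~\ref{lem:pap}) together with the $\theta$-ribbon and configuration-$\mathbf{H}$/$\mathbf{B}$ machinery of Section~\ref{sec:TSP}.

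The one concrete gap is that you sketch a uniform construction for all $d\ge 2$, with each constraint gadget being ``a constant-size set of auxiliary points placed along the segment joining'' adjacent variable cells inside a fixed $2$-flat. The paper's gadget construction requires $d\ge 3$ for a reason your plan does not confront: to enforce an arbitrary binary relation $R$, the reduction introduces a separate $2$-chain $\Gamma(\uu,i,\vv,j)$ (with its own configurations-$\mathbf{H}$ and -$\mathbf{B}$) for \emph{every} forbidden pair $(i,j)\notin R$, so up to $|D|^2$ chains must be routed between the two variable gadgets while all distinct chains stay pairwise far apart so that Lemma~\ref{lem:pap} applies. There is no room to do this on a segment or inside a single $2$-flat in the plane; the chains would cross. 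The paper therefore handles $d=2$ by a separate argument (Papadimitriou's \textsc{Exact-Cover} reduction composed with a linear-size reduction from \textsc{3-Coloring}) and uses the gadget construction only for $d\ge 3$. A correct write-up of your proposal must either make this case split explicit or replace the ``auxiliary points on a segment'' gadget with one that genuinely uses the third dimension.
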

Note that this lower bound essentially matches the
$2^{O(n^{1-1/d+\epsilon})}$ time algorithm of Smith and
Wormald~\cite{DBLP:conf/focs/SmithW98}. For packing problems, we
have the following results:
\begin{theorem}\label{th:intro-packing}
  If for some $d\ge 2$ and computable function $f$, there is a
  $f(k)n^{o(k^{1-1/d})}$ time algorithm for finding $k$ pairwise
  nonintersecting $d$-dimensional balls/axis-parallel cubes, then ETH
  fails.
\end{theorem}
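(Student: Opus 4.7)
The plan is to reduce from the $d$-dimensional grid-CSP lower bound that the paper develops earlier: no algorithm solves in time $f(k)n^{o(k^{1-1/d})}$ a CSP whose constraint graph is the $d$-dimensional grid with $k$ vertices, domain size $n$, and binary constraints between grid neighbors, unless ETH fails. Given such a CSP, I would construct in polynomial time a family of $N = \mathrm{poly}(n,k)$ unit balls (or axis-parallel unit cubes) in $\mathbb{R}^d$ that admits a pairwise nonintersecting subset of size $K = k$ if and only if the CSP is satisfiable. A hypothetical $f(K) N^{o(K^{1-1/d})}$ packing algorithm would then yield an $f'(k) n^{o(k^{1-1/d})}$ CSP algorithm, because $(nk)^{o(k^{1-1/d})} = n^{o(k^{1-1/d})} \cdot k^{o(k^{1-1/d})}$ and the second factor depends only on $k$ and can be folded into $f'$; this contradicts ETH.

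For the reduction, for each variable $v$ of the CSP I would place a \emph{variable gadget} centered near the coordinates of $v$ in a scaled copy of the grid embedded in $\mathbb{R}^d$. The gadget consists of $n$ candidate unit balls, one per domain value, arranged so that any two of them pairwise overlap, e.g., by placing their centers within a common region of diameter less than the ball diameter. This forces any packing of size $K=k$ to select exactly one ball per variable. The $i$-th ball encodes value $i$ via small, direction-specific offsets along each of the (at most $2d$) axes pointing toward the grid neighbors of $v$. For each grid edge $(u,v)$, the offsets along the $u$-to-$v$ axis are calibrated so that the pair $(i,j)$ drawn from gadgets $u$ and $v$ is nonintersecting precisely when the CSP constraint on $(u,v)$ allows the assignment $(i,j)$. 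Nonadjacent gadgets are kept far enough apart to never interact. Axis-parallel unit cubes make this calibration particularly transparent because overlap along each axis factors independently; the same geometry transfers to unit balls once offsets are chosen small enough relative to the radius that only the offsets along the inter-gadget axis influence the intersection behavior.

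Correctness is straightforward in both directions: a satisfying assignment yields a valid packing by selecting the corresponding ball per variable, while any packing of size $K = k$ must select exactly one ball per gadget and the offset scheme guarantees that the induced assignment satisfies every constraint. The main technical obstacle is the dual role of the offsets: each ball must simultaneously overlap all other balls in its own gadget, faithfully encode up to $2d$ independent neighbor-interaction patterns, and avoid spurious interactions with balls from non-adjacent gadgets. Implementing these requirements independently is almost immediate for axis-parallel cubes thanks to their product structure, but for unit balls one needs a careful scale argument to ensure that, for any adjacent pair of gadgets, the intersection pattern depends only on the offsets along the single axis connecting them.
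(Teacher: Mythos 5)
Your high-level plan (reduce from grid-CSPs, one pairwise-overlapping cluster of balls per variable, offsets to encode the constraint) matches the paper's skeleton, and the parameter bookkeeping $N=\mathrm{poly}(n,k)$, $K=k$, folding $k^{o(k^{1-1/d})}$ into $f'$ is correct. But there is a genuine gap in the central claim: you assert that, for a grid edge $(u,v)$, the offsets along the $u$--$v$ axis ``are calibrated so that the pair $(i,j)$ \ldots is nonintersecting precisely when the CSP constraint on $(u,v)$ allows $(i,j)$.'' This cannot be done for an \emph{arbitrary} binary relation $R$. Place ball $i$ of gadget $u$ with signed offset $a_i$ along the axis and ball $j$ of gadget $v$ with offset $b_j$; with small perpendicular offsets, the two balls (or cubes) are nonintersecting precisely when $a_i - b_j$ is below a fixed threshold, i.e.\ the set of nonintersecting pairs is always a monotone ``staircase'' of the form $\{(i,j): a_i \le b_j + c\}$. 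An arbitrary $R\subseteq D^2$ is not of this threshold form, so the promised calibration does not exist. A second, related issue you acknowledge but do not resolve: you have only $d$ geometric degrees of freedom per ball, yet you want $2d$ independently tunable interactions (one per grid neighbor); the $+e_i$ and $-e_i$ interactions share a single coordinate and move oppositely, which again forces a monotone structure.

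The paper circumvents exactly these two obstacles \emph{at the CSP level} rather than at the geometric level. It first converts an arbitrary binary CSP on the grid into a ``projection CSP'' (Proposition~\ref{prop:project}) and then into a ``geometric $\le$-CSP'' (Proposition~\ref{prop:csp-to-geo}), where the domain is $[\Delta]^d$ and the only binary constraint allowed between $\ba$ and $\ba+\be_i$ is $x_i\le y_i$ on the $i$-th coordinates of the assigned $d$-tuples. This is precisely the monotone threshold relation that ball/cube offsets \emph{can} realize, and the same $i$-th coordinate consistently drives both the $+e_i$ and $-e_i$ interactions, resolving the $2d$-vs.-$d$ mismatch. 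The final geometric reduction (Theorem~\ref{th:geomcspunbounded} to packing) is then exactly the transparent ball-placement you had in mind, with centers $\ba+\epsilon\bx$ for $\bx$ in the unary constraint $R_\ba$. Without the intermediate CSP reshaping, your reduction as stated would fail on any constraint whose allowed pairs do not form a staircase, which is the generic case.
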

That is, the exponent $k^{1-1/d}$ cannot be improved, even if we allow
an arbitrary dependence $f(k)$ as a multiplicative constant. That is, in the
language of parameterized complexity, we are not only proving that the
problem is not fixed-parameter tractable, but we also give a tight
lower bound on the dependence of the exponent on the parameter $k$.

To prove Theorems~\ref{th:intro-tsp} and \ref{th:intro-packing}, we
first develop general tools for approaching $d$-dimensional geometric
problems. We formulate complexity results in the abstract setting of
Constraint Satisfaction Problems (CSPs) whose constraint graphs are
$d$-dimensional grids. These results faithfully capture the influence
of the number $d$ of dimensions on problem complexity and are stated
in a way to facilitate further reductions to $d$-dimensional geometric
problems. Then we can obtain Theorems~\ref{th:intro-tsp} and
\ref{th:intro-packing} by problem-specific reductions that are mostly
local and do not depend very much on the number of dimensions. We
believe that our results for $d$-dimensional CSPs could serve as a
useful starting point for proving further results of this flavor for
geometric problems.  Producing an exhaustive list of such results was
not the goal of the current paper; instead, we wanted to demonstrate
that $1-1/d$ in the exponent can be the best possible dependence on
the dimension, build the framework for proving such lower bounds, and
provide a sample of results on concrete problems.

Let us remark that the results in
Theorems~\ref{th:intro-tsp}--\ref{th:intro-packing} were already known
for the special case of $d=2$. Papadimitriou
\cite{DBLP:journals/tcs/Papadimitriou77} proved the NP-hardness of
Euclidean \textsc{TSP} in $d=2$ dimensions by a reduction from
\textsc{Exact-Cover}: given an instance of \textsc{Exact-Cover} with
universe size $n$ and $m$ subsets, the reduction creates an equivalent
instance of \textsc{TSP} with $O(nm)$ points. It can be shown that an
instance of \textsc{3-Coloring} with $n$ vertices and $m$ edges can be reduced
to an instance of \textsc{Exact-Cover} with universe size $O(n+m)$ and
number of sets $O(n+m)$. Therefore, a $2^{o(\sqrt{n})}$ algorithm for
TSP in $d=2$ dimensions would give a $2^{o(n+m)}$ time algorithm for
\textsc{3-Coloring}, contradicting ETH \cite{survey-eth-beatcs}.

Marx~\cite{marx-ptaslower} proved the W[1]-hardness of finding $k$
pairwise nonintersecting unit disks (in $d=2$ dimensions) by a
reduction from \textsc{$k$-Clique}. The reduction maps an instance of
\textsc{$k$-Clique} to a set of disks where $k':=k^2$ independent disks have to
be found. By a result of Chen et al.~\cite{MR2121603}, if \textsc{$k$-Clique}
can be solved in time $f(k)n^{o(k)}$ for some computable function $f$,
then ETH fails. Putting together the result of Chen et
al.~\cite{MR2121603} and the reduction of Marx~\cite{marx-ptaslower},
we get Theorem~\ref{th:intro-packing} for $d=2$.

For $d\ge 3$ dimensions, however, the tight lower bounds become much
harder to obtain. As we shall see, the hardness proofs rely on
constructing embeddings into $d$-dimensional grids. For $d=2$, this
can be achieved by simple and elementary arguments, but the tight
results for $d\ge 3$ require more delicate constructions.

\textbf{Constraint satisfaction problems.} We use the language of CSPs
to express the basic lower bounds in a way that is not specific to any
geometric problem. A CSP is defined by a set $V$ of variables, a
domain $D$ from which the variables can take values, and a set of
constraints on the variables. In the current paper, we consider only
CSPs where every constraint is binary, that is, involves only two
variables and restricts the possible combination of values that can
appear on those two variables in a solution (see
Section~\ref{sec:constr-satisf-probl} for definitions related to
CSPs). It is important to point out that one can consider CSPs where
the size of the domain $D$ is a fixed small constant (e.g., \textsc{3-Coloring}
on a graph $G$ can be reduced to a CSP with $|V(G)|$ variables and $|D|=3$) or CSPs where the domain size is
large, much larger than the number of variables (e.g., \textsc{$k$-Clique} on a
graph $G$ can be reduced to a CSP with $k$ variables and $|D|=|V(G)|$). We
will need both viewpoints in the current paper.

Intuitively, it is clear how a hardness proof for a $d$\-/dimensional
geometric problem should proceed. We construct small gadgets able to represent a certain number of states and put copies of these gadgets at certain
locations in $d$-dimensional space. Then each gadget can interact with
the at most $2d$ gadgets that are ``adjacent'' to it in one of the $d$
dimensions. The gadgets should be constructed to ensure that each such
interaction enforces some binary constraint on the states of the two
gadgets. Therefore, we can effectively express a CSP where the
variables are located on the $d$-dimensional grid and the binary
constraints are only on adjacent variables. This means that we need lower
bounds on the complexity of such CSPs. In particular, we would like to
understand the complexity of CSPs where the graph of constraints is exactly 
a $d$-dimensional grid.

There is a large body of literature on how {\em structural
  restrictions,} that is, restrictions on the constraint graph
influences the complexity of CSP
\cite{chen-grohe-succinct,1206036,MR2351517,grohe-marx,MR2213853,380867,DBLP:journals/toc/Marx10,marx-fractional-talg,marx-tocs-truthtable,gotleosca02}. Specifically,
we need a general result of Marx~\cite{DBLP:journals/toc/Marx10}
stating that, in a precise technical sense, treewidth of the
constraint graph governs the complexity of the problem. Roughly
speaking, the result states that, assuming the Exponential Time
Hypothesis, there is no $|D|^{o(\tw(G)/\log \tw(G))}$ time algorithm
for CSP (where $\tw(G)$ is the treewidth of the constraint graph) and
this holds {\em even if we restrict the constraint graph to any class
  of graphs.}  Therefore, if we restrict CSP to instances where the
constraint graph is the $d$-dimensional grid with $k=m^d$ vertices for
some $m$, then the known fact that such a $d$-dimensional grid has
treewidth $O(m^{d-1})=O(k^{1-1/d})$ implies that there is no
$|D|^{o(k^{1-1/d}/\log k)}$ time algorithm for such CSPs.  Therefore,
in a sense, the connection to treewidth given by
\cite{DBLP:journals/toc/Marx10} and the treewidth of the
$d$-dimensional grid explain why $1-1/d$ is the right exponent for the
$d$-dimensional geometric problems we are considering.

We still have some work left to prove
Theorems~\ref{th:intro-tsp}--\ref{th:intro-packing}. First, as a minor
issue, we remove the log factor from the lower bound obtained above
for CSPs whose constraint graph is a $d$-dimensional grid and improve
it to the tight bound ruling out $|D|^{o(k^{1-1/d})}$ time algorithms. The general result of
\cite{DBLP:journals/toc/Marx10} is a based on constructing certain
embeddings exploiting the treewidth of the constraint graph. However, by
focusing on a specific class of graphs, we can obtain slightly better
embeddings and therefore improve the lower bound. In particular, Alon and Marx
\cite{DBLP:journals/siamdm/AlonM11} gave an embedding of an arbitrary graph into a
$d$-dimensional Hamming graph (generalized hypercube) and it is
easy to embed a $(d-1)$-dimensional Hamming graph into a
$d$-dimensional grid. These embeddings together prove the tighter
lower bound. Second, we modify the CSPs to make them more suited to
reductions to geometric problems. In these CSPs, the domain is
$[\dommax]^d$ for some integer $\dommax$, that is, the solution assigns every variable a $d$-tuple of integers as a value. Every constraint is of the same form:
if variables $v_1$ and $v_2$ are adjacent in the $i$-th dimension (with $v_2$ being larger by one in the $i$-th coordinate),
then the constraint requires that the $i$-th component of the value of
$v_1$ is at most the $i$-th component of the value of $v_2$.  Then
problem-specific, but very transparent reductions from these CSPs to packing unit
disks or unit cubes prove Theorem~\ref{th:intro-packing}.

To prove Theorem~\ref{th:intro-tsp}, we need a slightly different
approach. The issue is that the general result of
\cite{DBLP:journals/toc/Marx10} holds only if there is no bound on the
domain size. Thus we need a reduction to \textsc{TSP} that works even
if the domain size is much larger than the number of
variables. However, if we have $k$ variables and domain size $|D|$,
then probably the best we can hope for is a reduction to \textsc{TSP}
with $n=O(|D|k)$ or so points (and even this is only under the
assumption that we can construct gadgets with $O(|D|)$ points to
represent each variable and each constraint, which is far from
obvious). But then a $2^{O(n^{1-1/d-\epsilon})}$ time algorithm for
$d$-dimensional \textsc{TSP} would give only a
$2^{O(|D|k)^{1-1/d-\epsilon}}$ time algorithm for CSP, which would not
violate the lower bound ruling out $|D|^{o(k^{1-1/d})}$ time
algorithms. Therefore, we prove a variant of the lower bound stating
that there is a constant $\dommax$ such that there is no
$2^{O(k^{1-1/d-\epsilon})}$ time algorithm for CSP on $d$-dimensional
grids even under the restriction $|D|\le \dommax$. Again, we prove this
lower bound by revisiting the embedding results into $d$-dimensional
grids and Hamming graphs. Then a problem-specific reduction reusing some of the ideas of
Papadimitriou~\cite{DBLP:journals/tcs/Papadimitriou77} 
 for
the $d=2$ case 
proves
Theorem~\ref{th:intro-tsp}. Interestingly, our reduction exploits the
fact $d\ge 3$: this allows us to express arbitrary binary relations in
an easy way without having to worry about crossings.

\iffull
\textbf{Organization.} The paper is organized as follows. Section~\ref{sec:constr-satisf-probl} introduces CSPs and proves the lower bounds on CSPs  whose constraint graphs are $d$-dimensional grids. Section~\ref{sec:packing} transfers these lower bounds to packing unit balls/cubes. Section~\ref{sec:TSP} proves the result on TSP. Appendix~\ref{sec:embedding_alternative} gives an alternative proofs for one of the embedding results of Section~\ref{sec:constr-satisf-probl}. Appendix~\ref{sec:packing-alg} presents an $n^{O(k^{1-1/d})}$ time algorithm for packing unit balls/cubes in $d$-dimensions.
\fi


\ifmain
\ifabstract
\section{Constraint Satisfaction\\ Problems}
\else
\section{Constraint Satisfaction Problems}
\fi
\label{sec:constr-satisf-probl}

Understanding constraint satisfaction problems (CSPs) whose constraint
graphs are $d$-dimensional grids seems to be a very convenient starting
point for proving lower bounds on the complexity of $d$-dimensional
geometric problems. In this section, we review the relevant background
on CSPs and prove the basic complexity results that will be useful for
the lower bounds on specific $d$-dimensional geometric problems.

\begin{definition}\label{def:csp}
An instance of a {\em constraint satisfaction problem} is a triple $(V ,D, C)$,
where:
\begin{itemize}
\item $V$ is a set of variables,
\item $D$ is a domain of values,
\item $C$ is a set of constraints, $\{c_1,c_2,\dots ,c_q\}$.
Each constraint $c_i\in C$ is a pair $\langle
s_i,R_i\rangle$, where:
\begin{itemize}
\item $s_i$ is a tuple of variables of length $m_i$, called the {\em constraint scope,} and
\item $R_i$ is an $m_i$-ary relation over $D$, called the {\em constraint
  relation.}
\end{itemize}
\end{itemize}
\end{definition}
For each constraint $\langle s_i,R_i\rangle$ the tuples of $R_i$
indicate the allowed combinations of simultaneous values for the
variables in $s_i$. The length $m_i$ of the tuple $s_i$ is called the
{\em arity} of the constraint.  A {\em solution} to a constraint
satisfaction problem instance is a function $f$ from the set of
variables $V$ to the domain of values $D$ such that for each
constraint $\langle s_i,R_i\rangle$ with $s_i = (
v_{i_1},v_{i_2},\dots,v_{i_m})$, the tuple $(f(v_{i_1}),
f(v_{i_2}),\dots,f(v_{i_m}))$ is a member of $R_i$.  We say that
an instance is {\em binary} if each constraint relation is binary,
i.e., $m_i=2$ for each constraint (hence the term ``binary'' refers to the arity of the constraints and {\em not} to the size of the domain).
Note that Definition~\ref{def:csp} allows for a variable to appear multiple times in the scope of the constraint. Thus a binary instance can contain a constraint 
of the form $\langle (v,v),R\rangle$, which is essentially a unary constraint.
We will deal only with binary CSPs in this paper. We may assume that there is at most one constraint with the same scope, as two constraints $\langle s,R_1\rangle$ 
and $\langle s,R_2\rangle$ can be merged into a single constraint $\langle s,R_1\cap R_2\rangle$. Therefore, we may assume that the input size $|I|$ of a binary CSP instance is polynomial in $|V|$ and $|D|$, without going into the details of how the constraints are exactly represented.

The {\em primal graph} (or {\em Gaifman graph}) of a CSP instance
$I=(V ,D, C)$ is a graph with vertex set $V$ such that distinct
vertices $u,v\in V$ are adjacent if and only if there is a constraint
whose scope contains both $u$ and $v$. The following classical result
shows that treewidth of the primal graph is a relevant parameter to
the complexity of CSPs: low treewidth implies efficient algorithms.
\begin{theorem}[Freuder~\cite{Freuder90AA}]\label{th:freuder}
Given a binary CSP instance $I$ whose primal graph has treewidth $w$, a solution can be found in time $|I|^{O(w)}$.
\end{theorem}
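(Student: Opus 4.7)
The plan is to give a dynamic programming algorithm on a tree decomposition of the primal graph $G$ of $I$. First I would compute a tree decomposition $(T,\{B_t\}_{t\in V(T)})$ of $G$ of width $O(w)$; this is possible in $f(w)\cdot|V|^{O(1)}$ time by classical algorithms (e.g.\ Bodlaender's), which fits comfortably inside the claimed $|I|^{O(w)}$ budget. I would then root $T$ at an arbitrary node and orient all dependencies toward the root. To ensure every constraint is enforced exactly once, I would charge each binary constraint $\langle(u,v),R\rangle$ to the deepest bag that contains both $u$ and $v$; such a bag exists because $uv$ is an edge of $G$ and by the standard connectivity property of tree decompositions.

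Next, for each node $t$ I would build a table indexed by assignments $\alpha\colon B_t\to D$; there are at most $|D|^{w+1}$ such assignments. The entry at $\alpha$ records whether $\alpha$ extends to an assignment of all variables appearing in the subtree rooted at $t$ that agrees with $\alpha$ on $B_t$ and satisfies every constraint charged to bags in that subtree. At a leaf $t$, I simply discard those $\alpha$ that violate a constraint charged to $B_t$, which takes $|D|^{O(w)}$ time. At an internal node with children $t_1,\ldots,t_\ell$, I would combine the children's tables with the candidate assignments at $t$ by checking agreement on the shared separators $B_t\cap B_{t_j}$ and then enforcing the constraints charged to $B_t$. Since $|B_t|,|B_{t_j}|\le w+1$, each combination takes $|D|^{O(w)}$ time per candidate, giving $|D|^{O(w)}$ time per node and $|V|\cdot |D|^{O(w)}=|I|^{O(w)}$ time overall. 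An instance is satisfiable iff the root table contains a feasible entry; a witness assignment can be recovered via back-pointers stored during the DP.

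The main obstacle is purely organizational: verifying, using the running intersection property of tree decompositions, that charging each constraint to its deepest containing bag really does enforce every constraint exactly once, and that the DP combination step respects the separator structure. Once these bookkeeping invariants are in place, correctness and the $|I|^{O(w)}$ running time follow from a routine induction on the rooted tree $T$.
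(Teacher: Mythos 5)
The paper does not prove this theorem; it is stated as a black-box result with a citation to Freuder, so there is no internal proof to compare against. Your argument is correct and is the standard modern presentation: dynamic programming over a rooted tree decomposition, with a table per bag indexed by assignments of that bag's variables, and each binary constraint enforced at some bag containing both of its variables. Such a bag exists by the edge-coverage property of tree decompositions (the running intersection property then implies the set of such bags is a connected subtree, which makes ``deepest such bag'' well defined up to ties, but mere existence is what you actually use). Freuder's original paper phrased the result in terms of elimination orderings of bounded width rather than tree decompositions, but the two viewpoints are equivalent and your proof captures the same dynamic program.

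One small technical slip is your claim that computing the decomposition fits ``comfortably'' in the budget via Bodlaender's algorithm: Bodlaender's exact algorithm runs in $2^{O(w^3)}\cdot n$ time, which is not $|I|^{O(w)}$ once $w$ exceeds roughly $\sqrt{\log |I|}$. To stay safely within the stated bound you should instead either enumerate candidate bags of size at most $w+1$ directly, yielding a decomposition of width $O(w)$ in $n^{O(w)}$ time, or invoke a constant-factor approximation such as the classical Robertson--Seymour procedure running in $2^{O(w)}n^{O(1)}$ time. Either choice is $|I|^{O(w)}$ and the rest of your argument goes through unchanged.
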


The {\em $d$-dimensional grid} $\Rs[n,d]$ has vertex set $[n]^d$ and
vertices $\ba=(a_1,\dots, a_d)$ and $\bb=(b_1,\dots, b_d)$ are
adjacent if and only if $\sum_{i=1}^d|a_i-b_i|=1$, that is, they
differ in exactly one coordinate and only by exactly one in that
coordinate. In other words, if we denote by $\be_i$ the unit vector
whose $i$-th coordinate is 1 and every other coordinate is 0, then
$\bb$ is the neighbor of $\ba$ only if $\bb$ is of the form $\ba+\be_i$
or $\ba-\be_i$ for some $1\le i \le d$. Note that the maximum degree
of $\Rs[n,d]$ is $2d$ (for $n\ge 3$). We denote by $\R_d$ the set of
graphs $\Rs[n,d]$ for every $n\ge 1$. For every fixed $d$, the treewidth
of the $d$-dimensional grid is $\Theta(n^{d-1})$ (this is proved for
the related notion of carving width by Kozawa et
al.~\cite{DBLP:journals/dm/KozawaOY10}, but carving width is known to
be between $\tw(G)/3$ and $\Delta\tw(G)$, where $\Delta$ is the maximum degree
\cite{DBLP:conf/isaac/ThilikosSB00}).

\ifabstract
\begin{proposition}[Kozawa et
al.~\cite{DBLP:journals/dm/KozawaOY10}]\label{prop:gridtw}
For any fixed $d\ge 2$, the treewidth of $\Rs[n,d]$ is $\Theta(n^{d-1})=\Theta(|V(\Rs[n,d])|^{1-1/d})$.
\end{proposition}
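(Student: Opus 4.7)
The plan is to establish matching upper and lower bounds on $\tw(\Rs[n,d])$ of order $n^{d-1}$; together with $|V(\Rs[n,d])| = n^d$, this gives the claimed $\Theta(|V(\Rs[n,d])|^{1-1/d})$ reformulation. Throughout, $d$ is treated as a fixed constant, so factors depending only on $d$ are absorbed into the $O$ and $\Omega$ notation.

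For the upper bound, I would exhibit an explicit path decomposition of width $O(n^{d-1})$. Index the vertices by $(a_1,\dots,a_d) \in [n]^d$ and, for each $i \in [n-1]$, form the bag $B_i = \{(a_1,\dots,a_d) : a_1 \in \{i,i+1\}\}$, i.e., the union of two consecutive ``hyperslabs'' orthogonal to the first coordinate axis. Each $B_i$ has exactly $2n^{d-1}$ vertices. It is routine to verify that $(B_1,\dots,B_{n-1})$ in this order is a valid path decomposition: every vertex appears in a contiguous range of bags (one or two of them), and every edge of $\Rs[n,d]$ either lies inside a single hyperslab (and hence in some bag) or connects consecutive hyperslabs (and hence lies in the unique bag containing both). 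This yields $\tw(\Rs[n,d]) \le 2n^{d-1}-1$.

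For the lower bound, the cleanest route is to invoke the result of Kozawa et al.~\cite{DBLP:journals/dm/KozawaOY10} mentioned in the excerpt: they show that the carving width of $\Rs[n,d]$ is $\Theta(n^{d-1})$. Combining this with the general inequality $\tw(G)/3 \le \mathrm{cw}(G) \le \Delta(G)\cdot \tw(G)$ from Thilikos et al.~\cite{DBLP:conf/isaac/ThilikosSB00}, and using $\Delta(\Rs[n,d]) \le 2d$, one obtains $\tw(\Rs[n,d]) \ge \mathrm{cw}(\Rs[n,d])/(2d) = \Omega(n^{d-1})$, as required. An alternative self-contained derivation would use the well-known consequence of the isoperimetric inequality for $[n]^d$: every vertex set $S \subseteq [n]^d$ of size $\alpha n^d$ with $1/3 \le \alpha \le 2/3$ has edge boundary of size $\Omega(n^{d-1})$, hence vertex boundary of size $\Omega(n^{d-1})$, so that every balanced separator has size $\Omega(n^{d-1})$; since treewidth is at least the size of the smallest balanced separator (minus one), the same bound follows.

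The main technical obstacle is the lower bound, which at bottom is an isoperimetric statement about the cube $[n]^d$. In the citation-based route this difficulty is entirely shouldered by \cite{DBLP:journals/dm/KozawaOY10}, so the proof becomes short; only the arithmetic of translating carving width into treewidth via the maximum-degree inequality needs to be spelled out. In the self-contained route, the obstacle is invoking the right vertex-isoperimetric inequality for the grid in a form that yields balanced vertex separators, which is standard but requires some care to state with the right constants independent of $n$.
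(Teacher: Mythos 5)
Your proof is correct and, for the essential lower bound, takes the same route the paper does: invoking Kozawa et al.\ for the carving width of $\Rs[n,d]$ and converting to treewidth via the Thilikos et al.\ inequality $\tw(G)/3 \le \mathrm{cw}(G) \le \Delta(G)\,\tw(G)$. The explicit path decomposition into consecutive hyperslabs is a nice self-contained upper bound, though the paper obtains the upper bound from the same carving-width citation (using the other side of the inequality); either way works.
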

\else
\begin{proposition}[Kozawa et
al.~\cite{DBLP:journals/dm/KozawaOY10}]\label{prop:gridtw}
For every fixed $d\ge 2$, the treewidth of $\Rs[n,d]$ is $\Theta(n^{d-1})=\Theta(|V(\Rs[n,d])|^{1-1/d})$.
\end{proposition}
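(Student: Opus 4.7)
The equality $\Theta(n^{d-1})=\Theta(|V(\Rs[n,d])|^{1-1/d})$ is immediate from $|V(\Rs[n,d])|=n^d$, so the task reduces to establishing $\tw(\Rs[n,d])=\Theta(n^{d-1})$ for fixed $d\ge 2$.

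For the upper bound, my plan is to exhibit an explicit path decomposition obtained by sweeping along the first coordinate. For each $i\in\{1,\ldots,n-1\}$, take the bag $B_i=\{\ba\in[n]^d : a_1\in\{i,i+1\}\}$, which contains $2n^{d-1}$ vertices. Every edge of $\Rs[n,d]$ has both endpoints in some $B_i$, because adjacent vertices differ by exactly one in a single coordinate, and for each fixed vertex $\ba$ the set of indices $i$ with $\ba\in B_i$ forms a contiguous interval of length at most $2$. Together these two facts certify a valid path decomposition of width $2n^{d-1}-1=O(n^{d-1})$.

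For the lower bound, I would invoke the carving-width calculation of Kozawa et al.\ referenced in the excerpt, which gives $\mathrm{cw}(\Rs[n,d])=\Theta(n^{d-1})$, together with the inequality $\mathrm{cw}(G)\le\Delta(G)\cdot\tw(G)$ quoted from Thilikos et al. Since $\Rs[n,d]$ has maximum degree $\Delta=2d$, a constant once $d$ is fixed, this yields $\tw(\Rs[n,d])\ge\mathrm{cw}(\Rs[n,d])/(2d)=\Omega(n^{d-1})$.

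The main obstacle is the carving-width lower bound in Kozawa et al., which at heart is an isoperimetric statement for the $d$-dimensional grid: any balanced edge separation of $\Rs[n,d]$ must cut $\Omega(n^{d-1})$ edges. A self-contained alternative would be to construct an explicit bramble of order $\Omega(n^{d-1})$ in $\Rs[n,d]$, or to appeal directly to the balanced-separator characterization of treewidth combined with the discrete isoperimetric inequality on $[n]^d$; however, making such arguments uniform in $d$ is technically delicate, so borrowing the off-the-shelf result of Kozawa et al.\ and paying only a factor of $\Delta=2d$ is the cleanest route.
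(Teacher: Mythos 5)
Your proposal is correct and essentially matches the paper's approach, with one small difference worth noting. The paper treats the whole proposition as a black-box citation: it observes that Kozawa et al.\ prove $\mathrm{cw}(\Rs[n,d])=\Theta(n^{d-1})$ and transfers \emph{both} bounds to treewidth via the sandwich $\tw(G)/3\le \mathrm{cw}(G)\le \Delta\cdot\tw(G)$, so the upper bound $\tw=O(n^{d-1})$ also passes through carving width. You instead replace the upper-bound half with an explicit path decomposition by sweeping the first coordinate with bags $B_i=\{\ba: a_1\in\{i,i+1\}\}$; that construction is correct (each bag has $2n^{d-1}$ vertices, each edge sits in some $B_i$ since adjacent vertices differ in one coordinate, and the bags containing a fixed vertex form a contiguous interval), and it is a genuinely more elementary and self-contained way to get $O(n^{d-1})$, avoiding the $\tw\le 3\,\mathrm{cw}$ direction of the comparison entirely. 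For the lower bound you do exactly what the paper does: cite the carving-width lower bound of Kozawa et al.\ and divide by $\Delta=2d$, and you are right that this is where the real isoperimetric content lies and where the off-the-shelf citation pays off. Net: same key ingredient on the hard side, a cleaner and more transparent argument on the easy side.
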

\fi

Theorem~\ref{th:freuder} and Proposition~\ref{prop:gridtw} together
imply that, for every fixed $d$, CSPs restricted to instances where
the primal graph is a $d$-dimensional grid $\Rs[n,d]$ can be solved in time
$|I|^{O(n^{d-1})}=|I|^{O(|V|^{1-1/d})}$.

A result of Marx~\cite{DBLP:journals/toc/Marx10} provides a converse
of Theorem~\ref{th:freuder} showing, in a precise technical sense,
that it is indeed the treewidth of the primal graph that determines
the complexity.  This very general result can be used to provide an
almost matching lower bound on the complexity of solving CSPs on a
$d$-dimensional grid. For a class $\G$ of graphs, let us denote by
$\csp(\G)$ the binary CSP problem restricted to instances whose primal
graph is in $\G$.
\begin{theorem}[Marx~\cite{DBLP:journals/toc/Marx10}]\label{th:binarymain}
  If there is a recursively enumerable class $\G$ of graphs with unbounded treewidth, an
  algorithm $\mathbb{A}$, and a function $f$ such that $\mathbb{A}$
  correctly decides every binary $\csp(\G)$ instance in time
  $f(|V|)|I|^{o({\tw}(G)/\log {\tw}(G))}$, then ETH fails.
\end{theorem}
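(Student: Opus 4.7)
The plan is a reduction from $k$-\textsc{Clique}, using the Chen et al.\ result already cited in the introduction that, under ETH, $k$-\textsc{Clique} on an $N$-vertex graph cannot be solved in time $f(k)\cdot N^{o(k)}$. Given an input graph $H'$ on $N$ vertices and a parameter $k$, I would encode the search for a $k$-clique in $H'$ as a binary CSP whose primal graph lies in $\mathcal{G}$, with domain $D=V(H')$. To pick the host graph $G\in\mathcal{G}$, I would use that $\mathcal{G}$ is recursively enumerable with unbounded treewidth: enumerate $\mathcal{G}$ until a graph of treewidth $w=\Theta(k\log k)$ is found. Then a hypothetical $f(|V|)\cdot|I|^{o(\tw(G)/\log\tw(G))}=f(|V|)\cdot|I|^{o(k)}$-time CSP algorithm, combined with the reduction, would yield an $f'(k)\cdot N^{o(k)}$-time algorithm for $k$-\textsc{Clique}, contradicting ETH.

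The technical heart is an \emph{embedding lemma}: any graph $H$ on $k$ vertices embeds into any graph $G$ with $\tw(G)=\Omega(k\log k)$ in the sense that each $v\in V(H)$ is sent to a nonempty connected vertex-set $\phi(v)\subseteq V(G)$, adjacent vertices of $H$ have images that share a vertex or are joined by an edge of $G$, and each vertex of $G$ belongs to only $O(1)$ images. The proof of such a lemma goes by first extracting from $G$ a highly connected substructure (a large bramble, a wall or grid minor, or a ``concurrent-flow'' gadget of size $\Omega(w)$) via the structural theory of treewidth, and then embedding $H$ into this substructure by a randomized routing / probabilistic argument that bounds congestion. The $\log k$ overhead is genuinely necessary -- random cubic expanders on $k$ vertices are the extremal obstruction -- and this is precisely what produces the $\log\tw(G)$ denominator in the statement.

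Given such an embedding, the CSP is built as follows. Take $H=K_k$ and fix an embedding $\phi$ of $H$ into $G$. Place a variable of domain $V(H')$ at every vertex of $G$. For each ``position'' $v\in V(H)$, add equality constraints along a spanning tree of $G[\phi(v)]$, forcing all variables of $\phi(v)$ to collapse to a single chosen vertex $x_v\in V(H')$. For each pair $u,v\in V(H)$, use the edge of $G$ between $\phi(u)$ and $\phi(v)$ (or an overlap vertex handled by a unary-style constraint) to install a binary constraint enforcing that $x_u$ and $x_v$ are distinct and adjacent in $H'$. Pad any remaining edges of $G$ with trivial full-domain constraints so the primal graph is exactly $G$. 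The resulting instance has size polynomial in $N\cdot|V(G)|$, and its satisfying assignments correspond bijectively to $k$-cliques of $H'$. The final bookkeeping turns $w=\Theta(k\log k)$ into the required contradiction.

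The main obstacle, by a wide margin, is the embedding lemma itself: this is where the real combinatorics happens, requiring a delicate interplay between extremal structural results (treewidth versus brambles/grids/concurrent flows) and probabilistic embedding arguments to control both depth and congestion. Once the lemma is in hand, choosing the right $w=\Theta(k\log k)$ via the enumeration of $\mathcal{G}$, carrying out the CSP construction, and tracking $f(|V(G)|)$ as a function of $k$ alone are comparatively routine -- but the $\log$ factor inherited from the embedding is exactly what Section~\ref{sec:constr-satisf-probl} of the paper will need to remove in the special case of $d$-dimensional grids to reach the tight $|D|^{o(k^{1-1/d})}$ bound.
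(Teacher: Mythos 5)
Your reduction from $k$-\textsc{Clique} does not work for this theorem, and the gap is in the embedding lemma that you identify yourself as the technical heart. You claim that any graph $H$ on $k$ vertices embeds into any graph $G$ with $\tw(G)=\Omega(k\log k)$ so that every vertex of $G$ is used $O(1)$ times, and you then instantiate this with $H=K_k$. For $K_k$ that claim is false. Consider a $3$-regular expander $G$ on $n$ vertices, which has $\tw(G)=\Theta(n)$. If $\phi$ is an embedding of $K_k$ into $G$ with depth $r$, each set $\phi(v)$ must touch the $k-1$ other images, and because $G$ has degree $3$ and each vertex lies in at most $r$ images, a set of size $|\phi(v)|$ can touch at most $(3+r)|\phi(v)|$ other images; hence $|\phi(v)|\ge (k-1)/(3+r)$ and $\sum_v|\phi(v)|\ge k(k-1)/(3+r)$. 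Since $\sum_v|\phi(v)|\le rn$ this forces $r(3+r)n\ge k(k-1)$, so with $n=\Theta(k\log k)$ one needs $r=\Omega(k/\log k)$, not $O(1)$. The constant-depth statement is correct only for \emph{sparse} patterns $H$ with $O(k)$ edges; $K_k$ has $\Theta(k^2)$ edges and pays a factor $\Theta(k)$ in depth.

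Even if you correct the depth to $r=\Theta(k^2\log w/w)$ (which is the true bound from Marx's embedding result for an $m$-edge pattern, $r=O(m\log w/w)$, with $m=\binom{k}{2}$), the arithmetic no longer closes. The produced CSP has domain $|D|^r=N^{\Theta(r)}$, so $|I|\ge N^{\Theta(r)}$, and an $|I|^{o(w/\log w)}$ algorithm runs in time $N^{\Theta(r)\cdot o(w/\log w)}=N^{o(k^2)}$, which does not contradict the Chen et al.\ bound $N^{o(k)}$ for $k$-\textsc{Clique}. No choice of $w$ fixes this: the number of edges of the pattern $K_k$ is quadratic in the parameter $k$, while the Chen et al.\ hardness is only polynomial in $N$ with exponent linear in $k$, so the reduction always overshoots by a factor of about $k$. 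This is exactly why the actual proof of Theorem~\ref{th:binarymain} (and the outline in Section~\ref{sec:constr-satisf-probl}, see Propositions~\ref{prop:embedding-reduce}--\ref{prop:csp-eth}) does \emph{not} go through $k$-\textsc{Clique}: it reduces from $m$-clause 3SAT via the Sparsification Lemma, turning it (Proposition~\ref{prop:3sat-to-csp}) into a CSP whose constraint graph $F$ has only $O(m)$ edges, maximum degree $3$, and domain size $3$. Embedding that sparse $F$ into $G$ with treewidth $w$ gives depth $O(m\log w/w)$, hence domain size $3^{O(m\log w/w)}$ and $|I|^{o(w/\log w)}=3^{o(m)}\cdot\mathrm{poly}(|V(G)|)^{o(w/\log w)}$, which contradicts ETH. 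The essential point you are missing is that ETH-based hardness after sparsification is \emph{exponential} in the number of constraints, which exactly matches the linear-in-$m$ cost of the embedding; the polynomial $N^{\Omega(k)}$ hardness of $k$-\textsc{Clique} cannot absorb the quadratic edge count of $K_k$.
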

Theorem~\ref{th:binarymain} and Proposition~\ref{prop:gridtw} together imply the following lower bound:
\ifabstract
\begin{corollary}\label{cor:grid}
For every fixed $d\ge 2$, there is no\\ $f(|V|)|I|^{o(|V|^{1-1/d}/\log|V|)}$ algorithm for $\csp(\R_d)$ for any function $f$, unless ETH fails.
\end{corollary}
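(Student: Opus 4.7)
The plan is simply to apply Theorem~\ref{th:binarymain} with the class $\G := \R_d$ and translate the conclusion via Proposition~\ref{prop:gridtw}. First I would verify that $\R_d$ satisfies the hypothesis of Theorem~\ref{th:binarymain}: the sequence $\Rs[1,d],\Rs[2,d],\Rs[3,d],\dots$ is trivially recursively enumerable (each $\Rs[n,d]$ has a uniform description from $n$), and Proposition~\ref{prop:gridtw} gives $\tw(\Rs[n,d])=\Theta(n^{d-1})$, which is unbounded because $d\ge 2$.

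Next I would translate the exponent $\tw(G)/\log\tw(G)$ appearing in Theorem~\ref{th:binarymain} into the exponent $|V|^{1-1/d}/\log|V|$ appearing in the corollary. For $G=\Rs[n,d]\in\R_d$ we have $|V(G)|=n^d$, and Proposition~\ref{prop:gridtw} gives $\tw(G)=\Theta(|V|^{1-1/d})$. Since $d\ge 2$ is a fixed constant, $\log\tw(G) = (1-1/d)\log|V| + O(1) = \Theta(\log|V|)$, so
\[
\frac{\tw(G)}{\log\tw(G)} \;=\; \Theta\!\left(\frac{|V|^{1-1/d}}{\log|V|}\right).
\]
Consequently any hypothetical $f(|V|)\,|I|^{o(|V|^{1-1/d}/\log|V|)}$ algorithm for $\csp(\R_d)$ would, when restricted to instances whose primal graph is $G=\Rs[n,d]$, run in time $f(|V|)\,|I|^{o(\tw(G)/\log\tw(G))}$, meeting the premise of Theorem~\ref{th:binarymain} and thereby contradicting ETH.

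There is no substantive obstacle: the corollary is essentially a repackaging of Theorem~\ref{th:binarymain} specialized to a class of graphs whose treewidth is pinned down by Proposition~\ref{prop:gridtw}. The only points requiring a moment of care are the recursive enumerability of $\R_d$ and using that $1-1/d$ is a positive constant (guaranteed by $d\ge 2$), which is what makes $\R_d$ have unbounded treewidth and what forces $\log\tw(G)=\Theta(\log|V|)$ so that the two formulations of the lower bound coincide up to the hidden $o(\cdot)$.
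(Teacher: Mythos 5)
Your proposal is correct and follows exactly the route the paper itself takes: the paper simply states that Theorem~\ref{th:binarymain} and Proposition~\ref{prop:gridtw} ``together imply'' the corollary, and your argument fills in precisely the intended details (recursive enumerability and unbounded treewidth of $\R_d$, plus the $\Theta$-translation of $\tw(G)/\log\tw(G)$ into $|V|^{1-1/d}/\log|V|$). Nothing is missing and nothing diverges from the paper's approach.
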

\else
\begin{corollary}\label{cor:grid}
For every fixed $d\ge 2$, there is no $f(|V|)|I|^{o(|V|^{1-1/d}/\log|V|)}$ algorithm for $\csp(\R_d)$ for any function $f$, unless ETH fails.
\end{corollary}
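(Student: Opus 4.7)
The plan is to derive the corollary as a direct instantiation of Theorem~\ref{th:binarymain} with the class $\G=\R_d$, using Proposition~\ref{prop:gridtw} to translate treewidth bounds into bounds in terms of $|V|$. First I would verify that $\R_d$ satisfies the hypotheses of Theorem~\ref{th:binarymain}: the class is trivially recursively enumerable, since one can list $\Rs[1,d],\Rs[2,d],\ldots$ by increasing $n$, and by Proposition~\ref{prop:gridtw} the treewidth of $\Rs[n,d]$ is $\Theta(n^{d-1})$, which tends to infinity with $n$ for any fixed $d\ge 2$, giving unbounded treewidth.

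Next I would carry out the arithmetic that converts the treewidth-based exponent of Theorem~\ref{th:binarymain} into the $|V|$-based exponent of the corollary. For $G=\Rs[n,d]\in\R_d$ one has $|V|=n^d$ and, by Proposition~\ref{prop:gridtw}, $\tw(G)=\Theta(n^{d-1})=\Theta(|V|^{1-1/d})$. Hence $\log\tw(G)=(1-1/d)\log|V|+O(1)=\Theta(\log|V|)$ and
\[
\frac{\tw(G)}{\log\tw(G)}=\Theta\!\left(\frac{|V|^{1-1/d}}{\log|V|}\right).
\]
Consequently, any algorithm running in time $f(|V|)\,|I|^{o(|V|^{1-1/d}/\log|V|)}$ on $\csp(\R_d)$ instances would also run in time $f(|V|)\,|I|^{o(\tw(G)/\log\tw(G))}$, which by Theorem~\ref{th:binarymain} contradicts ETH. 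This finishes the proof.

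There is essentially no obstacle; the only thing one must be slightly careful about is that the factor $1-1/d$ is treated as a constant (since $d$ is fixed), so that it can be absorbed into the $\Theta(\cdot)$ and into the little-$o$ bound without changing its meaning. Everything else is a direct substitution.
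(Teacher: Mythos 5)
Your proof is correct and is exactly the argument the paper has in mind: the paper states Corollary~\ref{cor:grid} as an immediate consequence of Theorem~\ref{th:binarymain} and Proposition~\ref{prop:gridtw} without spelling out the substitution, and your write-up supplies precisely that missing arithmetic (recursive enumerability of $\R_d$, unbounded treewidth, and $\tw(\Rs[n,d])/\log\tw(\Rs[n,d]) = \Theta(|V|^{1-1/d}/\log|V|)$ for fixed $d$). No issues.
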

\fi
We extend Corollary~\ref{cor:grid} in two ways. First, by an analysis specific to the class $\R_d$ of graphs at hand and avoiding the general tools used in the proof of Theorem~\ref{th:binarymain}, we can get rid of the logarithmic factor in the exponent, making the result tight.
\ifabstract
\begin{theorem}\appstar\label{th:gridnolog}
For every fixed $d\ge 2$, there is no\\ $f(|V|)|I|^{o(|V|^{1-1/d})}$ algorithm for $\csp(\R_d)$ for any function $f$, unless ETH fails.
\end{theorem}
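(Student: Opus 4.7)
My plan is to obtain Theorem~\ref{th:gridnolog} by a direct reduction from $k$-\textsc{Clique} that bypasses the generic treewidth-based embedding underlying Theorem~\ref{th:binarymain}, which is the source of the $\log |V|$ slack. The starting point is the theorem of Chen, Huang, Kanj and Xia~\cite{MR2121603}: $k$-\textsc{Clique} admits no $f(k) n^{o(k)}$-time algorithm under ETH. Viewed as a CSP with $k$ variables, domain $V(G)$, and primal graph $K_k$, this reads: $\csp(\{K_k\}_k)$ admits no $f(k)|I|^{o(k)}$-time algorithm under ETH. It therefore suffices to reduce an instance of $\csp(\{K_k\}_k)$ on $k$ variables with domain $D$ to an equivalent instance of $\csp(\R_d)$ on $N = \Theta(k^{d/(d-1)})$ variables with the same domain $D$ and polynomial encoding. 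Since $N^{1-1/d} = \Theta(k)$, a hypothetical $f(|V|)|I|^{o(|V|^{1-1/d})}$-time algorithm for $\csp(\R_d)$ would then give an $f'(k)|D|^{o(k)}$ algorithm for $k$-\textsc{Clique}, contradicting ETH.

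The combinatorial engine of the reduction is a minor-embedding of $K_k$ into $\Rs[n,d]$ with $n = O(k^{1/(d-1)})$ and branch sets whose size depends only on $d$. Following the sketch in the introduction, I assemble the embedding in two stages. First, the embedding theorem of Alon and Marx~\cite{DBLP:journals/siamdm/AlonM11} realises any $k$-vertex graph, and in particular $K_k$, as a minor of the $(d-1)$-dimensional Hamming graph $H(m, d-1)$ for $m = O(k^{1/(d-1)})$. Second, $H(m, d-1)$ is realised as a minor of $\Rs[O(m), d]$ by an elementary geometric construction: each vertex $x = (x_1,\dots,x_{d-1}) \in [m]^{d-1}$ is associated with the axis-parallel segment $L_x = \{(x_1,\dots,x_{d-1},j) : j \in [m]\}$ along the fresh $d$-th coordinate, and the ``long-jump'' Hamming edges (those between vertices whose differing coordinate changes by more than one) are implemented by short orthogonal detours through the new $d$-th coordinate, chosen so as to keep the branch sets pairwise disjoint. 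Composing the two embeddings produces the desired family $\{B_v\}_{v \in V(K_k)}$ of branch sets in $\Rs[n, d]$, with branch sets of size $O(k^{1/(d-1)})$ and $N = n^d = O(k^{d/(d-1)})$ grid vertices in total.

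Turning this embedding into a CSP-to-CSP reduction is entirely local. For each grid vertex introduce a variable with domain $D$; place an equality constraint on every grid edge lying inside some $B_v$, so that the variables of each branch set are forced to take a common value; for each grid edge joining distinct branch sets $B_v, B_{v'}$ install the binary constraint of the original $\csp(\{K_k\}_k)$ instance between $v$ and $v'$ (parallel copies are harmless thanks to the within-branch equalities); and put the trivial relation $D^2$ on any remaining grid edges. The resulting $\csp(\R_d)$ instance has $N$ variables, domain $|D|$, input size polynomial in $k$ and $|D|$, and its solutions project bijectively onto the solutions of the original instance, finishing the chain of reductions. The principal technical obstacle is the second stage of the embedding: the naive line representation $x \mapsto L_x$ realises only those Hamming edges whose differing coordinate changes by exactly one, whereas the Hamming graph carries a complete $K_m$ on each axis-parallel line; absorbing these long-jump edges by short orthogonal detours while keeping the grid side $O(m)$ and all branch sets disjoint is the one real combinatorial ingredient beyond a direct invocation of Alon and Marx. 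Once this lemma is in place, the plan above yields the tight $f(|V|)|I|^{o(|V|^{1-1/d})}$ lower bound for $\csp(\R_d)$, matching the upper bound provided by Theorem~\ref{th:freuder} together with Proposition~\ref{prop:gridtw}.
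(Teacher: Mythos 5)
The plan breaks at the first embedding. For $d\ge 3$, the complete graph $K_k$ is not a minor of $\Rs[n,d]$ with $n=O(k^{1/(d-1)})$, and it is not realisable there with branch sets of size $O_d(1)$ or even $O(k^{1/(d-1)})$. Indeed, $\Rs[n,d]$ has maximum degree $2d$, so a branch set on $s$ grid vertices has at most $2ds$ edges leaving it and can therefore be adjacent to at most $2ds$ other branch sets; to meet the degree $k-1$ demanded by every vertex of $K_k$ one must take $s\ge (k-1)/(2d)$, forcing the total number of grid vertices used to be $\Omega(k^2/d)$ rather than the claimed $\Theta(k^{d/(d-1)})$ (and $k^{d/(d-1)}<k^2/d$ for $d\ge 3$ and large $k$). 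The same obstruction kills the first half of your chain: $\Hs[m,d-1]$ with $m=O(k^{1/(d-1)})$ has only $\Theta(dm^d)=\Theta(dk^{d/(d-1)})=o(k^2)$ edges, not enough to provide one edge per pair of disjoint branch sets for $K_k$. You also misread Theorem~\ref{th:embedhamming}: Alon and Marx produce a depth-$O(dM/n^d)$ embedding, not a minor, for a graph with $M$ edges, and for $K_k$, where $M=\Theta(k^2)$, this depth is $\Theta(dk)$ once $n^{d-1}=\Theta(k)$, which by Proposition~\ref{prop:embedding-reduce} blows up the domain to $|D|^{\Theta(dk)}$ and destroys the intended exponent. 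Once $\Rs[n,d]$ must have $\Omega(k^2/d)$ vertices, $|V|^{1-1/d}=\Omega(k^{2-2/d})\gg k$, so a hypothetical $f(|V|)|I|^{o(|V|^{1-1/d})}$ algorithm only yields an $|D|^{o(k^{2-2/d})}$-time $k$-\textsc{Clique} algorithm, which is not excluded by Chen et al. Only for $d=2$, where $\Hs[m,1]=K_m$ and $K_k$ genuinely is a minor of the $O(k)\times O(k)$ grid with branch sets of size $\Theta(k)$, does the $k$-\textsc{Clique} route work; that is exactly Marx's reduction \cite{marx-ptaslower} mentioned in the introduction, and it does not generalize.

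The paper avoids the density obstruction by \emph{not} starting from $k$-\textsc{Clique}. It starts from 3SAT after sparsification (Propositions~\ref{prop:3sat-to-csp} and \ref{prop:csp-eth}), yielding a CSP whose primal graph has bounded degree $3$ and $\Theta(m)$ edges on $\Theta(m)$ variables; such sparse graphs do embed via Theorem~\ref{th:embedhamming} into $\Hs[n,d-1]$ and then via Theorem~\ref{th:hammingtogrid} into $\Rs[n,d]$, and this is precisely Theorem~\ref{th:hamminglowerbound} composed with Theorem~\ref{th:hammingtogrid}. Crucially, the proof exploits the $|I|$ in the exponent: $n$ is chosen to grow slowly with $m$, so $|V|=n^d$ is small while the embedding depth, and hence the domain size and $|I|$, grow linearly with $m$; the bound $|I|^{o(|V|^{1-1/d})}$ then unwinds to $2^{o(m)}$ and contradicts ETH. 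Your second stage (a line $L_x$ in the fresh coordinate plus short detours for the ``long-jump'' Hamming edges, with pairwise disjoint branch sets) is aimed at the same target as Theorem~\ref{th:hammingtogrid}, but the paper's construction uses $d+1$ overlapping axis-parallel lines per Hamming vertex, yielding a depth-$(d+1)$ embedding rather than a minor; insisting on disjoint branch sets is neither necessary nor what the proof achieves, and it is unclear how the ``detour'' version could stay within side length $O(m)$.
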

\else
\begin{theorem}\label{th:gridnolog}
For every fixed $d\ge 2$, there is no $f(|V|)|I|^{o(|V|^{1-1/d})}$ algorithm for $\csp(\R_d)$ for any function $f$, unless ETH fails.
\end{theorem}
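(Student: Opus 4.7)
I would approach Theorem~\ref{th:gridnolog} by replacing the generic graph embedding used in the proof of Theorem~\ref{th:binarymain} with one tailored to $d$-dimensional grids: the $O(\log \tw)$ congestion overhead incurred by the generic embedding is exactly what produces the spurious $1/\log|V|$ factor in the exponent of Corollary~\ref{cor:grid}, and the goal is to eliminate this overhead by routing through an intermediate $(d-1)$-dimensional Hamming graph, as hinted in the introduction.

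The reduction I have in mind starts not from 3-SAT but from $k$-\textsc{Clique}: by Chen et al.~\cite{MR2121603}, under ETH there is no $f(k)\,n^{o(k)}$ algorithm for $k$-\textsc{Clique} on $n$-vertex graphs. Using a partitioned/multicolored formulation, $k$-\textsc{Clique} can be written as a binary CSP on a bounded-degree constraint graph $H$ with $|V(H)|=\Theta(k)$ and treewidth $\Theta(k)$, with domain size $O(n)$. The plan is to embed $H$ into $\Rs[N,d]$ with $N = \Theta(k^{1/(d-1)})$ and constant vertex congestion; this produces an equivalent $\csp(\R_d)$ instance with $|V| = \Theta(k^{d/(d-1)})$, $|D| = n^{O(1)}$, and $|I| = \mathrm{poly}(n,k)$. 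A hypothetical algorithm running in time $f(|V|)\,|I|^{o(|V|^{1-1/d})}$ would then decide $k$-\textsc{Clique} in time $g(k)\,n^{o(k)}$, contradicting the Chen et al.\ bound, since $|V|^{1-1/d} = \Theta(k)$ and $\log|I| = O(\log n + \log k)$.

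The technical core is the embedding of $H$ into $\Rs[N,d]$, which I would obtain by composing two steps. First, the Alon--Marx result of \cite{DBLP:journals/siamdm/AlonM11} embeds $H$ into the $(d-1)$-dimensional Hamming graph $\Hs[q,d-1]$ with $q = \Theta(N)$, mapping every vertex of $H$ to a constant-size connected subset of Hamming vertices such that every edge of $H$ corresponds to a Hamming edge between the respective images and every Hamming vertex is used by $O(1)$ images. Second, one embeds $\Hs[q,d-1]$ into $\Rs[q,d]$ by placing Hamming vertex $(a_1,\dots,a_{d-1})$ at the grid position $(a_1,\dots,a_{d-1},1)$ and routing each Hamming edge --- which changes a single coordinate by an arbitrary amount --- along a three-segment polyline that climbs into the $d$-th coordinate to a chosen routing level, traverses horizontally in the relevant direction, and descends. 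With an appropriate combinatorial assignment of routing levels to pairs of Hamming values $(a_i,a_i')$, the number of routes through any grid vertex is bounded by a constant; the resulting embedding of $\Hs[q,d-1]$ into $\Rs[q,d]$ then composes with the Alon--Marx embedding to give the desired $H \hookrightarrow \Rs[N,d]$.

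The main obstacle is the second embedding step: turning the natural ``use the extra dimension as a railway'' template into a genuinely constant-congestion scheme, rather than the $O(\log k)$-congestion one would get from a naive level assignment. This amounts to an explicit multicommodity-routing argument inside the grid, and is precisely the place where specificity to $\R_d$ replaces the generic logarithmic-congestion construction of \cite{DBLP:journals/toc/Marx10} and saves the missing $\log |V|$ factor in the exponent.
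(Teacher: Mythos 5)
The high-level strategy you propose --- route through an intermediate Hamming graph, invoke the Alon--Marx embedding, then embed Hamming into the grid --- is exactly the structure of the paper's proof, so you have identified the right intermediate object. But two steps are wrong or unresolved in ways that matter.

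\textbf{Starting from $k$-Clique does not give the sharp bound.}
You start from Chen et al.'s $f(k)n^{o(k)}$ lower bound for $k$-\textsc{Clique} and claim it can be written as a binary CSP on a bounded-degree constraint graph $H$ with $|V(H)|=\Theta(k)$ edges and treewidth $\Theta(k)$. This is not possible. The natural constraint graph of $k$-\textsc{Clique} is $K_k$, which has $\binom{k}{2}=\Theta(k^2)$ edges. Any bounded-degree graph $H$ on $\Theta(k)$ vertices has only $\Theta(k)$ edges, and its $r$-fold blowup $H^{(r)}$ has $O(r^2 k)$ edges; for $K_k$ to be a minor of $H^{(r)}$ you would need $r=\Omega(\sqrt{k})$, so no constant-depth embedding of $K_k$ into such an $H$ exists. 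Concretely, applying Theorem~\ref{th:embedhamming} to $F=K_k$ into $\Hs[q,d-1]$ gives depth $O(dk^2/q^{d-1})$, so constant depth forces $q^{d-1}=\Theta(k^2)$, and the eventual exponent becomes $|V|^{1-1/d}=q^{d-1}=\Theta(k^2)$, which would only produce an $n^{o(k^2)}$ algorithm for $k$-\textsc{Clique} --- not a contradiction with $n^{o(k)}$. Going through $k$-\textsc{Clique} and brambles/expander decompositions is precisely the route taken by the generic Theorem~\ref{th:binarymain}, and it is what loses the $\log$ factor. The paper instead uses Theorem~\ref{th:hamminglowerbound}, whose proof goes from 3SAT via the Sparsification Lemma and Proposition~\ref{prop:3sat-to-csp}: that gives a CSP on a bounded-degree graph with $\Theta(m)$ constraints, and Theorem~\ref{th:embedhamming} then gives depth $O(dm/n^d)$, constant when $m=\Theta(n^d)$. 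This choice of source problem is the whole reason the $\log$ factor disappears.

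\textbf{The Hamming-to-grid step is unresolved and, as described, would not work.}
You place each Hamming vertex at a single grid point and route each Hamming edge as a three-segment polyline with a chosen routing level, conceding that the constant-congestion level assignment is ``the main obstacle.'' As described, it is worse than an obstacle: in $\Hs[q,d-1]$ every vertex has degree $\Theta(q)$, so $\Theta(q)$ routes would need to emanate from each placement point, and the bottom vertex of that column would have congestion $\Omega(q)$. The paper does something simpler and does not route at all. In Theorem~\ref{th:hammingtogrid}, each Hamming vertex $\ba=(a_1,\dots,a_d)\in [n]^d$ is mapped to the union of $d+1$ axis-parallel lines in $\Rs[n,d+1]$:
\begin{align*}
\phi_0(\ba)&=\{(a_1,\dots,a_d,x)\mid x\in[n]\},\\
\phi_i(\ba)&=\{(a_1,\dots,a_{i-1},x,a_{i+1},\dots,a_d,a_i)\mid x\in[n]\}\quad\text{for }1\le i\le d.
\end{align*}
Each $\phi(\ba)$ is connected; if $\ba$ and $\ba'$ differ only in coordinate $i$, then $(a_1,\dots,a_d,a_i')\in\phi_0(\ba)\cap\phi_i(\ba')$, so the images intersect; and a straightforward count shows each grid vertex lies in $\phi(\ba)$ for exactly $d+1$ values of $\ba$, giving depth $d+1$. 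No multicommodity-routing argument is needed. Composing Theorem~\ref{th:hamminglowerbound} for $\Hs[n,d-1]$ with this embedding and Proposition~\ref{prop:embedding-reduce}, using $|V(\Rs[n,d])|^{1-1/d}=n^{d-1}=|V(\Hs[n,d-1])|$, completes the paper's proof.
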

\fi
 Second, observe that Theorem~\ref{th:binarymain} does not give any lower bound for the restriction of the problem to instances with domain size bounded by a fixed constant $\dommax$. In fact, no such strong negative result as Theorem~\ref{th:binarymain} can hold for instances with domain size restricted to $\dommax$: as the size of the instance is bounded by a function of $|V|$ and $\dommax$, it can be solved in time $f(|V|)$ for some function $f$ (assuming $\dommax$ is a constant).  Therefore, we prove a bound of the following form:
\begin{theorem}\appstar\label{th:gridbounded-domain}
For every fixed $d\ge 2$, there is a constant $\dommax_d$ such that there is no $2^{O(|V|^{1-1/d-\epsilon})}$ algorithm for $\csp(\R_d)$ with domain size at most $\dommax_d$ for any $\epsilon>0$, unless ETH fails.
\end{theorem}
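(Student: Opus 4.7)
The plan is to reduce from $3$SAT. By the Sparsification Lemma and ETH, it suffices to show that any $3$SAT instance on $n$ variables and $O(n)$ clauses can be turned into a $\csp(\R_d)$ instance with $N=O(n^{d/(d-1)})$ variables and domain size bounded by some absolute constant $\dommax_d$ depending only on $d$: a hypothetical $2^{O(N^{1-1/d-\eps})}$ algorithm would then solve the $3$SAT instance in time $2^{O(n^{(d/(d-1))(1-1/d-\eps)})}=2^{O(n^{1-d\eps/(d-1)})}=2^{o(n)}$. As a convenient intermediate, I would first transform the $3$SAT instance into a binary CSP on a graph $G$ with $|V(G)|=O(n)$, maximum degree $O(1)$, and a constant domain $D_0$ (for instance, $3$-colouring on a bounded-degree graph).

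\textbf{Embedding step.} The core of the reduction is an embedding of $G$ into $\Rs[m',d]$ with $m'=O(n^{1/(d-1)})$, injective on vertices and of constant edge-congestion. Following the route hinted at in the introduction, I would first apply the Alon--Marx embedding of bounded-degree graphs into a $d$-dimensional Hamming graph $\Hs[m,d]$ (whose treewidth $\Theta(m^{d-1})$ matches that of $\Rs[m,d]$, so $m=\Theta(n^{1/(d-1)})$ is achievable with every Hamming vertex carrying only $c_1=c_1(d)$ image paths), and then convert $\Hs[m,d]$ into $\Rs[m',d]$ with $m'=O(m)$ by replacing each Hamming edge---a single-coordinate jump inside a clique of size $m$---with a short axis-parallel grid path, scheduled through the slack in the other $d-1$ coordinates so as to keep the total congestion at most some constant $c=c(d)$. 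The output is an injection $\psi\colon V(G)\hookrightarrow V(\Rs[m',d])$ together with a family of paths $\{P_{uv}\}_{uv\in E(G)}$ realising the edges, such that every grid vertex lies on at most $c$ of these paths. The grid has $N=(m')^d=O(n^{d/(d-1)})$ vertices.

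\textbf{CSP construction.} The $\csp(\R_d)$ instance is placed on $\Rs[m',d]$. At each grid vertex $p$ the domain records a tuple $(\alpha_p,(\beta_p^P)_{P\ni p})$, where $\alpha_p\in D_0\cup\{\perp\}$ is the value of the unique $G$-vertex mapped to $p$ (or $\perp$ if none is), and each $\beta_p^P\in D_0$ is a \emph{messenger} value transported along the path $P$ through $p$. Since $c$ and $|D_0|$ depend only on $d$, the domain size is bounded by $\dommax_d:=(|D_0|+1)|D_0|^c$. I then install binary constraints on each grid edge $pq$ enforcing $\beta_p^P=\beta_q^P$ for every path $P$ traversing both $p$ and $q$ (value propagation), and unary constraints (encoded as self-loops) at each grid vertex $p$ that is an endpoint of some oriented path $P_{uv}$, requiring $\beta_p^{P_{uv}}=\alpha_p$ when $p=\psi(u)$ is the source and requiring $(\beta_p^{P_{uv}},\alpha_p)$ to lie in the original $G$-relation on $uv$ when $p=\psi(v)$ is the target. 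A direct check shows that the induced map $u\mapsto\alpha_{\psi(u)}$ is a satisfying assignment of the $G$-CSP iff the grid CSP is satisfiable, which in turn happens iff the original $3$SAT formula is satisfiable.

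\textbf{Main obstacle.} The quantitative heart is the simultaneous demand that the embedding have side length $O(n^{1/(d-1)})$ \emph{and} congestion $O(1)$ independent of $n$; pairing these two properties is exactly what keeps the target domain bounded while still hitting the tight $1-1/d$ exponent. The generic treewidth-based machinery of Theorem~\ref{th:binarymain} inevitably loses the logarithmic factor of Corollary~\ref{cor:grid}, whereas the Alon--Marx embedding is dimension-by-dimension sharp and supplies both properties at once. The subtle remaining point is the Hamming-to-grid conversion: axis-parallel grid paths replacing clique-edges that share a common $1$-dimensional line would naively pile up to $\Omega(m)$ congestion at a single grid vertex, so one must enlarge $m$ by a constant factor and route the paths through the slack in the remaining $d-1$ coordinates---a scheduling argument I expect to be routine once the Alon--Marx step is in place.
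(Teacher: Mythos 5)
Your high-level plan---sparsify, pass to a bounded-degree constant-domain CSP, embed its primal graph into a $d$-dimensional grid with constant congestion, then package the congestion into a bounded-size domain via Proposition~\ref{prop:embedding-reduce}-style bookkeeping---is the paper's strategy, and your exponent arithmetic $(n^{d/(d-1)})^{1-1/d-\eps}=n^{1-d\eps/(d-1)}=o(n)$ is exactly the calculation that closes the proof. But there is a genuine gap in the embedding step that would sink the argument as written.

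You propose to first embed the bounded-degree graph $G$ into a \emph{$d$-dimensional} Hamming graph $\Hs[m,d]$ and then convert $\Hs[m,d]$ into $\Rs[m',d]$ with $m'=O(m)$ at constant congestion, justifying this with the claim that the treewidth of $\Hs[m,d]$ is $\Theta(m^{d-1})$ ``matching'' that of $\Rs[m,d]$. That claim is false: the treewidth of $\Hs[m,d]$ is $\Theta(m^d)$ (as quoted in the paper from~\cite{SunilChandran2006359}), whereas $\tw(\Rs[m',d])=\Theta((m')^{d-1})=\Theta(m^{d-1})$. Since a depth-$r$ embedding of $F$ into $H$ forces $\tw(F)+1\le r(\tw(H)+1)$, \emph{any} embedding of $\Hs[m,d]$ into $\Rs[O(m),d]$ has depth $\Omega(m)$; with your choice $m=\Theta(n^{1/(d-1)})$ that is polynomially large, not a constant $c(d)$. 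No amount of routing through ``slack in the other $d-1$ coordinates'' can fix this---it is blocked by the treewidth counting argument. The correct bookkeeping, which is what the paper does, is to embed $G$ into a Hamming graph of dimension $d-1$, namely $\Hs[m,d-1]$ with $m=\widetilde{O}(n^{1/(d-1)})$, whose treewidth $\Theta(m^{d-1})$ does match $\Rs[m,d]$; Theorem~\ref{th:hammingtogrid} then embeds $\Hs[m,d-1]$ into $\Rs[m,d]$ with depth $d$ (the dimension goes \emph{up} by one in this step, not stays fixed).

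A second, smaller issue: you invoke the Alon--Marx embedding as if it directly gave a deterministic polynomial-time constant-congestion map, but Theorem~\ref{th:embedhamming} is existential. The paper rebuilds this embedding constructively (Theorem~\ref{thm:embedding}) by routing fractionally through random intermediate Hamming vertices and then rounding via Raghavan--Thompson/Raghavan, which costs a $\log m/\log\log m$ factor in the depth; that factor would leak a $\log$ into your domain size and ruin ``$\dommax_d$ constant.'' The paper removes it with the stretching Lemma~\ref{lem:hamming_depth}, which trades the depth factor for an $O(\log m/\log\log m)$ blow-up in the side length $m$---harmless, because the $\eps$ in the exponent absorbs any polylogarithmic factor. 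Your proposal should either adopt this stretch-and-round mechanism or supply a different deterministic constant-congestion construction; as written, this step is not routine.
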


{\iffull 
We prove Theorems~\ref{th:gridnolog} and \ref{th:gridbounded-domain} in Sections~\ref{sec:proofnolog} and \ref{sec:proof-domain}, respectively.\fi}
For the proof of Theorem~\ref{th:gridnolog} we show how
$d$\-/dimensional Hamming graphs can be embedded into
$(d+1)$-dimensional grids and then invoke a lower bound on
$d$-dimensional Hamming graphs by Alon and Marx
\cite{DBLP:journals/siamdm/AlonM11}. To prove
Theorem~\ref{th:gridbounded-domain}, we need to tighten the lower
bound on $d$-dimensional Hamming graphs by revisiting the embedding
result of Alon and
Marx \cite{DBLP:journals/siamdm/AlonM11}. Interestingly, this can be
done in two ways: either by a construction similar to the one by Alon
and Marx~\cite{DBLP:journals/siamdm/AlonM11} together with the randomized rounding of Raghavan and Thompson~\cite{DBLP:journals/combinatorica/RaghavanT87} or by deducing it from
the connection between multi-commodity flows and graph Laplacians.
\ifabstract
Due to lack of space, the proofs of Theorems~\ref{th:gridnolog}
 and \ref{th:gridbounded-domain} are
 given in the full version of this paper.
\fi


\iffull
\subsection{Proof of Theorem~\ref{th:gridnolog}}
\label{sec:proofnolog}
The {\em $d$-dimensional Hamming graph} $\Hs[n,d]$ has vertex set
$[n]^d$, and vertices $(a_1,\dots, a_d)$ and $(b_1,\dots, b_d)$ are
adjacent if and only if there is exactly one $1\le i \le d$ with
$a_i\neq b_i$, that is, the Hamming distance of the two $d$-dimensional vectors is
1. We denote by $\H_d$ the set of graphs $\Hs[n,d]$ for every $n\ge 1$. For
fixed $d$, the treewidth of $\Hs[n,d]$ is $\Theta(n^d)$
\cite{SunilChandran2006359}. Therefore, the general result of
Theorem~\ref{th:binarymain} implies that there is no
$f(|V|)|I|^{o(n^d/\log n)}=f(|V|)|I|^{o(|V|/\log|V|)}$ algorithm for
$\csp(\H_d)$. A tight lower bound was given for this specific family
of graphs by Alon and Marx \cite{DBLP:journals/siamdm/AlonM11}:
\ifabstract
\begin{theorem}[\cite{DBLP:journals/siamdm/AlonM11}]\label{th:hamminglowerbound}
For every fixed $d\ge 2$, there is no\\ $f(|V|)|I|^{o(|V|)}$ algorithm for $\csp(\H_d)$ for any function $f$, unless ETH fails.
\end{theorem}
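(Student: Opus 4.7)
The plan is to reduce from $k$-Clique on $n$-vertex graphs, which by Chen et al.~\cite{MR2121603} admits no $f(k)n^{o(k)}$-time algorithm under ETH, to $\csp(\H_d)$. The target is a $\csp(\H_d)$-instance whose primal graph has $|V|=\Theta(k)$ variables and whose size $|I|$ is polynomial in $n$; this way an algorithm running in $f(|V|)|I|^{o(|V|)}$ time would solve $k$-Clique in $f(k)\cdot n^{o(k)}$ time, contradicting ETH. Setting $n_0=\lceil k^{1/d}\rceil$ makes $|V(\Hs[n_0,d])|=n_0^d=\Theta(k)$, so all the work lies in forcing the CSP to live on exactly $\Hs[n_0,d]$ while keeping the domain polynomial in $n$.

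The central ingredient is an embedding of the complete graph $K_k$ into $\Hs[n_0,d]$: place each of the $k$ clique-vertices at a distinct point of $[n_0]^d$, and realize each of the $\binom{k}{2}$ edges as a canonical coordinate-flipping path of length at most $d$. A randomized placement combined with the randomized-rounding analysis of Raghavan and Thompson~\cite{DBLP:journals/combinatorica/RaghavanT87} shows that with positive probability the congestion on every edge of $\Hs[n_0,d]$ is simultaneously bounded by a function of $k$ and $d$ only. Given such an embedding, I would build the target instance by placing a variable at each vertex of $\Hs[n_0,d]$, whose value records (i) the $[n]$-element chosen for any clique-slot hosted at that vertex and (ii) the routing information carried by each $K_k$-edge path passing through it. Binary constraints on Hamming-graph edges then enforce consistency of this routing information and, at path endpoints, the adjacency relation of the input graph $G$ required by the $k$-Clique instance, so that satisfying assignments correspond bijectively to $k$-cliques of $G$.

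The main obstacle is the tension between congestion and domain size: a crude encoding records an element of $[n]$ per path through each vertex, which inflates the domain to $n^{\Omega(\text{congestion})}$ and can destroy the $n^{o(k)}$ bound we are after. The fix is to implement each routed $K_k$-edge as a chain of binary gadgets transmitting only a bounded amount of state across any single Hamming edge, so that $|I|$ remains polynomial in $n$ uniformly in $k$. The most delicate step is checking that all these local consistency checks can genuinely be realized as binary relations on adjacent vertices of $\Hs[n_0,d]$, rather than on subdivided edges, so that the primal graph of the produced instance lies exactly in the class $\H_d$; this is where the restricted geometry of the Hamming graph has to be exploited carefully.
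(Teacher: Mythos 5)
Your plan is a genuinely different route from the paper's, and it has a fundamental gap that makes it unsalvageable in the form you describe. The paper proves Theorem~\ref{th:hamminglowerbound} by starting from 3SAT (via the Sparsification Lemma), converting to a bounded-degree CSP with \emph{constant} domain size $3$ (Proposition~\ref{prop:3sat-to-csp}), and then embedding that constraint graph into $\Hs[n_0,d]$ with depth $O(dm/n_0^d)$ (Theorem~\ref{th:embedhamming}); since the starting domain is constant, the post-embedding domain $3^{O(dm/n_0^d)}$ still yields $|I|^{o(n_0^d)}=2^{o(m)}$. You instead reduce from $k$-Clique, where the source CSP has domain size $n$, and you need the produced instance to have $|I|=\mathrm{poly}(n)$ \emph{uniformly in $k$}. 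This is impossible for $\csp(\H_d)$ with $|V|=\Theta(k)$.

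Concretely: $\Hs[n_0,d]$ with $n_0^d=\Theta(k)$ has $\Theta(d\,k^{1+1/d})$ edges, so routing the $\binom{k}{2}$ edges of $K_k$ (each path having length up to $d$) forces average edge congestion $\Omega(k^{1-1/d})$ and average vertex congestion $\Omega(k)$; and \emph{any} embedding $\phi$ of $K_k$ into $\Hs[n_0,d]$ has depth $\Omega\bigl(k^{(1-1/d)/2}/\sqrt{d}\bigr)$, because $\phi(u)$ must touch $k-1$ other sets and each Hamming vertex together with its $O(dn_0)$ neighbours lies in at most $r=\mathrm{depth}$ of them. Your ``bounded state per Hamming edge'' fix cannot circumvent this: to check the adjacency constraint on a $K_k$-edge $(u,v)$ routed along a path, the vertex identity in $[n]$ at one endpoint must be propagated to the other, so a Hamming vertex carrying $c$ routed paths must record $c$ independent elements of $[n]$, giving domain at least $n^{c}=n^{\Omega(k)}$ and $|I|\ge n^{\Omega(k)}$. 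Then $f(|V|)|I|^{o(|V|)}=f(\Theta(k))\cdot n^{\Omega(k)\cdot o(k)}=f'(k)\,n^{o(k^2)}$, which does not contradict the $f(k)n^{o(k)}$ lower bound for $k$-Clique (indeed $k$-Clique trivially \emph{has} an $n^{O(k)}$ algorithm). The exponent mismatch — $|I|$ growing with $k$ in the exponent of $n$ — is exactly what the Alon--Marx route avoids by keeping the source domain constant, and is the reason the paper does \emph{not} take the $k$-Clique route here.
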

\else
\begin{theorem}[\cite{DBLP:journals/siamdm/AlonM11}]\label{th:hamminglowerbound}
For every fixed $d\ge 2$, there is no $f(|V|)|I|^{o(|V|)}$ algorithm for $\csp(\H_d)$ for any function $f$, unless ETH fails.
\end{theorem}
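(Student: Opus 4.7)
The plan is to reduce from the $k$-clique problem on an $n$-vertex graph, which by Chen et al.~\cite{MR2121603} cannot be solved in $f(k)\,n^{o(k)}$ time under ETH. Phrasing $k$-clique as a binary CSP, I obtain an instance with $k$ variables over a domain of size $n$ whose primal graph is the complete graph $K_k$; each constraint between a pair of variables enforces that the two chosen vertices are distinct and adjacent in the input graph. To prove the theorem it suffices to turn this into a $\csp(\H_d)$ instance with $O(k)$ variables and input size $n^{O(1)}$, since then an $f(|V|)\,|I|^{o(|V|)}$ algorithm for $\csp(\H_d)$ would decide $k$-clique in $f(O(k))\,n^{o(k)}$ time.

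The technical heart of the argument is an Alon--Marx style embedding of $K_k$ into a $d$-dimensional Hamming graph. Concretely, I want $m = \Theta(k^{1/d})$ together with pairwise connected ``branch sets'' $\{B_v\}_{v \in V(K_k)}$ in $\Hs[m,d]$ such that: (i) for every edge $uv$ of $K_k$ there exist $x \in B_u$ and $y \in B_v$ with $\{x,y\} \in E(\Hs[m,d])$, and (ii) the vertex congestion $\max_{x} |\{v : x \in B_v\}|$ is bounded by some absolute constant depending only on $d$. Since $|V(\Hs[m,d])| = m^d = \Theta(k)$, routing the connectivity and adjacency requirements of all $\binom{k}{2}$ edges of $K_k$ through only $O(k)$ vertices while keeping congestion constant is the main obstacle, and is exactly what Alon and Marx establish via a probabilistic construction exploiting the independence of the $d$ coordinates of the Hamming graph.

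Given such an embedding, the translation to a Hamming-graph CSP is standard. For each $x \in V(\Hs[m,d])$ I introduce a variable $y_x$ whose domain is $[n]^{c(x)}$, where $c(x) = |\{v : x \in B_v\}|$; a value of $y_x$ records, for each $B_v \ni x$, a candidate image of the original variable $v$. On every edge $\{a,b\}$ of $\Hs[m,d]$ lying inside a single branch set $B_v$ I add a constraint forcing the $v$-coordinates of $y_a$ and $y_b$ to agree, and on every edge $\{x,y\}$ with $x \in B_u$, $y \in B_v$ and $uv \in E(K_k)$ I add a constraint enforcing the original edge-constraint on the corresponding coordinates. Because the branch sets are connected and every edge of $K_k$ has a witnessing Hamming edge, the resulting CSP is satisfiable iff the $k$-clique instance is; because $c(x) = O(1)$ the domain size is $n^{O(1)}$.

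The new instance therefore has $|V| = O(k)$ and $|I| = n^{O(1)}$, so a hypothetical $f(|V|)\,|I|^{o(|V|)}$ algorithm would yield $f(O(k))\,n^{o(k)}$ time for $k$-clique, contradicting ETH. The CSP encoding itself is local and routine; all of the difficulty, and essentially all of the dependence on the dimension $d$, is concentrated in the embedding step, which is why the Alon--Marx theorem is the right black box to invoke.
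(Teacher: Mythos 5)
There is a fundamental gap: the constant-depth embedding of $K_k$ into $\Hs[m,d]$ with $m^d=\Theta(k)$ that your argument hinges on does not exist, and it is not what Alon and Marx prove. A simple counting argument shows this. Suppose each vertex of $\Hs[m,d]$ lies in at most $r$ of the branch sets $B_v$. Every pair $\{u,v\}$ must be witnessed either by a vertex of $\Hs[m,d]$ lying in both $B_u$ and $B_v$, or by a Hamming edge between them. A single vertex $x$ witnesses at most $\binom{r}{2}$ pairs by intersection, and a Hamming edge $\{x,y\}$ witnesses at most $r^2$ pairs; since $\Hs[m,d]$ has $m^d$ vertices and $O(d\,m^{d+1})$ edges, the total number of pairs witnessed is $O(r^2\,d\,m^{d+1}) = O(r^2\,d\,k^{1+1/d})$ when $m^d=\Theta(k)$. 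This must be at least $\binom{k}{2}=\Theta(k^2)$, forcing $r=\Omega\bigl(k^{(d-1)/(2d)}/\sqrt{d}\bigr)$, which tends to infinity with $k$ for every $d\ge 2$. Concretely, the Alon--Marx embedding theorem (Theorem~\ref{th:embedhamming}) gives depth $O(d\,m_F/n^d)$ for an $m_F$-edge graph $F$; plugging in $F=K_k$, $m_F=\Theta(k^2)$, $n^d=\Theta(k)$ yields depth $\Theta(dk)$, not $O(1)$.

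This is not a cosmetic issue that can be patched by choosing a different $m$. With depth $r$, the resulting CSP has domain size $n^{\Theta(r)}$ and $|V|=n^d$ variables, so a hypothetical $f(|V|)\,|I|^{o(|V|)}$ algorithm runs in time roughly $f(|V|)\cdot n^{O(r)\cdot o(|V|)}=n^{o(r\cdot |V|)}$. By Alon--Marx, $r\cdot|V|=O(d\cdot m_F)=O(d\,k^2)$ no matter how $n$ is chosen, so the best you can hope for from $K_k$ is $n^{o(k^2)}$, which does not contradict the Chen et al.\ $n^{o(k)}$ bound. The dense source graph is the obstruction: since $r\cdot|V|$ is controlled by the \emph{number of edges} of the source graph, a reduction from $K_k$ can never reach $n^{o(k)}$.

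The paper's proof (following Alon and Marx) avoids this by starting from a \emph{sparse} source: the Sparsification Lemma together with Proposition~\ref{prop:3sat-to-csp} produces a binary CSP with $O(m)$ edges, max degree $3$, and domain size $3$, such that a $2^{o(m)}$ algorithm contradicts ETH. Because this primal graph has $\Theta(m)$ edges \emph{and} $\Theta(m)$ vertices, embedding it into $\Hs[n,d]$ with $n^d=\Theta(m)$ (or slightly smaller, to absorb polylogarithmic losses) has depth $O(d)$, so the transferred CSP on $\Hs[n,d]$ has domain size $3^{O(d)}=O(1)$. That is the feature your reduction would need and cannot have. Your high-level plan --- embed a hard source instance into the Hamming graph, pay in the domain for the depth, then invoke a parameterized lower bound --- is the right shape, but the source must be sparse; $k$-clique works for the $d=2$ grid precisely because the $k\times k$ grid gives depth $1$ with $|V|=k^2$ and target exponent $|V|^{1-1/2}=k$, whereas for Hamming graphs the target exponent is $|V|$ itself, which rules out dense sources.
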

\fi
As we will need the same line of arguments for the proof of Theorem~\ref{th:gridbounded-domain} in
\iffull Section~\else Appendix~\fi\ref{sec:proof-domain}, let us review how one can prove
results such as Theorem~\ref{th:hamminglowerbound}
\cite{DBLP:journals/siamdm/AlonM11,DBLP:journals/toc/Marx10}.  We need
the following notion of embedding to transfer hardness results of CSP
problems on different graphs.  An {\em embedding} of a graph $F$ into a graph $G$ is a
mapping $\phi:V(F)\to 2^{V(G)}$ satisfying the following two
properties:
\begin{enumerate}
\item  For every vertex $u\in V(F)$, the set $\phi(u)$ induces a connected subgraph of $G$, and
\item  For every edge $uv\in E(F)$, the sets $\phi(u)$ and $\phi(v)$ {\em touch,} that is, either they
intersect or there is an edge between them.
\end{enumerate}
We say that an embedding of $F$ into $G$ is of {\em depth} $r$ if every vertex of $G$ appears in the image of at most $r$ vertices of $F$. (Thus $F$ has an embedding of depth 1 into $G$ if and only $F$ is a minor of $G$. More generally, if we denote by $G^{(r)}$ the graph obtained by replacing every vertex of $G$ by a clique of size $r$ and replacing every edge by a complete bipartite graph on $r+r$ vertices, then $F$ has an embedding of depth at most $r$ into $G$ if and only if $F$ is a minor of $G^{(t)}$.)

\begin{proposition}[\cite{DBLP:journals/toc/Marx10}]\label{prop:embedding-reduce}
  Let $I$ be a CSP instance with primal graph $F$ and domain $D$.
  Given an arbitrary graph $G$ and an embedding $\phi$ of $F$ into $G$
  having depth at most $r$, we can construct in polynomial time an
  equivalent CSP instance $I'$ having domain $D^r$ and primal graph
  $G$.
\end{proposition}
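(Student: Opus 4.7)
The natural plan is to build $I'$ so that each variable $v \in V(G)$ records the values assigned by $I$ to the (at most $r$) vertices of $F$ whose image under $\phi$ contains $v$. Concretely, for each $v \in V(G)$ I set $S(v) = \{u \in V(F) : v \in \phi(u)\}$; by the depth hypothesis $|S(v)| \le r$, so a value in $D^r$ can be interpreted as a function $S(v) \to D$ (with the remaining coordinates padded arbitrarily). Given an assignment $f' : V(G) \to D^r$, this gives a tentative ``vote'' $f'(v)(u) \in D$ cast by every $v \in \phi(u)$ for the value of $u$.

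I would then introduce two families of binary constraints on $I'$. First, \emph{consistency constraints}: for every $u \in V(F)$, since $\phi(u)$ induces a connected subgraph of $G$, pick any spanning tree of that subgraph and, for every tree edge $v_1v_2$, add a constraint on the pair $(v_1,v_2)$ enforcing $f'(v_1)(u) = f'(v_2)(u)$. Connectedness then guarantees that every solution $f'$ induces a well-defined value $f(u) \in D$ shared by all $v \in \phi(u)$. Second, \emph{simulation constraints}: for every constraint $\langle (u_1,u_2), R\rangle$ of $I$, use the touching condition to pick a witness — either a common vertex $v \in \phi(u_1) \cap \phi(u_2)$ (where I add a unary-style constraint $\langle (v,v), R'\rangle$ forcing the pair $(f'(v)(u_1), f'(v)(u_2))$ to lie in $R$) or an edge $v_1v_2$ of $G$ with $v_i \in \phi(u_i)$ (where I add the analogous binary constraint on $(v_1, v_2)$). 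Finally, to make the primal graph exactly $G$ and not a subgraph, I pad with trivial always-satisfied constraints $\langle (v,v'), D^r \times D^r\rangle$ on the remaining edges of $G$.

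For equivalence, a solution $f$ of $I$ lifts to $f'(v)(u) := f(u)$ for every $u \in S(v)$, and all consistency and simulation constraints hold by construction; conversely, a solution $f'$ of $I'$ yields a well-defined $f(u) \in D$ by the consistency constraints, and each simulation constraint immediately forces the corresponding original constraint of $I$. The construction runs in polynomial time because the number of consistency constraints is at most $\sum_{u \in V(F)} (|\phi(u)| - 1) \le r\,|V(G)|$ and there is one simulation constraint per original constraint of $I$, and the constraint relations can be written down in time polynomial in $|D|^r$ and $|V(G)|$.

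The only step that requires real care is ensuring that each simulation constraint genuinely talks only about the coordinates available at the chosen witness: this is exactly where the definition of embedding pays off, because the depth bound guarantees $u_i \in S(v)$ (resp.\ $u_i \in S(v_i)$) at the witness, so the required coordinate of $f'$ is readable there. Beyond this bookkeeping, I do not anticipate a substantive obstacle; the proposition is essentially a clean formalization of the intuition that an embedding lets us simulate $\csp$ on $F$ by $\csp$ on $G$ at the cost of blowing the domain up by the depth.
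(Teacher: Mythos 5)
Your proof is correct and is essentially the standard argument underlying the cited result: encode into the $D^r$-value at each $v\in V(G)$ the intended values of the at most $r$ vertices $u$ with $v\in\phi(u)$; enforce agreement along a spanning tree inside each connected set $\phi(u)$; and simulate each original constraint at a witness vertex (intersection case) or witness edge (touching-via-edge case), padding the remaining edges of $G$ with trivial constraints. The only bookkeeping you glossed over is that several of your new constraints (consistency, simulation, padding) may share a scope, but the paper's convention already handles this by intersecting relations with the same scope, so there is no gap.
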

\ifabstract
One can reduce 3SAT to a CSP instance and, with some additional work, ensure that the primal graph has maximum degree 3.
\else
One can easily reduce 3SAT to a CSP instance and, with some additional work, ensure that the primal graph has maximum degree 3.
\fi
\begin{proposition}[\cite{DBLP:journals/toc/Marx10}]\label{prop:3sat-to-csp}
  Given an instance of 3SAT with $n$ variables and $m$ clauses, it is
  possible to construct in polynomial time an equivalent CSP instance
  with $O(n+m)$ variables, $O(m)$ binary constraints, at most 3
  constraint on each variable, and domain size 3.
\end{proposition}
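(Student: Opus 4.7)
The plan is to build the CSP by creating one variable per 3SAT clause and then linking the clauses that share a 3SAT variable through short propagation chains of auxiliary variables, a trick that is needed precisely to respect the degree-$3$ bound. For each clause $c_j=(\ell_{j,1}\vee \ell_{j,2}\vee \ell_{j,3})$ I would introduce a CSP variable $y_j$ whose value in $\{1,2,3\}$ encodes which literal is used to satisfy $c_j$. For each 3SAT variable $x_i$, list the clauses containing $x_i$ (positively or negatively) as $c_{i_1},\ldots,c_{i_{k_i}}$ and introduce $k_i$ auxiliary CSP variables $z_{i,1},\ldots,z_{i,k_i}$ ranging over $\{0,1\}$; viewing all values as elements of a fixed three-element set gives the uniform domain of size $3$ demanded by the statement.

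The constraints come in two families. Along each chain I add binary equality constraints $z_{i,t}=z_{i,t+1}$ for $1\le t<k_i$, which force all $z_{i,t}$ to share a common value that is interpreted as the truth value assigned to $x_i$. Between $y_{i_t}$ and $z_{i,t}$ I add a binary consistency constraint whose allowed pairs are exactly those for which either $y_{i_t}$ does not point at the occurrence of $x_i$ inside $c_{i_t}$, or else $y_{i_t}$ does point at that occurrence and $z_{i,t}$ equals the truth value that satisfies the corresponding literal (so $z_{i,t}=1$ if $x_i$ appears positively and $z_{i,t}=0$ otherwise).

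Next I would tally the parameters. There are $m$ clause variables plus $\sum_i k_i=3m$ chain variables, for a total of $O(n+m)$ CSP variables, and $\sum_i(k_i-1)+\sum_i k_i=O(m)$ binary constraints. Each $y_j$ meets exactly three consistency constraints, one per literal of $c_j$, and no other constraint; each $z_{i,t}$ meets at most two chain constraints and exactly one consistency constraint, so the primal graph has maximum degree $3$. The uniform domain of size $3$ is built into the construction.

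What remains is the essentially routine equivalence check. From a satisfying 3SAT assignment, set every $z_{i,t}$ to the truth value of $x_i$ and each $y_j$ to any literal that satisfies $c_j$; the equality and consistency constraints then hold by construction. Conversely, any CSP solution forces each chain to be constant, which yields a truth assignment on the $x_i$'s, and the consistency constraint at $y_j$ guarantees that the chosen literal of $c_j$ evaluates to true under this assignment, so the original formula is satisfied. The only genuine obstacle is the degree bound: a direct reduction would use a single ``truth value'' node for each $x_i$ adjacent to every $y_{i_t}$, which may have arbitrarily large degree, and the propagation chain is precisely the standard device for splitting this star into a constant-degree gadget.
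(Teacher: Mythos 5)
Your construction is correct and is essentially the standard one behind the cited result of Marx: a domain-$3$ clause variable recording which literal satisfies each clause, plus per-variable chains of Boolean copies (the usual device for breaking the high-degree ``star'' into a degree-$3$ gadget), with the parameter counts and the two directions of the equivalence checked exactly as needed. The only cosmetic point is to index the chains by literal \emph{occurrences} rather than by clauses so that a clause containing the same variable twice is handled cleanly, but this does not affect any of the bounds or the argument.
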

The Sparsification Lemma of Impagliazzo, Paturi, and Zane \cite{MR1894519} greatly extends the applicability of the Exponential Time Hypothesis: It states that even an $2^{o(m)}$ time algorithm for $m$-clause 3SAT would contradict ETH. Together with Proposition~\ref{prop:3sat-to-csp}, this implies the following lower bound:
\begin{proposition}\label{prop:csp-eth}
  If there is a $2^{o(m)}$ time algorithm for binary CSP instances
  with $m$ constraints, at most 3 constraints on each variable, and
  domain size 3, then ETH fails.
\end{proposition}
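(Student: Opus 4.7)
The plan is to prove the proposition by a direct reduction chain: starting from a 3SAT instance, apply Proposition~\ref{prop:3sat-to-csp} to produce an equivalent binary CSP instance of the required form, and then invoke the sparsified version of ETH quoted just before the statement to derive a contradiction from the hypothetical $2^{o(m)}$ algorithm.

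In more detail, I would suppose for contradiction that $\mathbb{A}$ is an algorithm deciding every binary CSP instance with $m$ constraints, domain size $3$, and at most $3$ constraints per variable in time $2^{o(m)}$. Given a 3SAT formula $\phi$ with $n$ variables and $m$ clauses, Proposition~\ref{prop:3sat-to-csp} produces in polynomial time an equivalent binary CSP instance $I$ with $O(n+m)$ variables, $m' = O(m)$ binary constraints, at most $3$ constraints on each variable, and domain size $3$. Running $\mathbb{A}$ on $I$ decides whether $\phi$ is satisfiable in time $\mathrm{poly}(n+m) + 2^{o(m')} = 2^{o(m)}$. By the Sparsification Lemma of Impagliazzo, Paturi, and Zane, such a $2^{o(m)}$ algorithm for $m$-clause 3SAT would contradict ETH, which is the desired contradiction.

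The hard work of this proposition has been outsourced to its two ingredients, so the argument is essentially a composition, and I do not anticipate a serious obstacle. The one subtle point worth flagging is that the hypothesis bounds the running time in terms of the number of \emph{constraints} rather than the number of \emph{variables}: the chain only closes because Proposition~\ref{prop:3sat-to-csp} guarantees $m' = O(m)$ (not merely $m' = O(n+m)$) binary constraints, which preserves the exponent $2^{o(m)}$ when we pass from the CSP side back to the 3SAT side.
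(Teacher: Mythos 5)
Your proposal is correct and follows the same route the paper intends: the paper gives no separate proof of Proposition~\ref{prop:csp-eth}, merely noting that it follows by combining Proposition~\ref{prop:3sat-to-csp} with the Sparsification Lemma, which is precisely the composition you spell out. Your flagged subtlety — that the bound is on the number of constraints and one needs $m' = O(m)$ rather than just $O(n+m)$ — is exactly the right point to verify, and Proposition~\ref{prop:3sat-to-csp} does guarantee it.
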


What is actually proved by Alon and Marx \cite{DBLP:journals/siamdm/AlonM11} is the following embedding result:
\begin{theorem}[Alon and Marx~\cite{DBLP:journals/siamdm/AlonM11}]\label{th:embedhamming}
  For integers $n, d > 0$ and every graph $F$ with $m > m_0(n, d)$
  edges and no isolated vertices, there is an embedding of depth
  $O(dm/n^d)$ from $F$ into $\Hs[n,d]$.
\end{theorem}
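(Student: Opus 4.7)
The plan is to give a randomized construction of $\phi$. Assign each vertex $u\in V(F)$ a uniformly random image $f(u)\in[n]^d$, independently, and for each edge $uv\in E(F)$ choose an independent uniformly random permutation $\pi_{uv}$ of $[d]$. Let $P_{uv}$ be the canonical $f(u)$--$f(v)$ path in $\Hs[n,d]$ obtained by correcting the coordinates one at a time in the order prescribed by $\pi_{uv}$; note that $|V(P_{uv})|\le d+1$. Define $\phi(u)$ to be $\{f(u)\}$ together with, for each edge $uv$ incident to $u$, the first $\lceil |V(P_{uv})|/2\rceil$ vertices of $P_{uv}$ starting at $f(u)$. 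Each $\phi(u)$ is then a union of paths through the common vertex $f(u)$, hence connected, and the halves of $P_{uv}$ contained in $\phi(u)$ and $\phi(v)$ meet (or are adjacent) near the middle of the path, so $\phi$ is a valid embedding of $F$ into $\Hs[n,d]$.

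Next I would bound the expected depth at an arbitrary $w\in[n]^d$. The number of $u$ with $w\in\phi(u)$ is at most $|\{u:f(u)=w\}|$ plus the number of edges $uv$ whose path $P_{uv}$ visits $w$. The first quantity has expectation $|V(F)|/n^d\le 2m/n^d$, using the no-isolated-vertices assumption. For the second, $\Pr[w\in V(P_{uv})]\le (d+1)/n^d$ for each single edge: conditional on $\pi_{uv}$, the path visits, for each $k\in\{0,1,\dots,d\}$, the vertex obtained by taking the first $k$ coordinates (in $\pi_{uv}$ order) from $f(v)$ and the remaining $d-k$ from $f(u)$, and each such vertex equals $w$ with probability $n^{-d}$ by the independence of $f(u)$ and $f(v)$. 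Summing over the $m$ edges yields expected load $O(dm/n^d)$ at every $w$, matching the target depth.

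The main obstacle is passing from expectation to a simultaneous bound over all $n^d$ vertices. The indicators $\Ind[w\in V(P_{uv})]$ are dependent across edges sharing a vertex in $F$, so a direct Chernoff bound fails. My plan is to decompose $E(F)$ into matchings (for instance via Vizing's theorem), within each of which the endpoint assignments --- and hence the load indicators at $w$ --- are independent, apply Chernoff to each matching's contribution, and combine the tail bounds by a union bound over matchings and over the $n^d$ choices of $w$. The hypothesis $m>m_0(n,d)$ is precisely the threshold at which the expected load $\Theta(dm/n^d)$ dominates the additive $\log(n^d)$ slack introduced by concentration and the union bound, so that some outcome simultaneously achieves depth $O(dm/n^d)$ at every vertex of $\Hs[n,d]$, producing the required embedding.
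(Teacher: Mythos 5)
The paper itself does not prove this theorem: it is imported as a black box from Alon and Marx. The closest thing the paper does prove is Theorem~\ref{thm:embedding}, which gives an embedding into $\Hs[n,d]$ of depth $O(d\cdot\dmax\cdot\log m/\log\log m)$ via a multi-commodity-flow argument and Raghavan--Thompson rounding, and crucially that bound depends on the maximum degree $\dmax$. Theorem~\ref{th:embedhamming} has no degree dependence, and that difference is exactly where your proposal breaks.

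Your construction, not just your concentration argument, fails on high-degree graphs. Take $F$ a star with $m$ edges and center $c$. Conditioned on $f(c)$, the events $\{w\in V(P_{cv})\}$ for distinct leaves $v$ are independent (they depend on disjoint randomness $f(v),\pi_{cv}$), so there is nothing for Vizing or Chernoff to fix: the problem is that the \emph{conditional expectation} is already far too large on a whole neighborhood of $f(c)$. Concretely for $d=2$: the middle vertex of $P_{cv}$ equals a fixed $w$ sharing one coordinate with $f(c)$ with probability $1/(2n)$, and this middle vertex always lies in $\phi(v)$, so the expected depth at each of the roughly $2n$ vertices $w$ in the row or column of $f(c)$ is about $m/(2n)$, a factor $\Theta(n)$ above the target $2m/n^2$; by Chernoff (which works here because the indicators are conditionally independent) the depth is \emph{concentrated} at this wrong value. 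The same computation in general $d$ gives depth $\Theta(m/(d\,n^{d-1}))$ at $\Theta(d\,n^{d-1})$ vertices at Hamming distance $d-1$ from $f(c)$, a factor $\Theta(n/d^2)$ above $O(dm/n^d)$ that no choice of $m_0(n,d)$ removes. Separately, the Vizing decomposition you invoke for concentration also collapses here: a star decomposes into $m$ singleton matchings, so the per-matching Chernoff slack already dwarfs the target. A correct proof must do something to balance degrees or spread the routing (this is what the fractional routing over all intermediate points $x$ accomplishes in the paper's proof of Theorem~\ref{thm:embedding}, and Alon and Marx must do more still to eliminate the $\dmax$ factor); routing each edge along a single random coordinate-correcting path anchored at a single random image per vertex does not.
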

Then Proposition~\ref{prop:csp-eth},
Proposition~\ref{prop:embedding-reduce}, and
Theorem~\ref{th:embedhamming} together show that an $|I|^{o(n^d)}$
time algorithm for CSP on $\Hs[n,d]$ would violate ETH. Additionally, if
$m$ is sufficiently large, then any $f(|V|)$ is dominated by $|I|$ (see
\cite{DBLP:journals/toc/Marx10}).

With Theorem~\ref{th:hamminglowerbound} and
Proposition~\ref{prop:embedding-reduce} at hand, we can prove
Theorem~\ref{th:gridnolog} by finding a suitable embedding from
$d$-dimensional Hamming graphs to $(d+1)$-dimensional grid graphs.
\begin{theorem}\appstar
\ifabstract
\footnote{Due to space restrictions, proofs marked with \appstar appear in the full version of this paper.}
\else
\footnote{Proofs marked with \appstar have been moved to the appendix due to space restrictions.}
\fi
\label{th:hammingtogrid}
For every $d,n\ge 1$ there is an embedding $\phi$ from $\Hs[n,d]$ to $\Rs[n,d+1]$ having depth at most $d+1$. Furthermore, such an embedding can be constructed in time polynomial in $n$ and $d$.
\end{theorem}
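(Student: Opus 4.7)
My plan is to give an explicit embedding: for each vertex $\ba = (a_1, \ldots, a_d) \in [n]^d$ of $\Hs[n,d]$, I would let $\phi(\ba)$ be the union of $d+1$ axis-parallel lines of $\Rs[n,d+1]$, chosen so that the lines in directions $1,\ldots,d$ carry the Hamming edges in those directions, while a single line in direction $d+1$ glues the $d$ pieces together into a connected set.

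Explicitly, I would take $\phi(\ba) = L_0(\ba) \cup \bigcup_{k=1}^{d} L_k(\ba)$, where $L_0(\ba) = \{(a_1, \ldots, a_d, t) : t \in [n]\}$ is the full line in direction $d+1$ above the point $\ba$, and $L_k(\ba) = \{(a_1, \ldots, a_{k-1}, t, a_{k+1}, \ldots, a_d, a_k) : t \in [n]\}$ is the full line in direction $k$ sitting at ``height'' $a_k$ in the extra coordinate. The three required properties then follow by short direct checks. For connectivity, each $L_k(\ba)$ is a path, and for every $k \in [d]$ the line $L_0(\ba)$ meets $L_k(\ba)$ at the common point $(a_1, \ldots, a_d, a_k)$, so $\phi(\ba)$ is connected. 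For the edge condition, if $\ba$ and $\bb$ differ only in coordinate $k$, then $L_0(\ba)$ and $L_k(\bb)$ both contain the point $(a_1, \ldots, a_d, b_k)$, so $\phi(\ba) \cap \phi(\bb) \neq \emptyset$. For the depth bound, a vertex $(b_1, \ldots, b_{d+1}) \in \Rs[n,d+1]$ can lie in $\phi(\ba)$ only if $\ba$ equals either $(b_1, \ldots, b_d)$ (from $L_0$) or $(b_1, \ldots, b_{k-1}, b_{d+1}, b_{k+1}, \ldots, b_d)$ for some $k \in [d]$ (from $L_k$), giving at most $d+1$ candidates.

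The one subtlety I expect to worry about is precisely the depth bound: a naive ``cross'' construction in which every $L_k(\ba)$ sits at a fixed height (say $1$) in the extra coordinate would concentrate all crossings on the hyperplane $b_{d+1}=1$ and blow the depth up to $\Theta(dn)$. Placing $L_k(\ba)$ at height $a_k$ is the one nontrivial design choice: it spreads the crossings evenly across the extra dimension, so that each grid point lies on at most one ``line $L_k$'' per direction $k$, and the bound $d+1$ then falls out of the counting above. Polynomial-time constructibility of $\phi$ is immediate from the closed-form formulas.
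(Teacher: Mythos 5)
Your construction is exactly the paper's: your $L_0$ and $L_k$ coincide with the paper's $\phi_0$ and $\phi_k$, and the connectivity, touching, and depth arguments are the same (the paper exhibits the shared point for the edge condition in $\phi_0(\ba)\cap\phi_k(\ba')$ while you use $L_0(\ba)\cap L_k(\bb)$, which is the symmetric view of the same intersection). The proof is correct and matches the paper's.
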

\ifabstract
Proposition~\ref{prop:embedding-reduce},
Theorem~\ref{th:hammingtogrid},
and
Theorem~\ref{th:hamminglowerbound} together prove
Theorem~\ref{th:gridnolog}.
\fi

\fi
\iffull
\begin{proof}
\fi
\ifabstract

\fi
\ifmove
\ifappendix\section{Proof Theorem~\ref{th:hamminglowerbound}}
\label{sec:hamminglowerbound}
\begin{proof}[Proof of Theorem~\ref{th:hamminglowerbound}]
\fi

The embedding $\phi$ is constructed as follows. For every $\ba=(a_1,\dots, a_d)\in [n]^d$, we define the mapping $\phi_0$, mappings $\phi_i$ for $1\le i \le d$, and mapping $\phi$ as follows:
\begin{align*}
\phi_0(\ba)&=\{ (a_1,\dots,a_d,x)\mid x\in [n] \}\\
\phi_i(\ba)&=\{ (a_1,\dots,a_{i-1},x,a_{i+1},\dots, a_d,a_i)\mid x\in [n] \}\\
\phi(\ba)&=\phi_0(\ba)\cup \bigcup_{i=1}^d \phi_i(\ba)
\end{align*}
Let us verify first that $\phi(\ba)$ is connected for every
$\ba=(a_1,\dots,a_d)\in [n]^d$. It is clear that $\phi_i(\ba)$ is
connected for every $0 \le i \le d$. Moreover, $(a_1,\dots,a_d,a_i)\in \phi_0(\ba)\cap \phi_i(\ba)$, hence the union $\phi(\ba)$ of the $d+1$ sets is also connected.

Let $\ba=(a_1,\dots, a_d)\in [n]^d$ and $\ba'=(a_1,\dots,
a_{i-1},a'_i,a_{i+1},\dots,a_d)\in [n]^d$ be two adjacent vertices of
$\Hs[n,d]$. We have to show that $\phi(\ba)$ and $\phi(\ba')$
touch. Indeed, we have that $(a_,\dots,a_{i-1},a_i,a_{i+1},\dots,
a_d,a'_i)$ appears both in $\phi_0(\ba)$ and in $\phi_i(\ba')$,
implying that $\phi(\ba)$ and $\phi(\ba')$ intersect.

Finally, we have to show that the depth of $\phi$ is at most $d+1$. Consider a vertex $(a_1,\dots, a_{d+1})\in [n]^{d+1}$ of $\Rs[n,d+1]$. This vertex appears in $\phi_0(\ba)$ for exactly one $\ba$ (namely, for $\ba=(a_1,\dots,a_d)$) and, for every $1\le i \le d$, it appears in $\phi_i(\ba)$ for exactly one $\ba$ (namely, for $\ba=(a_1,\dots, a_{i-1},a_{d+1}, a_{i+1},\dots, a_d\}$). Therefore, each vertex of $\Rs[n,d+1]$ appears in $\phi(\ba)$ for exactly $d+1$ values of $\ba\in [n]^d$.
\end{proof}
\fi

\ifmain
\iffull
Proposition~\ref{prop:embedding-reduce},
Theorem~\ref{th:hammingtogrid},
and
Theorem~\ref{th:hamminglowerbound} together prove
Theorem~\ref{th:gridnolog}.
\fi
\fi

\iffull
\subsection{Proof of Theorem~\ref{th:gridbounded-domain}}
\label{sec:proof-domain}
\fi
\ifmove
\ifappendix
	\label{sec:proof-domain}
\fi



Let $G$ ba an $n$-vertex graph.
A \emph{multi-commodity flow instance} on $G$ is a pair $\mu=(\capa,\dem)$, where $\capa:E(G)\to \mathbb{R}_{\geq 0}$, and $\dem:V(G) \times V(G) \to \mathbb{R}_{\geq 0}$.
For every edge $e\in E(G)$, we say that $\capa(e)$ is its \emph{capacity}, and for every $u,v\in V(G)$ we say that there is $\dem(u,v)$ \emph{demand}  between $u$ and $v$.
We may also similarly define capacities over vertices, in which case the capacity function $\capa$ will also be defined on $V(G)$, i.e.~we have $\capa:V(G)\cup E(G)\to \mathbb{R}_{\geq 0}$, and if there are no edge capacities, we may have $\capa:V(G)\to \mathbb{R}_{\geq 0}$.


A \emph{routing} of a multi-commodity flow instance $(\capa, \dem)$ is a a collection of flows ${\cal F} = \{f_{u,v}\}_{u,v\in V(G)}$, where for every $u,v\in V(G)$, $f_{u,v}$ is a flow from $u$ to $v$ of value $\dem(u,v)$.
The \emph{congestion} of an edge $e\in E(G)$ in the routing ${\cal F}$ is the ratio of the total flow passing through $e$, over the capacity of $e$.
The (edge) congestion of ${\cal F}$ is the maximum congestion over all edges.
When we have capacities on the vertices, the congestion of a vertex $u$ is analogously defined to be the ratio of the total flow passing through $u$, over the capacity of $u$; the \emph{vertex congestion} of ${\cal F}$ is similarly the maximum congestion over all vertices.


The randomized-rounding technique  of Raghavan and
Thompson \cite{DBLP:journals/combinatorica/RaghavanT87} allows us to
modify a routing of a multi-commodity flow in such a way that every demand is
routed along a single path and vertex congestion\footnote{The result of Raghavan \&
  Thompson \cite{DBLP:journals/combinatorica/RaghavanT87} is stated
  for directed graphs with edge capacities, but it is well-known that
  vertex capacities can be reduced to that case.} increases only at most
logarithmically.  Moreover, it has been shown by Raghavan that this
routing can be constructed in deterministic polynomial time via the
technique of pessimistic estimators (see
\cite{DBLP:journals/jcss/Raghavan88}, \cite{DBLP:journals/combinatorica/RaghavanT87}).

\begin{theorem}[Raghavan \cite{DBLP:journals/jcss/Raghavan88}]\label{thm:RT}
  For any $n$-vertex graph $G$ and for any multi-commodity flow
  instance $\mu$ on $G$, if $\mu$ can be routed with vertex congestion
  at most $c$, then there exists routing of $\mu$ such that every
  demand is routed along a single path and vertex congestion is at
  most $O\left(\frac{c\cdot \log n}{\log\log n}\right)$. Moreover, such a routing
  can be found in deterministic polynomial-time.
\end{theorem}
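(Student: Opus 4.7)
The plan is to apply classical randomized rounding to the fractional solution and then derandomize via Raghavan's method of pessimistic estimators. After reducing vertex capacities to edge capacities by a standard vertex-splitting transformation (replace each vertex $v$ of capacity $\capa(v)$ by an edge $v_{in}v_{out}$ of capacity $\capa(v)$ and reattach all incoming/outgoing edges accordingly), it suffices to prove the statement for edge congestion. I would first invoke flow decomposition to write each commodity $(u,v)$'s flow as $f_{u,v} = \sum_P \alpha_{u,v,P} \cdot \mathbf{1}_P$, where $P$ ranges over $u$-$v$ paths and $\sum_P \alpha_{u,v,P} = \dem(u,v)$; such a decomposition has polynomial support and is constructible in polynomial time.

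For the randomized step, I would independently pick, for each commodity, a single path $P$ with probability $\alpha_{u,v,P}/\dem(u,v)$ and route all of $\dem(u,v)$ along it. After a standard normalization ensuring each commodity contributes at most unit normalized load to any edge it uses (splitting any demand exceeding the smallest edge capacity into polynomially many smaller pieces), the congestion $X_e$ at any edge $e$ is a sum of independent $[0,1]$ random variables of expectation at most $c$, and the multiplicative Chernoff bound in its Poisson-tail form gives
\[
\Pr[X_e \geq t] \leq \left(\frac{ec}{t}\right)^{t}.
\]
Choosing $t = \Theta(c \log n / \log \log n)$ makes this at most $n^{-2}$, so a union bound over the polynomially many edges produces, with positive probability, a single-path routing whose edge congestion (and hence original vertex congestion) is $O(c \log n / \log \log n)$.

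To derandomize, I would use Raghavan's potential function $\Phi = \sum_e \exp(\lambda X_e)$ with the same value of $\lambda$ that optimized the Chernoff calculation above. Because $\Phi$ decomposes as a sum, over edges, of products of per-commodity factors that become fixed one by one as path choices are made, the conditional expectation $\mathbb{E}[\Phi \mid \text{first $j$ path choices}]$ is a polynomial-size quantity that can be evaluated exactly. Processing the commodities in arbitrary order, and for each selecting the path that minimizes this conditional expectation, ensures that $\mathbb{E}[\Phi]$ never grows; once all commodities are processed, the resulting deterministic value of $\Phi$ upper-bounds $\exp(\lambda \max_e X_e)$, which in turn yields the claimed congestion bound.

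The main obstacle I expect is establishing the sharpened $\log n / \log \log n$ factor rather than the naive $\log n$ that a sub-Gaussian Chernoff estimate would produce. This requires using the Poisson-tail form $(ec/t)^t$ of the multiplicative Chernoff bound and calibrating $\lambda = \Theta(\log(\log n / \log \log n))$ so that the tail beats the union-bound cost with the correct exponent; one must then verify that the pessimistic estimator built with this specific $\lambda$ still dominates the bad-event probability and remains polynomial-time computable at every step.
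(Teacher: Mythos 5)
The paper does not give its own proof of this statement: Theorem~\ref{thm:RT} is invoked as a black-box citation of Raghavan~\cite{DBLP:journals/jcss/Raghavan88}, building on Raghavan--Thompson~\cite{DBLP:journals/combinatorica/RaghavanT87}, and the surrounding discussion only summarizes the technique in a sentence. Your proposal reconstructs essentially the standard argument that the cited sources use, and it matches the ingredients the paper alludes to: vertex-splitting to reduce to edge capacities, flow/path decomposition of each commodity, independent per-commodity path sampling, the Poisson-tail Chernoff bound $\left(\frac{ec}{t}\right)^{t}$ to extract the $\log n/\log\log n$ (rather than $\log n$) factor, a union bound over edges, and derandomization with the exponential pessimistic estimator at the same $\lambda$. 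The calibration $\lambda = \Theta(\log\log n)$ is correct (and for $c<1$ one should read the bound as $O\bigl((c+1)\log n/\log\log n\bigr)$; this matches the paper's application, where $c=\Omega(1)$).

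One step deserves a flag. The parenthetical device of ``splitting any demand exceeding the smallest edge capacity into polynomially many smaller pieces'' conflicts with the conclusion you are trying to reach, namely that each demand ends up on a \emph{single} path: if a commodity is broken into pieces and each piece is rounded independently, the pieces may select different paths, so the demand is no longer routed along one path. In the paper's application this never arises (all demands and capacities are $1$, so the per-commodity increments are already in $[0,1]$ without any splitting), but in the general statement the $[0,1]$ normalization that the Chernoff bound requires is really an implicit hypothesis that no demand exceeds the capacities on its support. The splitting trick cannot repair a violation of that hypothesis without sacrificing the single-path guarantee; you should state the hypothesis explicitly rather than try to normalize around it.
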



The following embedding result is the main technical part of this
section.  More specifically, it gives an embedding of arbitrary graphs
into a $d$-dimensional Hamming graph.  The proof uses mostly
elementary arguments, together with the routing result of Raghavan and
Thompson (Theorem \ref{thm:RT}).  We also give an alternative proof of
a this embedding, albeit with a slightly worse depth, in Appendix~\ref{sec:embedding_alternative}; we remark that the worse bound on the
depth is still sufficient for our application here.  This different
proof combines bounds on the algebraic connectivity of $d$-dimensional
Hamming graphs, together with the duality between multi-commodity
flows and sparsest-cuts. The advantage of that proof is that, unlike
the proof below, it does not require any construction specific to the
graph $\Hs[n,d]$: the result seems to follow from high-level routing
arguments (altough one needs to review some background to exploit
these arguments).

\begin{theorem}\label{thm:embedding}
Let $d\geq 1$.
For every graph $G$ with $m$ edges, no isolated vertices, and with maximum degree $\dmax$, there is an embedding $\phi$ from $G$ to $\Hs[n,d]$ having depth $O\left(\frac{d \cdot \dmax \cdot  \log m}{\log\log m}\right)$, where $n=\lceil m^{1/d} \rceil$.
Moreover, such an embedding can be found in deterministic polynomial time.
\end{theorem}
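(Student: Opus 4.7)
The plan is to construct the embedding $\phi$ by routing, using Theorem~\ref{thm:RT} as a black box on top of a carefully chosen fractional multi-commodity flow in $\Hs[n,d]$. Concretely: first assign each vertex $u\in V(G)$ to a vertex $\pi(u)\in V(\Hs[n,d])$ in a balanced way; then, for every edge $uv\in E(G)$, route a unit of flow along a path $P_{uv}$ from $\pi(u)$ to $\pi(v)$; and finally take $\phi(u):=\{\pi(u)\}\cup\bigcup_{uv\in E(G)}V(P_{uv})$.

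For the balanced assignment, since $G$ has no isolated vertex we have $|V(G)|\le 2m\le 2n^{d}$, so we may pick any $\pi$ with $|\pi^{-1}(\bc)|\le 2$ for every $\bc\in V(\Hs[n,d])$. This induces a multi-commodity flow instance with unit vertex capacities, one unit of demand between $\pi(u)$ and $\pi(v)$ for each edge $uv\in E(G)$, and at most $2\dmax$ demands incident to each vertex of $\Hs[n,d]$. The key technical claim is that this instance admits a fractional routing of vertex congestion $O(d\dmax)$.

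To prove this claim I would use a Valiant-style two-phase scheme: to route a demand with endpoints $\ba,\bb$, sample a uniform intermediate $\bx\in[n]^{d}$ and route $\ba\to\bx\to\bb$ along the canonical path that overwrites coordinates in increasing order of index. Fix $\bc\in[n]^{d}$ and consider the contribution of the first leg $\ba\to\bx$ to the expected load at $\bc$. A short calculation shows that $\bc$ lies on this leg exactly when $\bx$ agrees with $\bc$ on the first $j^{\star}$ coordinates, where $j^{\star}=j^{\star}(\bc,\ba)$ is the largest index with $c_{j^{\star}}\ne a_{j^{\star}}$ (and $0$ if $\bc=\ba$); this event has probability $n^{-j^{\star}}$. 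So the total contribution from first legs is $\sum_{\ba}D(\ba)\,n^{-j^{\star}(\bc,\ba)}$, where $D(\ba)\le 2\dmax$ is the number of demands with source $\ba$. Partitioning sources by $j^{\star}$, there are at most $n^{j}$ candidates $\ba$ with $j^{\star}=j$, so the sum telescopes to $\sum_{j=0}^{d}2\dmax\cdot n^{j}\cdot n^{-j}=O(d\dmax)$. A symmetric argument, replacing $j^{\star}$ by the length of the common prefix of $\bc$ and $\bb$, handles the second leg $\bx\to\bb$. Applying Theorem~\ref{thm:RT} on the $n^{d}$-vertex host $\Hs[n,d]$ (whose size is $O(m)$ for fixed $d$, so $\log(n^{d})=O(\log m)$) then produces single-path routings $P_{uv}$ of vertex congestion $O(d\dmax\log m/\log\log m)$.

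Assembling the embedding is routine: every $\phi(u)$ is connected because all paths $P_{uv}$ pass through $\pi(u)$; for every edge $uv\in E(G)$, the sets $\phi(u)$ and $\phi(v)$ share the entire path $P_{uv}$ and therefore touch; and the depth at any $\bc\in V(\Hs[n,d])$ is bounded by $|\pi^{-1}(\bc)|$ plus twice the number of paths through $\bc$, i.e.~$O(1)+O(d\dmax\log m/\log\log m)$, as desired. The step where I expect real difficulty is the fractional congestion bound: it rests on the product structure of $\Hs[n,d]$ and on the balancedness of the demand set, and getting the correct joint dependence on $d$ and $\dmax$ requires care in the counting over the possible alignments of $\bc$ with demand endpoints. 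If the direct Valiant-style argument becomes unwieldy, an alternative is to derive the same fractional bound from flow--cut duality using known estimates for the sparsest cut (equivalently, the algebraic connectivity) of $d$-dimensional Hamming graphs, at the cost of slightly worse constants.
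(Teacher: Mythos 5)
Your proposal is correct and, modulo notation, follows the same strategy as the paper's proof: map $V(G)$ into $[n]^d$ with at most two preimages per target, route each edge demand through a uniformly random intermediate point along the canonical ``overwrite coordinates in increasing order'' paths, bound the expected fractional vertex congestion by $O(d\dmax)$, and then apply Theorem~\ref{thm:RT} and take unions of paths. The paper carries out the congestion bound by counting, for a fixed $\uu\in[n]^d$, the number of pairs $(\mya,\myb)$ whose canonical path visits $\uu$ (at most $d\cdot n^d$), and noting that each such path carries $1/n^d$ units of flow for at most $4\dmax$ demands; your version conditions on which coordinates of $\bx$ must match $\bc$ and telescopes over the alignment index $j^\star$. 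These are two phrasings of the same calculation, and both give the $O(d\dmax)$ bound. Your closing remark about the flow--cut/algebraic-connectivity alternative also matches the paper, which includes exactly that alternative argument in an appendix; note that it actually costs an extra $\log m$ factor in the depth, not merely worse constants.
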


\begin{proof}
Since $G$ has no isolated vertices, we have $|V(G)| \leq 2m$.
Recall that we have identified $V(\Hs[n,d])$ with $[n]^d$, in the obvious way.
Let $f:V(G) \to [n]^d$ be an arbitrary map that sends at most $2$ vertices in $V(G)$ to any vertex in $[n]^d$.
Define a multi-commodity flow instance 
$\mu = (\capa, \dem)$
on $\Hs[n,d]$, where every edge and vertex of $\Hs[n,d]$ has unit capacity.
and for every $\uu,\vv\in [n]^d$, we have
\[
\dem(\uu,\vv) = |\{\{u',v'\}\in E(G) : f(u')=\uu \text{, and } f(v')=\vv \}|.
\]

That is, $\dem(\uu,\vv)$ is the number of edges that go between vertices mapped to $\uu$ and vertices mapped to $\vv$.
We first define an auxiliary collection of paths in $G$.
For every $\xx,\yy\in [n]^d$, we define a path $P_{\xx,\yy}$ between $\xx$ and $\yy$ in $G$ as follows.
Let $\xx,\yy\in [n]^d$, with $\xx=(x_1,\ldots,x_d)$, $\yy=(y_1,\ldots,y_d)$.
The path $P$ visits the vertices $\zz^1,\ldots,\zz^d$, in this order, where for any $i\in [d]$, we have
\[
\zz^i = (x_1,\ldots,x_i,y_{i+1},\ldots,y_d).
\]
It is immediate that for any $i\in [d-1]$, either $\zz^i=\zz^{i+1}$, or $\{\zz^i,\zz^{i+1}\} \in E(\Hs[n,d])$.
Therefore, the sequence of vertices $\zz^1,\ldots,\zz^d$ defines a valid path $P_{x,y}$ from $\xx$ to $\yy$ in $\Hs[n,d]$.

We proceed to define a routing of the instance $(\capa, \dem)$.
Consider an edge $\{u,v\}\in E(G)$.
For every $\xx\in [n]^d$, let $Q_{\xx}$ be the path obtained by concatenating $P_{f(u),\xx}$ and $P_{\xx,f(v)}$.
Let $F_{u,v,\xx}$ be the flow obtained by routing $1/n^d$ demand units along $Q_{\xx}$.
Let $F_{u,v} = \sum_{\xx\in [n]^d} F_{u,v,\xx}$, i.e.~the flow obtained by summing up all these flows for all $\xx\in [n]^d$.
Clearly, the flow $F_{u,v}$ sends 1 demand unit from $f(u)$ to $f(v)$.
Taking the union of all such flows $F_{u,v}$ for all edges $\{u,v\}\in E(G)$, we obtain a multi-commodity flow $F$ that is a routing of $(\capa, \dem)$.

We now bound the congestion of $F$.
Let $\uu\in [n]^d$.
We first bound the number of ordered pairs $(\mya,\myb)\in [n]^d \times [n]^d$, such that $\uu\in V(P_{\mya,\myb})$.
Let $i\in \{1,\ldots,d\}$, and let us first bound the number of ordered pairs $(\mya,\myb)$ with $\mya=(a_1,\dots,a_d)$ and $\myb=(b_1,\dots,b_d)$ such that $\uu$ is the $i$-th vertex visited by $P_{\mya,\myb}$; we may assume that each $P_{\mya,\myb}$ visits exactly $d$ vertices, possibly with repetitions.
This happens precisely when 
\[
\uu = (a_1,\ldots,a_i,b_{i+1},\ldots,b_d).
\]
Therefore, there are $n^{d-i}$ possible choices for $\mya$ (by
choosing the values of $a_{i+1}$, $\dots$, $a_d$), and $n^i$ possible
choices for $\myb$ (by choosing the values of $b_1$, $\dots$,
$b_{i}$).  In total, there are at most $n^d$ possible choices for
$(\mya,\myb)$ such that $\uu$ is the $i$-th vertex visited by
$P_{\mya,\myb}$.  It follows that there are at most $d\cdot n^d$
possible pairs $(\mya,\myb)$ such that $\uu$ is visited by
$P_{\mya,\myb}$. 

In the routing $F$ constructed above, there are at most $2$ vertices
of $G$ mapped to every vertex of $\Hs[n,d]$.  Every such vertex has at
most $\dmax$ incident edges in $G$, and therefore $F$ routes at most
$2 \dmax$ unit flows between every vertex $\ww$ and the rest of $G$.
Each such unit flow sends at most $1/n^d$ units of flow along each
path $P_{\mya,\myb}$ with $\mya=\ww$, and at most $1/n^d$ units along
each path $P_{\mya,\myb}$ with $\myb=\ww$.  It follows that when
constructing the routing $F$, for each $(\mya,\myb)\in [n]^d \times
[n]^d$, we send $1/n^d$ units of flow along $P_{\mya,\myb}$ at most
$2\cdot (2 \dmax) = 4 \dmax$ times.  Since every $\uu\in [n]^d$ is
visited by at most $d\cdot n^d$ paths, we conclude that the total flow
in $F$ that passes through $\uu$ is at most $4 d \dmax$.  As every
vertex has unit capacity, it follows that the maximum vertex
congestion in $F$ is at most $4 d \dmax$.

By Theorem~\ref{thm:RT}, we can compute in deterministic polynomial time
a routing such that every demand is routed along a single
path, and with maximum vertex congestion at most $O\left(\frac{d \cdot
    \dmax \cdot \log m}{\log\log m}\right)$.  We can now construct
an embedding $g$ of $G$ into $\Hs[n,d]$ as follows.  For very $v\in
V(G)$, we set $g(v)$ to be the union of all the vertices in all the
paths used to route the edges in $G$ that are incident to $v$.  It is
immediate that $g$ is a valid embedding with depth $O\left(\frac{d
    \cdot \dmax \cdot \log m}{\log\log m}\right)$, as required.
\end{proof}

The depth of the embedding given by Theorem~\ref{thm:embedding} has a
factor logarithmic in the size of $G$. Therefore, an application of
Proposition~\ref{prop:embedding-reduce} with this embedding cannot
give a CSP whose domain size is constant. Therefore, in the following
lemma, we decrease the depth by roughly a factor of $\beta$ by
``stretching'' the graph $\Hs[n,d]$ by a factor $\beta$ in all
dimensions.

\begin{lemma}\label{lem:hamming_depth}
Let $n,d\geq 1$.
Let $G$ be a graph, and let $f$ be an embedding of $G$ into $\Hs[n,d]$, with depth $\alpha$.
Let $\beta \leq \alpha$.
There exists an embedding $f'$ of $G$ into $\Hs[n\cdot \beta, d]$, with depth $d\cdot \lceil \alpha/\beta\rceil$.
Moreover, $f'$ can be computed in deterministic polynomial time.
\end{lemma}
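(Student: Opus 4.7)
The plan is to blow up each vertex $\ba \in V(\Hs[n,d])$ into a block
$B_\ba = \prod_{i=1}^d \{(\ba_i-1)\beta+1,\dots,\ba_i\beta\} \subseteq V(\Hs[n\beta,d])$,
whose induced subgraph is isomorphic to $\Hs[\beta,d]$, and to assign to each pair $(u,\ba)$ with $\ba \in f(u)$ a ``cross'' subset $C_{\ba,u} \subseteq B_\ba$ consisting of the $d$ axis-aligned hyperplanes of $B_\ba$ meeting at a common diagonal point. The embedding will then be $f'(u) = \bigcup_{\ba \in f(u)} C_{\ba,u}$.

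To set this up, let $T_\ba = \{u : \ba \in f(u)\}$; by hypothesis $|T_\ba|\le \alpha$. Partition $T_\ba$ greedily into $\beta$ classes of size at most $\lceil \alpha/\beta\rceil$, encoded by a label $\tau_\ba(u) \in [\beta]$. Define
\[
C_{\ba,u} \;=\; \bigcup_{i=1}^d \bigl\{x \in B_\ba : x_i = (\ba_i-1)\beta + \tau_\ba(u)\bigr\},
\]
i.e., the union of the $d$ coordinate hyperplanes of $B_\ba$ through the point
$p_{\ba,u} = ((\ba_j-1)\beta+\tau_\ba(u))_{j=1}^d$;
each $C_{\ba,u}$ is connected because all $d$ hyperplanes contain $p_{\ba,u}$.

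The key observation for verifying the embedding conditions is that when $\ba,\bb$ are adjacent in $\Hs[n,d]$ with $\bb = \ba\pm\be_i$, the axis-$i$ hyperplanes of $C_{\ba,u}$ and $C_{\bb,u}$ touch in $\Hs[n\beta,d]$ \emph{regardless} of $\tau_\ba(u),\tau_\bb(u)$: these hyperplanes live in axis-$i$-adjacent blocks but span identical ranges on every other coordinate, so a pair of vertices agreeing outside coordinate $i$ and differing only on coordinate $i$ is always present. Applying this along a spanning tree of $f(u)$ shows $f'(u)$ is connected, and the same mechanism (or a direct intersection within a common block when $f(u)\cap f(v) \neq \emptyset$) shows $f'(u)$ and $f'(v)$ touch whenever $uv \in E(G)$. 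This independence from $\tau$ is what lets us pick $\tau_\ba$ for each $\ba$ by a purely local balanced partition, avoiding any global coupling.

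For depth, note that a vertex $w\in B_\ba$ with within-block offsets $\theta_i = w_i-(\ba_i-1)\beta \in [\beta]$ lies in $C_{\ba,u}$ exactly when $\tau_\ba(u) \in \{\theta_1,\dots,\theta_d\}$; hence the number of $u\in T_\ba$ with $w \in f'(u)$ is at most $\sum_{i=1}^d |\{u\in T_\ba : \tau_\ba(u)=\theta_i\}| \le d\lceil\alpha/\beta\rceil$, matching the claim. The construction is clearly computable in deterministic polynomial time, since it only requires iterating over $V(\Hs[n,d])$ and forming balanced partitions of bounded-size sets. The main difficulty I expect is conceptual rather than technical: identifying a shape for $C_{\ba,u}$ that (a) permits independent local labels while still giving the required connectivity between blocks, and (b) is ``small enough'' to keep the depth down to a factor $d$ over the information-theoretic $\lceil\alpha/\beta\rceil$. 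A naive single-slab construction fails (a) because it forces coordinated label choices across non-axis-$1$-adjacent blocks; the $d$-hyperplane cross resolves this at precisely the extra factor of $d$ that appears in the bound.
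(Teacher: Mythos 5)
Your construction is the same as the paper's: your ``cross'' set $C_{\ba,u}$ (the union of the $d$ axis-aligned hyperplanes through a diagonal point selected by a locally balanced label $\tau_\ba(u)$) is, up to a translation and sign convention, exactly the set $\beta\cdot\vv - X(\tau(u,\vv))$ the paper uses, and your depth and touching arguments mirror the paper's. The proof is correct; the only (minor) added value is that you make the connectivity of $f'(u)$ explicit, which the paper leaves implicit.
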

\begin{proof}
For any $u\in V(G)$, and for any $\vv\in f(u)$, pick a value $\tau(u,\vv)\in \{0,\ldots,\beta-1\}$, such that for any $\vv\in [n]^d$, and for any $t\in \{0,\ldots,\beta-1\}$, there exist at most $\lceil \alpha/\beta\rceil$ vertices $u\in V(G)$, with $\tau(u,\vv)=t$.
Such a choice of values clearly exists, since for every $\vv\in [n]^d$, there exist at most $\alpha$ vertices $u\in V(G)$, with $\vv\in f(u)$.

For any $t\in \{0,\ldots,\beta-1\}$, and $i\in [d]$, we define 
\[
X(t,i) = \{(x_1,\ldots,x_d)\in \{0,\ldots,\beta-1\}^d : x_i = t\}.
\]
We also set
\[
X(t) = \bigcup_{i\in [d]} X(t,i).
\]
That is, $X(t)$ is the set of vectors in $\{0,\ldots,\beta-1\}^d$
having value $t$ in at least one of the coordinates.  We define the
desired embedding $f'$ as follows.  For any $u\in V(G)$, let
\[
f'(u) = \bigcup_{\vv\in f(u)} \left(\beta\cdot \vv - X(\tau(u,\vv))\right),
\]
where for any $\vv\in \mathbb{Z}^d$, and $X\subseteq \mathbb{Z}^d$, we have used the notation $\vv+X = \bigcup_{\ww\in X} (\vv+\ww)$, and $\vv+\ww$ denotes vector addition.
This completes the definition of the embedding $f'$.

Let us first verify that the depth of $f'$ is at most $d\cdot \lceil \alpha/\beta\rceil$.
Let $\ww = (w_1,\ldots,w_d)\in [\beta\cdot n]^d$.
There exists a unique $\vv=(v_i,\ldots,v_d)\in [n]^d$, such that for any $i\in [d]$, we have $\beta (v_i-1)+1 \leq w_i \leq \beta v_i$.
Note that for any $\uu\in V(G)$, we have $\ww\in f'(u)$ only if $\vv\in f(u)$.
There exist at most $\alpha$ vertices in $\uu\in V(G)$, such that $\vv\in f(u)$.
For every $i\in [d]$, there exists exactly one value $t\in \{0,\ldots,\beta-1\}$, such that 
\[
\ww \in \beta\cdot \vv - X(t,i).
\]
Indeed, every vector in $X(t,i)$ has the same value $t$ at the $i$-th
coordinate, hence $t$ has to be the $i$-th coordinate of $\beta\cdot
\vv-\ww$.  For every such value $t$, there exist at most $\lceil
\alpha/\beta\rceil$ vertices $u\in V(G)$ with $\vv\in f(u)$ and
$\tau(u,\vv)=t$.  Therefore, there exist at most $d\cdot \lceil
\alpha/\beta\rceil$ vertices $u\in V(G)$ with $\ww\in f'(u)$, as
required.

It remains to argue that $f'$ is a valid embedding, that is, for any $\{u,u'\}\in V(G)$, the sets $f'(u)$ and $f'(u')$ touch.
Suppose first that $f(u)\cap f(u') \neq \emptyset$.
Since for any $t,t'\in \{0,\ldots,\beta-1\}$ we have $X(t)\cap X(t') \neq \emptyset$, it follows that $f'(u)\cap f'(u') \neq \emptyset$.
The only remaining case is that there exists $\vv\in f(u)$ and $\vv'\in f(u')$, such that $\{\vv,\vv'\}\in E(\Hs[n,d])$.
Let $\vv=(v_1,\ldots,v_d)$, and $\vv'=(v_1',\ldots,v_d')$.
It follows that there exists $i\in [d]$, such that $\vv'=(v_1,\ldots,v_{i-1},v_i',v_{i+1},\ldots,v_d)$.
This implies that for any $t,t'\in \{0,\ldots,\beta-1\}$, there exists an edge between the sets
$\beta \cdot \vv - X(t,i)$
and
$\beta \cdot \vv' - X(t',i)$: there are two vectors in $X(t,i)$ and $X(t',i)$ that differ only in the $i$-th coordinate, hence there 
are two vectors in $\beta \cdot \vv - X(t,i)$
and
$\beta \cdot \vv' - X(t',i)$ that differ only in the $i$-th coordinate.
This establishes that $f'$ is a valid embedding, concluding the proof.
\end{proof}

Combining Theorem \ref{thm:embedding} \& Lemma \ref{lem:hamming_depth}, we immediately obtain the following.

\begin{theorem}\label{thm:embedding_2}
Let $d\geq 1$.
For every graph $G$ with $m$ edges, no isolated vertices, and with maximum degree $\dmax$, there is an embedding $\phi$ from $G$ to $\Hs[n,d]$ having depth $O\left(d^2 \cdot \dmax\right)$, where $n=O\left( m^{1/(d-1)} \cdot \frac{\log m}{\log\log m} \right)$.
Moreover, such an embedding can be found in deterministic polynomial time.
\end{theorem}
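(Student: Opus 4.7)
The plan is to combine the two preceding ingredients in the obvious way: Theorem~\ref{thm:embedding} gives a ``cheap'' embedding into a small Hamming graph at the cost of a logarithmic depth, and Lemma~\ref{lem:hamming_depth} lets us trade depth for the side length $n$. Concretely, I would first apply Theorem~\ref{thm:embedding} to $G$ to obtain an embedding $f$ of $G$ into $\Hs[n_0,d]$ with $n_0=\lceil m^{1/d}\rceil$ and depth $\alpha=O\!\left(\frac{d\cdot\dmax\cdot\log m}{\log\log m}\right)$, computable in deterministic polynomial time.

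Next, I would invoke Lemma~\ref{lem:hamming_depth} on $f$ with a stretch factor $\beta$ chosen to make $\lceil\alpha/\beta\rceil=O(d\cdot\dmax)$. Setting
\[
\beta=\left\lceil\frac{\alpha}{d\cdot\dmax}\right\rceil=\Theta\!\left(\frac{\log m}{\log\log m}\right)
\]
is the natural choice (and $\beta\le\alpha$ is immediate for large enough $m$, so the hypothesis of the lemma is satisfied). Plugging into the lemma gives an embedding $f'$ of $G$ into $\Hs[n_0\beta,d]$ with depth $d\cdot\lceil\alpha/\beta\rceil=O(d^2\cdot\dmax)$, as required by the statement.

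Finally, I would check the size bound: the resulting side length is
\[
n = n_0\cdot\beta = O\!\left(m^{1/d}\cdot\frac{\log m}{\log\log m}\right),
\]
which fits within the bound $O\!\left(m^{1/(d-1)}\cdot\frac{\log m}{\log\log m}\right)$ claimed in the theorem (since $m^{1/d}\le m^{1/(d-1)}$ for $d\ge 2$). Polynomial-time computability is preserved because both Theorem~\ref{thm:embedding} and Lemma~\ref{lem:hamming_depth} are deterministic polynomial-time constructions and we compose them once.

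There is no real obstacle here: the whole argument is a single calibration of $\beta$ against $\alpha$. The only thing worth double-checking is that rounding in $\lceil\alpha/\beta\rceil$ and $\lceil m^{1/d}\rceil$ does not spoil the asymptotics for small $m$ (handled by absorbing into the $O(\cdot)$) and that the depth multiplicative factor $d$ introduced by Lemma~\ref{lem:hamming_depth} is exactly what converts the target depth $O(d\cdot\dmax)$ of $\lceil\alpha/\beta\rceil$ into the final $O(d^2\cdot\dmax)$ appearing in the statement.
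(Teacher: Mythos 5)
Your argument is exactly the paper's proof: Theorem~\ref{thm:embedding_2} is obtained by composing Theorem~\ref{thm:embedding} with Lemma~\ref{lem:hamming_depth} using the stretch $\beta=\bigl\lceil\alpha/(d\dmax)\bigr\rceil=\Theta(\log m/\log\log m)$, and the paper presents it as an immediate corollary. Your observation that this actually yields the tighter side length $n=O\bigl(m^{1/d}\cdot\tfrac{\log m}{\log\log m}\bigr)$ is correct; the $m^{1/(d-1)}$ in the theorem's statement is a harmless slackness (likely a typo), and noting $m^{1/d}\le m^{1/(d-1)}$ is the right way to reconcile it.
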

We are now ready to prove Theorem~\ref{th:gridbounded-domain}.
\begin{proof}[Proof of Theorem \ref{th:gridbounded-domain}]
Consider a CSP instance $I$ with primal graph $G$, with $m$ edges, maximum degree $3$, and domain size $3$.
By Theorem \ref{thm:embedding_2}, we can compute in polynomial time an embedding $f$ of $G$ into $\Hs[n,d-1]$ with $n=\left(m^{1/d-1} \cdot \frac{\log m}{\log\log m}\right)$
and depth $O(d^2)$.
By Theorem \ref{th:hammingtogrid} we can compute in polynomial time an embedding $f'$ of $\Hs[n,d-1]$ into $\Rs[n,d]$ with depth $O(d)$.
By composing $f$ and $f'$, we obtain a polynomial-time computable embedding $g$ of $G$ into $\Rs[n,d]$, with depth $O(d^3)$.
By Proposition \ref{prop:embedding-reduce} we can compute in polynomial time a CSP $I'$ that is equivalent with $I$, having primal graph $\Rs[n,d]$, and with domain size $3^{O(d^3)}$.
Note that for any $\eps>0$, we have 
\ifabstract
\begin{align*}
|V(\Rs[n,d])|^{1-1/d-\eps} &= \left(m^{1/(d-1)}\cdot \frac{\log m}{\log\log m}\right)^{d\cdot (1-1/d-\eps)}\\
 &=
m^{1-d\eps/(d-1)}\cdot \left( \frac{\log m}{\log\log m}\right)^{1-1/d-\eps}= o(m).
\end{align*}
\else
\[
|V(\Rs[n,d])|^{1-1/d-\eps} = \left(m^{1/(d-1)}\cdot \frac{\log m}{\log\log m}\right)^{d\cdot (1-1/d-\eps)}=
m^{1-d\eps/(d-1)}\cdot \left( \frac{\log m}{\log\log m}\right)^{1-1/d-\eps}= o(m).\]
\fi
It follows that if for some $\eps>0$ there exists a $2^{O(|V|^{1-1/d-\eps})}$ algorithm for CSP$({\cal R}_d)$ with domain size at most $s_d=3^{O(d^3)}$, then there exists a $2^{o(m)}$ algorithm for $I$, which by Proposition \ref{prop:csp-eth} contradicts ETH.
\end{proof}


\fi

\fi

\ifmain
\subsection{Geometric CSP problems}
\label{sec:geo-csp}
In this section, we give lower bounds for CSP problems of certain
special forms that are particularly suited for reductions to
$d$-dimensional geometric problems. First, we consider CSPs where
every constraint satisfies the following property: we say that a
constraint $\langle (u,v),R\rangle$ is a {\em projection from $u$ to
  $v$} if for every $x\in D$, there is at most one $y\in D$ such that
$(x,y)\in R$ (a projection from $v$ to $u$ is defined analogously).
In other words, the relation $R$ is of the form $\{(x,y)\mid y=p(x)\}$
for some function $p$; we call this function $p$ the {\em projection
  function} associated with the projection constraint.  A CSP instance
is a {\em projection CSP} if every binary constraint is a projection;
in addition to the binary constraint, the instance may contain any
number of unary constraints.  Note that an edge $uv$ in the
undirected primal graph of a projection CSP does not tell us whether the corresponding
constraint is a projection from $u$ to $v$, or a projection from $v$
to $u$.

\begin{proposition}\appstar\label{prop:project}
  There is a polynomial time algorithm that, given a CSP instance $I$
  on $\Rs[n,d]$ and domain $D$, creates an equivalent projection CSP
  instance $I'$ on $\Rs[2n,d]$ and domain $D^2$.
\end{proposition}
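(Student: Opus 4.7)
The plan is to implement a subdivision-style gadget construction: place every original variable at an all-odd point of the larger grid, route every original binary constraint through a fresh ``midpoint'' variable that stores a consistent pair of values, and fill the rest of $\Rs[2n,d]$ with variables pinned to a single value. Concretely, first I define $\sigma\colon[n]^d\to[2n]^d$ by $\sigma(a_1,\ldots,a_d)=(2a_1-1,\ldots,2a_d-1)$. Every edge $\{u,u+\be_i\}$ of $\Rs[n,d]$ then corresponds to a length-two path in $\Rs[2n,d]$ through the unique vertex with exactly one even coordinate, in position $i$. This classifies $V(\Rs[2n,d])$ into three types: $A$-vertices (all coordinates odd, i.e.\ the image of $\sigma$); $B$-vertices, or \emph{midpoints}, with exactly one even coordinate lying in $\{2,4,\ldots,2n-2\}$, which correspond bijectively to edges of $\Rs[n,d]$; and $C$-vertices, covering every remaining vertex (those with at least two even coordinates, or a single even coordinate equal to $2n$).

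Next I let every variable of $I'$ have domain $D\times D$ and fix a reference value $d_0\in D$. The unary constraints are: at every $A$-vertex $\sigma(u)$, require the second coordinate to equal $d_0$ and apply the unary constraint of $I$ on $u$ to the first coordinate; at every $B$-vertex $w_{uv}$, restrict the value to the binary relation $R_{uv}\subseteq D\times D$ that $I$ imposes on the edge $uv$; at every $C$-vertex, force the value $(d_0,d_0)$. For binary constraints I install one on every edge of $\Rs[2n,d]$. On an $A$--$B$ edge between $\sigma(u)$ and $w_{uv}$ I use $\{((x,d_0),(x,y)) : (x,y)\in R_{uv}\}$ on the $u$-side and its mirror $\{((y,d_0),(x,y)) : (x,y)\in R_{uv}\}$ on the $v$-side; each is a projection from the midpoint to the $A$-vertex, since the midpoint value determines the $A$-value. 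On every edge incident to a $C$-vertex (types $A$--$C$, $B$--$C$, $C$--$C$), I take the relation that pins the $C$-side to $(d_0,d_0)$ and leaves the other side free within its domain; since the $C$-coordinate is uniquely determined, every such relation is a projection toward the $C$ side.

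Equivalence is then routine in both directions: a solution $f$ of $I$ lifts to the assignment $\sigma(u)\mapsto(f(u),d_0)$, $w_{uv}\mapsto(f(u),f(v))$, $C$-vertex $\mapsto(d_0,d_0)$, which satisfies every unary and binary constraint by construction; conversely, given any solution $g'$ of $I'$, reading off the first coordinate of $g'(\sigma(u))$ yields an assignment $g\colon V(\Rs[n,d])\to D$, and the two $A$--$B$ projections incident to each midpoint $w_{uv}$ combined with the unary restriction $g'(w_{uv})\in R_{uv}$ force $(g(u),g(v))\in R_{uv}$, so $g$ solves $I$. The main obstacle I foresee is ensuring that \emph{every} edge of $\Rs[2n,d]$ carries a genuine projection constraint, including the many ``junk'' edges between midpoints and dummy vertices and between pairs of dummy vertices; this is handled uniformly by pinning every $C$-vertex to a single value via unary constraints, so that any binary relation touching such a vertex is automatically a projection in the direction of the pinned endpoint, with no further case analysis needed.
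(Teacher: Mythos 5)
Your construction is correct and matches the paper's proof in all essential respects: both reductions double the grid, place each original variable at a vertex of fixed parity, interpret the vertex midway between two original variables as a fresh ``edge'' variable carrying a unary copy of the binary relation, link it to the two endpoints with projections \emph{from} the midpoint, and pin every remaining vertex to a constant value so that all leftover edges carry trivial projections. The only differences are cosmetic (you shift by $\mathbf{1}$ so that original variables land on all-odd rather than all-even coordinates, and you give an explicit $A/B/C$ classification of the vertices, which the paper handles implicitly via its ``important variable'' terminology); the equivalence argument in both directions is the same.
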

\fi

\ifmove

\ifappendix\section{Proofs from Section~\ref{sec:constr-satisf-probl}}\begin{proof}[Proof (of Proposition~\ref{prop:project})]\else
\begin{proof}\fi
 We construct the instance
  $I'$ the following way. For every $\ba\in [n]^d$, if there is a
  unary constraint on $\ba$ in $I$, then we introduce into $I'$ the same unary
  constraint on $2\ba$ (that is, on the vector obtained from $\ba$ by multiplying each coordinate by 2). Consider now a binary constraint $\langle
  \ba,\ba',R\rangle$ where $\ba=(a_1,\dots,a_d)$, $\ba'=\ba+\be_i$, and $R\subseteq
  D^2$. Let
  $\bb=\ba+\ba'=(2a_1,\dots,2a_{i-1},2a_i+1,2a_{i+1},\dots,2a_d)$; observe that
  $\bb$ is adjacent to both $2\ba$ and $2\ba'$ in $\Rs[2n,d]$. We
  introduce the unary constraint $\langle (\bb),R \rangle$ to $I'$
  (note that the domain of $I'$ is $D^2$, thus $R$ can be interpreted as a valid unary
  constraint relation in $I'$). Moreover, we introduce the binary constraints
\begin{gather*}
\left\langle ( 2\ba,\bb) , \{ ((x,1), (x,y) \mid x,y\in D \}\right\rangle\\
\left\langle ( 2\ba',\bb) , \{ ((y,1), (x,y) \mid x,y\in D \}\right\rangle.
\end{gather*}
That is, the first constraint ensures that the first coordinate of the
value of $2\ba$ is the same as the first coordinate of the value of
$\bb$ and the second coordinate of $2\ba$ is forced to the dummy value
$1$. Similarly, the first coordinate of $2\ba'$ is the same as the
second coordinate of $\bb$. This ensures that if $\ba$ receives the
value $(x,1)$ and $\ba'$ receives the value $(y,1)$, then $\bb$
receives $(x,y)$; taking into account the unary constraint on $\bb$,
it follows that $(x,y)\in R$. Effectively, we have implemented the
binary constraint relation $R$ on (the first coordinates of) $2\ba$ and $2\ba'$ using two
projections and a unary relation. 

Let the constraints of $I'$ introduced above be called the
``important'' constraints of $I'$ and let the variables in these
constraints be the important variables of $I'$.  For every edge $uv$ of
$\Rs[2n,d]$ for which we do not yet have a corresponding constraint in
$I'$, we proceed as follows. At most one of $u$ and $v$ can be an
important variable (as we have defined above an important constraint between any pair of adjacent important variables); suppose that $v$ is not an important
variable. Then we introduce the constraint
\[
\langle (u,v),  \{ ((x,y),(1,1))\mid (x,y)\in D^2 \} \rangle.
\]
Clearly, this constraint is a projection from $u$ to $v$. Therefore, the resulting instance $I'$ is a projection CSP with primal graph $\Rs[2n,d]$.

We claim that $I$ has a satisfying assignment if and only if $I'$
has. Let $f'$ be a satisfying assignment of $I'$. By our discussion
above, if we define $f(\ba)=f'(2\ba)$ for every $\ba\in [n]^d$, then
$f$ is a satisfying assignment of $I$. Conversely, suppose that $f$ is
a satisfying assignment of $I$. We define a satisfying assignment of
$I'$ as follows. For every $\ba\in [n]^d$, we set
$f'(2\ba)=f(\ba)$. For every constraint $\langle (\ba,\ba'),R\rangle$
of $I$ with $\ba'=\ba+\be_i$, we set
$f(\ba+\ba')=(f(\ba),f(\ba'))$. This way, we assigned a value to every
important variable; let us assign $(1,1)$ to the variables that are
not important. It is easy to verify that this gives a satisfying
assignment of $I'$.
\end{proof}
\fi

\ifmain
A {\em $d$-dimensional geometric $\le$-CSP} is a CSP of the following
form.  The set $V$ of variables is a subset of the vertices of
$\Rs[n,d]$ for some $n$ and the primal graph is an induced subgraph of
$\Rs[n,d]$ (that is, if two variables are adjacent in $\Rs[n,d]$, then
there is a binary constraint on them).  The domain is $[\dommax]^d$
for some integer $\dommax\ge 1$. The instance can contain arbitrary
unary constraints, but the binary constraints are of a special form.
A {\em geometric constraint} is a constraint $\langle
(\ba,\ba'),R\rangle$ is with $\ba'=\ba+\be_i$ such that
\ifabstract
$R=\{ ((x_1,\dots,x_d),(y_1,\dots,y_d))\mid \text{$x_i\le y_i$} \}$.
\else
\[
R=\{ ((x_1,\dots,x_d),(y_1,\dots,y_d))\mid \text{$x_i\le y_i$} \}.
\]
\fi
In other words, if $\ba$ and $\ba'$ are adjacent with $\ba'$ being
larger by one in the $i$-th coordinate, then the $i$-th coordinate of
the value of $\ba'$ should be at least as large as the $i$-th
coordinate of the value of $\ba$.
 Note that a $d$-dimensional geometric CSP
is fully defined by specifying the set of variables and the unary
constraints: the binary constraints of the
instance are then determined by the definition.

\begin{proposition}\appstar\label{prop:csp-to-geo}
  For every $d\ge 2$, given a projection CSP instance $I$ on $\Rs[n,d]$
  and domain $D$, one can construct in polynomial time an equivalent
  $d$-dimensional geometric $\le$-CSP instance $I'$ with domain $[2|D|+1]^d$ and $O(n^d)$
  variables.
\end{proposition}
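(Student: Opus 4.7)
The plan is to construct $I'$ on a slightly enlarged grid $\Rs[cn, d]$ for some small constant $c$, so that the total variable count is $(cn)^d = O(n^d)$. The image of each original variable $\ba \in V(\Rs[n, d])$ will be placed at the sparse position $c\ba$, and auxiliary variables will be added in the gaps to implement a gadget for each projection constraint. Since the definition of a $d$-dimensional geometric $\le$-CSP allows $V$ to be a proper subset of $V(\Rs[cn, d])$, I can activate exactly the positions I need and drop the rest, preventing unintended binary constraints between images of distinct original variables. I will fix an injective encoding $\phi : D \hookrightarrow [2|D|+1]^d$ and set the unary constraint at each image $c\ba$ to $\{\phi(v) : v\in D\}$ intersected with the original unary constraint at $\ba$ (if any).

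The heart of the proof is the gadget for each projection constraint $\langle (\ba,\ba'), R\rangle$ with $\ba' = \ba + \be_i$ and $R = \{(x, p(x)) : x \in \mathrm{dom}(p)\}$. In the region surrounding $c\ba$ and $c\ba'$ I will place $O(1)$ auxiliary variables forming a small gadget in the 2D plane spanned by direction $i$ and some other direction $j \ne i$; such a $j$ exists because $d \ge 2$, and exploiting this second dimension is essential, since along a single 1D line the $\le$ constraints in one coordinate can only express monotone binary relations, whereas $p$ is in general non-monotone. The unary constraints on the auxiliaries will enumerate the pairs $(x, p(x))$ in the graph of $p$, each encoded as a specific tuple in $[2|D|+1]^d$, and the $\le$ constraints on the incident edges of the gadget, combining direction-$i$ and direction-$j$ monotonicities, will be arranged so that the only assignments consistent with the gadget are those where $f(c\ba) = \phi(x)$ and $f(c\ba') = \phi(p(x))$ for a common $x$. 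The factor of $2$ in $2|D|+1$ is consumed to split the coordinate range into a ``low'' half $\{1, \ldots, |D|\}$ used for the input encoding at $c\ba$ and a ``high'' half $\{|D|+2, \ldots, 2|D|+1\}$ used for the output encoding at $c\ba'$, with a separator $|D|+1$ in the middle; this ensures that the raw $\le$ constraint along direction $i$ from input to output is automatically satisfiable, while the actual functional coupling $y = p(x)$ is enforced indirectly through the direction-$j$ edges of the gadget.

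Once this gadget is in place, the equivalence of $I$ and $I'$ is routine. A satisfying assignment $g$ of $I$ extends to $I'$ by setting $f(c\ba) = \phi(g(\ba))$ and filling each gadget's auxiliaries with the tuple associated with the pair $(g(\ba), g(\ba'))$; conversely, any satisfying $f$ of $I'$ yields $g(\ba) = \phi^{-1}(f(c\ba))$, which satisfies $I$ by construction of the gadgets. The variable count is $|V'| \le (cn)^d = O(n^d)$ for constant $c$ and $d$, and the whole construction is clearly polynomial time.

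The principal obstacle will be the gadget design itself: pinning down an explicit $O(1)$-sized configuration of auxiliary variables in the $(i,j)$-plane together with unary constraints drawn from $[2|D|+1]^d$, such that the only assignments compatible with $\le$-monotonicity in directions $i$ and $j$ are exactly those realizing the graph of $p$. This is a constant-sized combinatorial verification, but it is precisely where the hypothesis $d \ge 2$ and the exact domain size $[2|D|+1]$ are used.
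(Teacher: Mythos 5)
Your overall outline matches the paper's: stretch the grid by a constant factor so that $|V'|=O(n^d)$, encode the CSP value in the coordinates of the new $[2|D|+1]^d$-domain, and build constant-size gadgets that exploit a second grid dimension (hence $d\ge 2$) to express each projection constraint via $\le$-inequalities. However, you explicitly defer the actual gadget design, and that is where essentially the entire content of the proposition resides. Two concrete gaps remain.

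First, you never explain how chains of \emph{one-directional} $\le$-constraints can force the \emph{equality} $f(\ba')=p(f(\ba))$. The paper's mechanism is a closed cycle of twelve variables around each $\ba$, occupying a $4\times 4$ block in the first two coordinates; the unary constraints on these twelve variables write down, in different coordinates, both the variable's value $x$ and its negation $-x$ (the domain is really $\{-\dommax,\ldots,\dommax\}^d$, shifted afterwards to $[2\dommax+1]^d$). Going around the cycle, the geometric $\le$-inequalities chain up to $x\le\cdots\le x$, so every inequality is forced to equality; the negation is precisely what reverses the ordering halfway through and closes the cycle. Your ``low half for input / high half for output'' split of $[2|D|+1]$ is a partition, not an order reversal, so it does not play the same role, and the statement that the functional coupling is ``enforced indirectly through the direction-$j$ edges'' leaves the key step entirely unsaid. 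Without some explicit cycle-of-inequalities with an order reversal built in, the gadget does not obviously exist.

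Second, you place a \emph{single} new variable at $c\ba$ for each original variable. But $\ba$ may participate in up to $2d$ projection constraints, and enforcing each one requires a bidirectional link (both a $\le$-chain and a $\ge$-chain, to extract an equality) out of $c\ba$ into the corresponding gadget. A vertex of $\Rs[cn,d]$ has only $2d$ neighbors, so these chains must share edges and will interfere. This is exactly why the paper replaces each variable by the twelve-vertex cycle: the cycle has enough boundary vertices to attach two separate connection variables for each of the up to $2d$ incident constraints, and the cycle argument guarantees all of them see the same value $x$. Without an analogous device, your single-variable plan does not appear to be carriable out even in $d=2$. In short, the reduction skeleton and the variable count are right, but the explicit gadget — the cycle-of-inequalities trick and the replacement of each variable by a small cycle — is the actual proof and is missing from your proposal.
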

\fi
\ifmove
\ifappendix\begin{proof}[Proof (of Proposition~\ref{prop:csp-to-geo})]\else
\begin{proof}\fi
  For convenience of notation, we assume that the domain of $I$ is
  $D=[\dommax]$ and we construct a CSP $I'$ with domain
  $\{-\dommax,\dots,\dommax\}^d$. Then by adding $\dommax+1$ to every
  coordinate, we can get an equivalent instance with domain
  $[2\dommax+1]^d$. 
 
For every variable $\ba=(a_1,\dots,a_d)\in [n]^d$  of $I$, we define
\begin{equation}
\ba^*=(5a_1,5a_2,2a_3,\dots,2a_d)\label{eq:stardef}
\end{equation}
and
introduce into $I'$ the 12 variables listed in Figure~\ref{fig:12var}, forming a cycle of length 12 in the first two dimensions.

Let $G$ be the primal graph of $I$. As every constraint of $I$ is a
projection constraint, we can obtain an orientation $\dirG$ of $G$
such that if $\dirG$ has a directed edge from $u$ to $v$, then $I$ has
a projection constraint from $u$ to $v$ (if the constraint on $u$ and
$v$ happens to be both a projection from $u$ to $v$ and a projection
from $v$ to $u$, then we may choose the orientation of the edge
between $u$ and $v$ arbitrarily).  Suppose that $\dirG$ has a directed edge
from $\ba$ to $\ba+\be_1$, which means that there is a projection
constraint from $\ba$ to $\ba+\be_1$; let function $p:D\to D$ the be
the projection function associated with the constraint.  Then we define
$q^+_1(\ba,x)=p(x)$ for every $x\in D$. If there is no such directed
edge from $\ba$ to $\ba+\be_1$, then we define $q^+_1(\ba,x)=x$.  We
define $q^-_1(\ba,x)$ similarly, depending on whether there is a
directed edge from $\ba$ to $\ba-\be_1$: if there is directed edge
from $\ba$ to $\ba-\be_1$ and $p$ is the projection function associated with
the projection constraint, then $q^-_1(\ba,x)=p(x)$; otherwise, we let
$q^-_1(\ba,x)=x$.  Note that an assignment $f$ satisfies the
constraint on $\ba$ and $\ba+\be_1$ if and only if
\begin{equation}\label{eq:projcsp}
q^+_1(\ba,f(\ba))=q^-_1(\ba+\be_1,f(\ba+\be_1))
\end{equation}
holds.  For example, in the case when there is an edge from $\ba$ to
$\ba+\be_1$ in $\dirG$ and $p$ is the projection function of the
corresponding projection constraint, then we defined
$q^+_1(\ba,f(\ba))=p(f(\ba))$ and
$q^-_1(\ba+\be_1,f(\ba+\be_1))=f(\ba+\be_1)$, and the projection
constraint is satisfied if and only if $p(f(\ba))=f(\ba+\be_1)$.
Similarly, in the case when there is an edge from $\ba+\ba_1$ to $\ba$
in $\dirG$ and $p$ is the projection function of the corresponding
projection constraint, then we defined $q^+_1(\ba,f(\ba))=f(\ba)$ and
$q^-_1(\ba+\be_1,f(\ba+\be_1))=p(f(\ba+\be_1))$, and the projection
constraint is satisfied if and only if $f(\ba)=p(f(\ba+\be_1))$.
Intuitively, the gadget representing variable $\ba$ should propagate
the value $q^+_1(\ba,f(\ba))$ towards the gadget representing
$\ba+\be_1$, which in turn should propagate the value
$q^-_1(\ba+\be_1,f(\ba+\be_1))$ towards the gadget of $\ba$. Then the
validity of the constraint between $\ba$ and $\ba+\be_1$ can be tested
by testing whether these two values are equal.

  The
functions $q^+_2$, $q^-_2$ are defined similarly based on the
existence of directed edges between $\ba$ and $\ba+\be_2$, and between
$\ba$ and $\ba-\be_2$ in $\dirG$.
For coordinates $3\le i \le d$, we define the analogous functions in a slightly different way. Let us define the following functions:
\begin{gather*}
m^\top_i(\ba)=\begin{cases}
\ba+\be_i & \text{if the $i$-th coordinate of $\ba$ is odd,}\\
\ba-\be_i & \text{if the $i$-th coordinate of $\ba$ is even.}
\end{cases}\\
m^\bot_i(\ba)=\begin{cases}
\ba-\be_i & \text{if the $i$-th coordinate of $\ba$ is odd,}\\
\ba+\be_i & \text{if the $i$-th coordinate of $\ba$ is even.}
\end{cases}
\end{gather*}
Note that $m^\top_i(m^\top_i(\ba))=m^\bot_i(m^\bot_i(\ba))=\ba$ for
every vector $\ba$.  Suppose that $\dirG$ has a directed edge from
$\ba$ to $m^\top_i(\ba)$; let $p$ be the projection function
associated with the projection constraint from $\ba$ to
$m^\top_i(\ba)$. Then we define $q^\top_i(\ba,x)=p(x)$. If there is
no such directed edge, then we define $q^\top_i(\ba,x)=x$. Again, we get that 
the constraint on $\ba$ and $m^\top_i(\ba)$ is satisfied if only if
\begin{equation}\label{eq:projcsp2}
q^\top_i(\ba,f(\ba))=q^\top_i(m^\top_i(\ba),f(m^\top_i(\ba)))
\end{equation}
holds.  We define $q^\bot_i(\ba,x)$ similarly, by considering the
directed edge from $\ba$ to $m^\bot_i(\ba)$.

We may assume that every variable $\ba$ of $I$ has a unary constraint
$\langle (\ba) ,R_\ba\rangle$ on it. We introduce a corresponding
unary constraint on the 12 variables created above the following way.
For every variable $\ba$ and $x\in R_\ba$, we add the tuple shown in Figure~\ref{fig:12var} to the unary constraint of each of the 12 variables introduced above.

We represent the binary constraints of $I$ the following way.  For
every pair $\ba$ and $\ba+\be_i$ of adjacent variables of $I$, we
introduce two new variables into $I'$.  Let $\ba^*$ defined as above.
\begin{itemize}
\item If $i=1$, then we introduce the variables $\ba^*+\be_1-\be_2$ and $\ba^*+\be_1-2\be_2$.
\item If $i=2$, then we introduce the variables $\ba^*-2\be_1+\be_2$ and $\ba^*-\be_1+\be_2$.
\item If $i>2$ and $\ba+\be_i=m^\top_i(\ba)$, then we introduce the variables $\ba^*-3\be_1+\be_i$ and $\ba^*+\be_i$.
\item If $i>2$ and $\ba+\be_i=m^\bot_i(\ba)$, then we introduce the variables $\ba^*-3\be_1-3\be_2+\be_i$ and $\ba^*-3\be_2+\be_i$.
\end{itemize}
These newly introduced variables do not have any unary constraints on
them, that is, they can receive any value from the domain.

\begin{figure}
\newcommand{\cell}[5]{
\fbox{\parbox{3.3cm}{
\begin{center}$#1$
\end{center}

\noindent
\begin{multline*}
(#2, #3,
 #4,\\
\dots, #5)
\end{multline*}
}}}
\noindent
{\small
\begin{tabular}{cccc}
\cell{\ba^*-3\be_1}{x}{x}{q^\top_3(\ba,x)}{q^\top_d(x)}&
\cell{\ba^*-2\be_1}{x}{q^+_2(\ba,x)}{0}{0}&
\cell{\ba^*-\be_1}{x}{-q^+_2(\ba,x)}{0}{0}&
\cell{\ba^*}{x}{-x}{-q^\top_3(\ba,x)}{-q^\top_d(x)}\\[17mm]
\cell{\ba^*-3\be_1-\be_2}{q^-_1(\ba,x)}{x}{0}{0}&&&
\cell{\ba^*-\be_2}{q^+_1(\ba,x)}{-x}{0}{0}\\[17mm]
\cell{\ba^*-3\be_1-2\be_2}{-q^-_1(\ba,x)}{x}{0}{0}&&&
\cell{\ba^*-2\be_2}{-q^+_1(\ba,x)}{-x}{0}{0}\\[17mm]
\cell{\ba^*-3\be_1-3\be_2}{-x}{x}{q^\bot_3(\ba,x)}{q^\bot_d(x)}&
\cell{\ba^*-2\be_1-3\be_2}{-x}{q^-_2(\ba,x)}{0}{0}&
\cell{\ba^*-\be_1-3\be_2}{-x}{-q^-_2(\ba,x)}{0}{0}&
\cell{\ba^*-3\be_2}{-x}{-x}{-q^\bot_3(\ba,x)}{-q^\bot_d(x)}
\end{tabular}
}
\caption{The 12 variables introduced into $I'$ and the form of the
  tuples in the unary constraints (Proof of
  Proposition~\ref{prop:csp-to-geo}).}\label{fig:12var}
\end{figure}

Let us discuss the intuition behind the construction. Each variable of
$I$ is represented by a cycle of 12 variables of $I'$, extending in
the first two dimensions (see Figure~\ref{fig:cycles}). The unary constraints of these 12 variables
restrict the tuples appearing on the variables to have the form shown
in Figure~\ref{fig:12var}. As we shall see,  by observing the first two
coordinates of these tuples and geometric constraints between adjacent
variables in the cycle, we can conclude that the values of these
variables should be coordinated, that is, the same value $x$ should
appear in each of them.
\begin{figure}
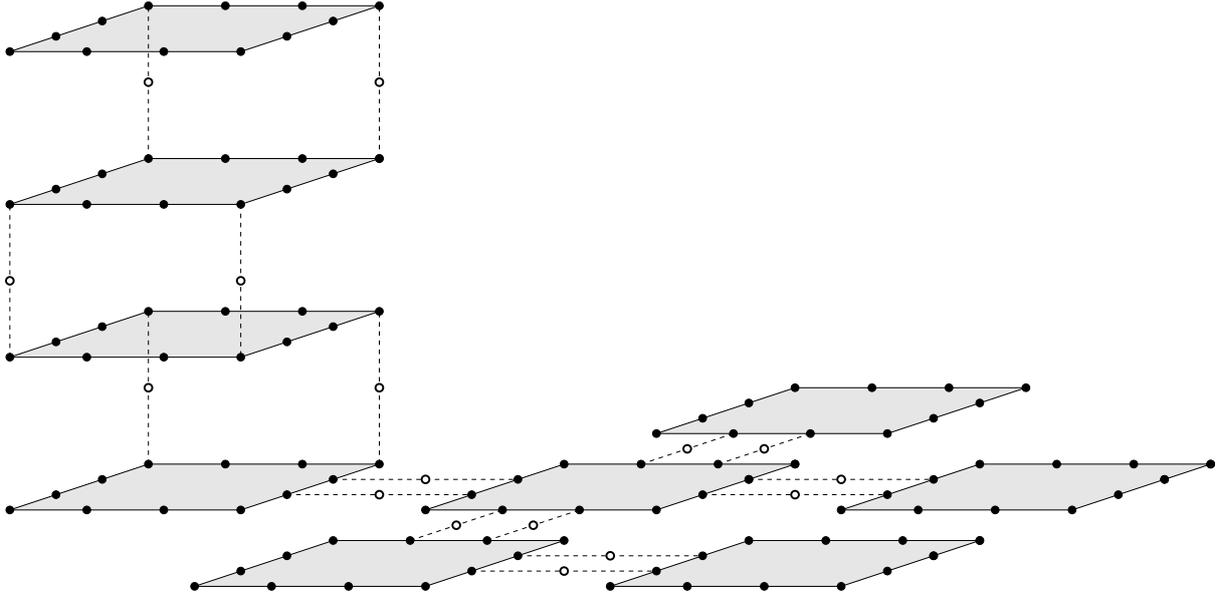

\begin{center}
{\small \svg{\linewidth}{cycles}}
\caption{A an example of the reduction in Proposition~\ref{prop:csp-to-geo} in $d=3$ dimensions. A projection CSP instance $I$ with 9 variables and 8 binary constraints is reduced to a geometric CSP instance with 9 cycles of 12 variables each (black dots represent the variables, normal lines represent the binary constraint between them) and 16 connection variables (white dots) with 32 binary constraint between the connection variables and the cycles (dashed lines).}\label{fig:cycles}
\end{center}
\end{figure}

Each binary projection constraint is
represented by connecting the two cycles with two new variables. Very
briefly, these connections work the following way. Suppose that there
is a projection constraint between $\ba$ and $\ba+\be_i$ forcing that
that $p(f(\ba))=f(\ba+\be_i)$ and let $\bb^+$ and $\bb^-$ be the two
connection variables. Then the cycle corresponding to $\ba$ has a variable
adjacent to $\bb^+$ whose value is ``shifted towards'' $\bb^+$ by
exactly $p(f(\ba))$ and there is a variable corresponding to
$\ba+\be_i$ that is adjacent to $\bb^+$ and whose value is ``shifted
away'' from $\bb^+$ by exactly $f(\ba+\be_i)$. Therefore, we can
conclude $p(f(\ba))\le f(\ba+\be_i)$. Similarly, by observing the
values of the two variables adjacent to $\bb^-$, we can conclude
$p(f(\ba))\ge f(\ba+\be_i)$ and $p(f(\ba))= f(\ba+\be_i)$ follows.

The choice of the location for the connecting variables $\bb^+$ and
$\bb^-$ merits a brief explanation. For $i=1$, the connection between
cycles representing $\ba$ and $\ba+\be_i$ fairly obvious: we connect
the right side of the cycle of $\ba$ with the left side of the cycle
of $\ba+\be_i$ (see Figure~\ref{fig:cycles}). Similarly, for $i=2$, we
connect the upper side of $\ba$ with the lower side of
$\ba+\be_i$. However, for $i\ge 3$, the connections are slightly
different. The issue is that we do not want a variable of the cycle of
$\ba$ to take part both in the connection to $\ba+\be_i$ and in the
connection to $\ba-\be_i$. Therefore, if the $i$-th coordinate of
$\ba$ is odd (that is, if $m^{\top}_i(\ba)=\ba+\be_i$), then we
connect the upper side of the cycle of $\ba$ with the upper side of
the cycle of $\ba+\be_i$; if the $i$-th coordinate of $\ba$ is even
(that is, if $m^{\bot}_i(\ba)=\ba+\be_i$), then we connect the lower
side of the cycle of $\ba$ with the lower side of the cycle of
$\ba+\be_i$.

Let $f$ be a satisfying assignment of $I$. We construct a satisfying
assignment $f'$ of $I'$ the following way. Consider a variable $\ba$
of $I$ and let $\ba^*$ defined as in \eqref{eq:stardef}. Let $x=f(\ba)$. For the 12
variables of Figure~\ref{fig:12var} corresponding to $\ba$, we set
the value of $f'$ to the vectors shown in the figure. By checking the
first two coordinates of these vectors, we can verify that every binary
geometric constraint between these 12 variables is satisfied (in
fact, with equality).

Let us extend now $f'$ to the variables introduced to represent the
binary constraints of $I$. Suppose that $\ba$ and $\ba+\be_i$ are
adjacent variables. If $i=1$, then we have introduced variables
$\ba^*+\be_1-\be_2$ and $\ba^*+\be_1-2\be_2$; we let $x=f(\ba)$ and
define $f'(\ba^*+\be_1-\be_2)=(q^+_1(\ba,x),0,\dots,0)$ and
$f'(\ba^*+\be_1-2\be_2)=(-q^+_1(\ba,x),0,\dots,0)$.  Then the first
coordinates of $f'(\ba^*-\be_2)$ and $f'(\ba^*+\be_1-\be_2)$ are the
same.  Moreover, as $f$ satisfies the constraint on $\ba$ and
$\ba+\be_1$, we have by \eqref{eq:projcsp} that
$q^+_1(\ba,f(\ba))=q^-_1(\ba+\be_1,f(\ba+\be_1))$. Therefore, the
first coordinate of $f'(\ba^*+\be_1-\be_2)$ is the same as the first
coordinate of
$f'((\ba+\be_1)^*-3\be_1-\be_2)=f'(\ba^*+2\be_1-\be_2)$. Similarly,
the first coordinate of $f'(\ba^*+\be_1-2\be_2)$ is the same as the
first coordinates of $f'(\ba^*-2\be_2)$ and
$f'((\ba+\be_1)^*-3\be_1-2\be_2)=f'(\ba^*+2\be_1-2\be_2)$. Finally, it
is clear that $f'(\ba^*+\be_1-\be_2)$ and $f'(\ba^*+\be_1-2\be_2)$ are
compatible with each other, as the second coordinate is 0 in both of
them.

For $i=2$, we extend $f'$ by defining
$f'(\ba^*-2\be_1+\be_2)=(0,q^+_2(\ba,x),0,\dots,0)$ and
$f'(\ba^*-\be_1+\be_2)=(0,-q^+_2(\ba,x),0,\dots,0)$; an argument similar to the case $i=1$
shows that these values are compatible with the assignments of $f'$
defined so far. For $i\ge 3$ and $\ba+\be_i=m^\top_i(\ba)$, we set
\begin{alignat*}{1}
f'(\ba^*-3\be_1+\be_i)&=(\overbrace{0,\dots, 0}^{i-1},q^\top_i(\ba,x),\overbrace{0,\dots,0}^{d-i}),\\
f'(\ba^*+\be_i)&=(\overbrace{0,\dots,0}^{i-1}, -q^\top_i(\ba,x),\overbrace{0,\dots,0}^{d-i})
\end{alignat*}
(that is, only the $i$-th coordinates of these
vectors are nonzero).  As $f$ satisfies the constraint on $\ba$ and $\ba+\be_i$,  we have by \eqref{eq:projcsp2} that
$q^\top_i(\ba,f(\ba))=q^\top_i(\ba+\be_1,f(\ba+\be_1))$. Then we can argue
that $f'$ on $\ba^*-3\be_1+\be_i$ and $\ba^*+\be_i$ are compatible with the
assignments on their neighbors. The argument is similar for the case
$i\ge 3$ and $\ba+\be_i=m^\bot_i(\ba)$. Therefore, we have shown that a
satisfying assignment $f'$ for $I'$ exists.

Assume now that $f'$ is a satisfying assignment for $I'$. Let $\ba$ be a variable of $I$ and consider the 12 corresponding variables in Figure~\ref{fig:12var}. 
The value of $f'$ is of the form shown in Figure~\ref{fig:12var} (as these are the only vectors allowed by the unary constraints on these variables), but the value of $x$ can be different for each of these 12 variables. Let $x_{\ba^*}$, $x_{\ba^*-\be_1}$, etc.~be the values of $x$ appearing on the variables. As $f'$ satisfies the geometric constraints between adjacent variables, we have the following inequalities:
\begin{alignat*}{4}
x_{\ba^*-3\be_1}&\le 
x_{\ba^*-2\be_1}&&\le 
x_{\ba^*-\be_1}&&\le 
x_{\ba^*} && \text{  (upper side)}\\
-x_{\ba^*-3\be_2}& \le
-x_{\ba^*-2\be_2}&& \le
-x_{\ba^*-\be_2}&& \le
-x_{\ba^*} &&\text{  (right side)}\\
-x_{\ba^*-3\be_1-3\be_2}&\le 
-x_{\ba^*-2\be_1-3\be_2}&&\le 
-x_{\ba^*-\be_1-3\be_2}&&\le
-x_{\ba^*-3\be_2}&&\text{  (lower side)}\\
x_{\ba^*-3\be_1-3\be_2}&\le
x_{\ba^*-3\be_1-2\be_2}&&\le
x_{\ba^*-3\be_1-\be_2}&&\le
x_{\ba^*-3\be_1}&&\text{  (left side)}
\end{alignat*}

Putting together, we get a cycle of inequalities and hence all the
inequalities above have to be equalities. This means that there is a
common value $x$ such that $f'$ is defined with this value on all 12
of the variables as in Figure~\ref{fig:12var}; let us define $f(\ba)$
to be this value $x$. Clearly, $f$ satisfies the unary constraints on
the variables, as otherwise, say,
$f'(\ba^*)=(x,-x,-q^\top_3(\ba,x),\dots,-q^\top_d(\ba,x))$ would not
satisfy the unary constraints of $I'$ on $\ba^*$. Consider now a binary constraint
on $\ba$ and $\ba+\be_1$ in $I$. The first coordinate of
$f'(\ba^*-\be_2)$ is $q^+_1(\ba,f(\ba))$ and the first coordinate of
$f'((\ba+\be_1)^*-3\be_1-\be_2)=f'(\ba^*+2\be_1-\be_2)$ is
$q^-_1((\ba+\be_1)^*,f(\ba+\be_i))$. If the first coordinate of
$f'(\ba^*+\be_1-\be_2)$ is $x'$, then the geometric constraints ensure
that $q^+_1(\ba^*,f(\ba))\le x' \le
q^-_1((\ba+\be_1)^*,f(\ba+\be_i))$. Similarly, if $x''$ is the first coordinate of
$f'(\ba^*+\be_1-2\be_2)$, then comparing it with the first coordinates of 
$f'(\ba^*-2\be_2)$ and
$f'((\ba+\be_1)^*-3\be_1-2\be_2)$, we get that $-q^+_1(\ba^*,f(\ba)) \le x'' 
\le -q^-_1((\ba+\be_1)^*,f(\ba+\be_i))$. Thus we get
$q^+_1(\ba^*,f(\ba))= q^-_1((\ba+\be_1)^*,f(\ba+\be_i))$, which means,
by \eqref{eq:projcsp}, that $f$ satisfies the binary constraint on
$\ba$ and $\ba+\be_1$. In a similar way, for every $1\le i \le d$ and
binary constraint on $\ba$ and $\ba+\be_i$, we can show that $f$
satisfies this constraint by looking at the two variables of $I'$ that
connect the 12 variables corresponding to $\ba$ with the 12 variables
corresponding to $\ba+\be_i$.
\end{proof}
\fi

\ifmain
Theorem~\ref{th:gridnolog} and Propositions~\ref{prop:project} and \ref{prop:csp-to-geo} imply the following lower bound on geometric CSPs:
\begin{theorem}\label{th:geomcspunbounded}
If for some fixed $d\ge 1$, there is an\\ $f(|V|)n^{o(|V|^{1-1/d})}$ time algorithm for $d$-dimensional geometric $\le$-CSP for some function $f$, then ETH fails.
\end{theorem}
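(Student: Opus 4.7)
The plan is to establish Theorem~\ref{th:geomcspunbounded} by chaining the two reductions of Propositions~\ref{prop:project} and \ref{prop:csp-to-geo} with the general grid lower bound of Theorem~\ref{th:gridnolog}. Concretely, suppose for contradiction that an algorithm $\mathbb{A}$ solves $d$-dimensional geometric $\le$-CSP in time $f(|V|)\,n^{o(|V|^{1-1/d})}$ for some function $f$, where $n$ here denotes the encoding size of the instance. Starting from a binary CSP instance $I$ on $\Rs[n,d]$ with $|V|=n^d$ variables and arbitrary domain $D$, I would first invoke Proposition~\ref{prop:project} to produce an equivalent projection CSP $I^{(1)}$ on $\Rs[2n,d]$ with domain $D^2$, and then invoke Proposition~\ref{prop:csp-to-geo} on $I^{(1)}$ to produce an equivalent geometric $\le$-CSP instance $I^{(2)}$ with $O((2n)^d)=O(n^d)$ variables and domain $[2|D|^2+1]^d$.

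Both reductions run in polynomial time and, crucially for fixed $d$, change the number of variables only by a multiplicative constant: $|V(I^{(2)})|=\Theta(|V|)$. Moreover, the encoding size of $I^{(2)}$ is polynomial in the encoding size $|I|$ of $I$, since the domain of $I^{(2)}$ has size $(2|D|^2+1)^d=\textup{poly}(|D|)$ and $I^{(2)}$ has $O(|V|)$ variables (and hence at most $O(d|V|)$ binary constraints). In particular, $|V(I^{(2)})|^{1-1/d}=\Theta(|V|^{1-1/d})$.

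Running $\mathbb{A}$ on $I^{(2)}$ therefore decides $I$ in total time
\[
f\bigl(|V(I^{(2)})|\bigr)\cdot |I^{(2)}|^{o\bigl(|V(I^{(2)})|^{1-1/d}\bigr)} \;\le\; f'(|V|)\cdot |I|^{o(|V|^{1-1/d})}
\]
for some function $f'$, which directly contradicts Theorem~\ref{th:gridnolog}. This would refute ETH.

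Since essentially all of the structural work is already packaged into the two preceding propositions, the proof itself amounts to this reduction pipeline together with the parameter bookkeeping. The only point that requires care is confirming that the constant-factor blow-ups (in the number of variables from $n^d$ to $O((2n)^d)$, and in the domain from $D$ to $D^2$ and then to $[2|D|^2+1]^d$) do not inflate the exponent of $|V|^{1-1/d}$; this is immediate for fixed $d$, and is exactly why Propositions~\ref{prop:project} and \ref{prop:csp-to-geo} were stated so as to preserve the underlying grid structure up to scale.
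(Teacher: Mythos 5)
Your proposal is correct and coincides with the paper's own (implicit) argument: the paper states Theorem~\ref{th:geomcspunbounded} as an immediate consequence of Theorem~\ref{th:gridnolog} chained through Propositions~\ref{prop:project} and \ref{prop:csp-to-geo}, which is precisely the reduction pipeline and parameter bookkeeping you carry out. The only cosmetic caveat is that all three ingredients are stated for $d\ge 2$, so the ``$d\ge 1$'' in the theorem statement should be read as $d\ge 2$ (the $d=1$ case is degenerate and not addressed by the paper either).
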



\fi


\ifmain
\section{Packing problems}
\label{sec:packing}

In this section, we prove the lower bounds on packing $d$-dimensional
unit balls and cubes. These results can be obtained
quite easily from the lower bounds for geometric CSP problems proved
in Section~\ref{sec:geo-csp}. For simplicity of notation, we state the
results for open balls and cubes, but of course the same bounds hold
for closed balls and cubes.
\begin{theorem}
If for some fixed $d\ge 2$, there is an\\ $f(k)n^{o(k^{1-1/d})}$ time algorithm for finding $k$ pairwise nonintersecting $d$\-/dimensional open unit  balls in a collection of $n$ balls, then ETH fails.
\end{theorem}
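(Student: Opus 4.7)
I would reduce from $d$-dimensional geometric $\le$-CSP, whose ETH-hardness is Theorem~\ref{th:geomcspunbounded}. Given such an instance with variable set $V\subseteq [n]^d$ and domain $[\dommax]^d$, I build a set $\mathcal{B}$ of open unit balls in $\mathbb{R}^d$ such that the instance is satisfiable iff $\mathcal{B}$ contains $k:=|V|$ pairwise nonintersecting balls. For every $\ba\in V$ and every value $\xx=(x_1,\dots,x_d)\in[\dommax]^d$ permitted by the unary constraint on $\ba$, place a unit ball centered at $c(\ba,\xx):=2\ba+\gamma\xx$, where $\gamma$ is a sufficiently small constant depending only on $d$ and $\dommax$ (e.g., $\gamma=1/(10\,d\,\dommax)$ will do). Call the set of balls sharing a given $\ba$ the \emph{gadget} of $\ba$; note that $|\mathcal{B}|\le |V|\cdot\dommax^d$.

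\textbf{Correctness.} Open unit balls intersect iff their centers are at Euclidean distance strictly less than $2$. Three cases cover all pairs of balls. \emph{Intra-gadget:} two balls $c(\ba,\xx),c(\ba,\xx')$ with $\xx\ne\xx'$ are at distance $\gamma\|\xx-\xx'\|\le\gamma\sqrt{d}\,\dommax<2$, so they intersect; consequently, any $|V|$ pairwise nonintersecting balls must use exactly one ball per gadget, defining a map $f\colon V\to[\dommax]^d$ that automatically respects every unary constraint. \emph{Adjacent gadgets} ($\ba'=\ba+\be_i$): the squared distance between the centers $c(\ba,\xx)$ and $c(\ba',\yy)$ equals
\[
\bigl(2+\gamma(y_i-x_i)\bigr)^2+\gamma^2\sum_{j\ne i}(y_j-x_j)^2 \;=\; 4+4\gamma(y_i-x_i)+O(\gamma^2),
\]
which, for $\gamma$ small enough, is strictly less than $4$ iff $y_i-x_i<0$, i.e.\ iff $x_i>y_i$, i.e.\ iff the geometric $\le$-constraint between $\ba$ and $\ba'$ is violated. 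Hence the two chosen balls are disjoint iff the constraint holds. \emph{Non-adjacent gadgets:} if some coordinate of $\ba'-\ba$ has absolute value at least $2$, the centers differ by at least $4-\gamma\dommax>2$ in that coordinate; otherwise $\ba'-\ba$ has at least two coordinates of absolute value $1$, giving squared distance at least $2(2-\gamma\dommax)^2>4$. Either way the two balls do not intersect. Combining the three cases, $|V|$ pairwise nonintersecting balls exist iff the CSP instance is satisfiable.

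\textbf{Running-time contradiction.} The reduction runs in polynomial time and outputs $N\le |V|\cdot\dommax^d$ balls, which is polynomial in the CSP input size $|I|$. An $f(k)\,N^{o(k^{1-1/d})}$ algorithm for the ball-packing problem would therefore decide $d$-dimensional geometric $\le$-CSP in time $f(|V|)\,|I|^{o(|V|^{1-1/d})}$, contradicting Theorem~\ref{th:geomcspunbounded}. The only delicate point is calibrating $\gamma$ in the adjacent-gadget case: one must ensure the quadratic error $O(\gamma^2)$ is dominated by the linear term $4\gamma(y_i-x_i)$ for every integer difference $y_i-x_i\in\{-\dommax+1,\dots,-1\}$, which holds automatically for $\gamma$ small enough relative to $d$ and $\dommax$; everything else is a routine distance computation.
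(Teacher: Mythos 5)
Your proposal is correct and is essentially the paper's own proof: reduce from $d$-dimensional geometric $\le$-CSP (Theorem~\ref{th:geomcspunbounded}), place one ball per (variable, admissible domain value) at $\ba$ plus a tiny shift by the value, and verify the intra-gadget / adjacent-gadget / non-adjacent-gadget distance cases to conclude that $|V|$ disjoint balls exist iff the instance is satisfiable. The one minor calibration nit is that the parenthetical suggestion $\gamma=1/(10d\dommax)$ is a factor of $\dommax$ too large — the quadratic error is $\Theta(\gamma^2 d\dommax^2)$, so one needs $\gamma=\Theta\bigl(1/(d\dommax^2)\bigr)$ as in the paper's choice of $\epsilon$ — but your hedge that ``$\gamma$ small enough relative to $d$ and $\dommax$'' already covers this.
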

\begin{proof}
  It will be convenient to work with open balls of diameter
  1 (that is, radius $1/2$) in this proof: then two balls are nonintersecting if
  and only if the distance of their centers are at least 1.  Let $I$
  be a $d$-dimensional $\le$-CSP instance on variables $V$ and domain
  $[\dommax]^d$. We construct a set $B$ of $d$-dimensional balls such
  that $B$ contains a set of $|V|$ pairwise nonintersecting balls if
  and only if $I$ has a satisfying assignment. Therefore, if we can
  find $k$ nonintersecting balls in time $f(k)n^{o(k^{1-1/d})}$, then
  we can solve $I$ in time $f(k)n^{o(|V|^{1-1/d})}$. By
  Theorem~\ref{th:geomcspunbounded}, this would contradict ETH.

  Let $\epsilon=1/(d\dommax^2)$. Let $\ba=(a_1,\dots,a_d)$ be a
  variable of $I$ and let $\langle (\ba),R_\ba\rangle$ be the unary
  constraint on $\ba$. For every $\bx=(x_1,\dots,x_d)\in
  R_\ba\subseteq [\dommax]^d$, we introduce into $B$ an open ball of diameter $\frac{1}{2}$ centered
  at $(a_1+\epsilon x_1,\dots, a_d+\epsilon x_d)=\ba+\epsilon \bx$;
  let $B_\ba$ be the set of these $|R_{\ba}|$ balls. Note that the
  balls in $B_\ba$ all intersect each other. Therefore, $B'\subseteq
  B$ is a set of pairwise nonintersecting balls, then $|B'|\le |V|$
  and $|B'|=|V|$ is possible only if $B'$ contains exactly one ball from each
  $B_\ba$. In the following, we prove that there is such a set of
  $|V|$ pairwise nonintersecting balls if and only if $I$ has a
  satisfying assignment.

  We need the following observation first.  Consider two balls
  centered at $\ba+\epsilon \bx$ and $\ba+\be_i+\epsilon\bx'$ for some
  $\bx=(x_1,\dots,x_d)\in [\dommax]^d$ and $\bx'=(x'_1,\dots,x'_d)\in
  [\dommax]^d$.  We claim that they are nonintersecting if and only if
  $x_i\le x'_{i}$. Indeed, if $x_i>x'_i$, then the square of the
  distance of the two centers is
\ifabstract
\begin{alignat*}{1}
\sum_{j=1}^{i-1} \epsilon^2(x'_j-x_j)^2+&(1+\epsilon(x'_i-x_i))^2+
\sum_{j=i+1}^{d} \epsilon^2(x'_j-x_j)^2\\
 &\le d\epsilon^2\dommax^2+(1+\epsilon(x'_i-x_i))^2\\
 &\le \epsilon  + (1-\epsilon)^2 =1-\epsilon+\epsilon^2 < 1
\end{alignat*}
\else
\begin{alignat*}{1}
\sum_{j=1}^{i-1} \epsilon^2(x'_j-x_j)^2+
(1+\epsilon(x'_i-x_i))^2+
\sum_{j=i+1}^{d} \epsilon^2(x'_j-x_j)^2&\le d\epsilon^2\dommax^2+(1+\epsilon(x'_i-x_i))^2\\&\le
\epsilon  + (1-\epsilon)^2 =1-\epsilon+\epsilon^2 < 1
\end{alignat*}
\fi
(we have used $x'_i,x_i\le \dommax$ in the first inequality and
$\epsilon = 1/(d\dommax^2)$ in the second inequality).  On the other
hand, if $x_i\le x'_i$, then the square of the distance is at least
$(1+\epsilon(x'_i-x_i))^2\ge 1$, hence the two balls do not intersect
(recall that the balls are open).  This proves our claim. Moreover, it
is easy to see that if $\ba$ and $\ba'$ are not adjacent in $\Rs[n,d]$,
then the balls centered at $\ba+\epsilon \bx$ and $\ba'+\epsilon \bx'$
cannot intersect for any $\bx,\bx'\in [\dommax]^d$: the square of the
distance of the two centers is at least $2(1-\epsilon\dommax)^2>1$.

Let $f$ be a satisfying assignment for $I$. For every variable $\ba$,
we select the ball $\ba+\epsilon f(\ba)\in B_\ba$. If $\ba$ and $\ba'$ are not adjacent, then $\ba+\epsilon f(\ba)$ and $\ba'+\epsilon f(\ba')$
cannot intersect. If $\ba$ and $\ba'$ are adjacent, then there is a geometric binary constraint on $\ba$ and
$\ba'$. Therefore, if, say, $\ba'=\ba+\be_i$, then the binary constraint ensures that the
$i$-th coordinate of $f(\ba)$ is at most the $i$-th coordinate of
$f(\ba')$. By our claim in the previous paragraph, it follows that the
balls centered at $\ba+\epsilon f(\ba)$ and $\ba'+\epsilon f(\ba')$ do
not intersect.

Conversely, let $B'\subseteq B$ be a set of $|V|$ pairwise independent
balls. This is only possible if for every $\ba\in V$, set $B'$
contains a ball from $B_\ba$, that is, centered at $\ba+\epsilon
f(\ba)$ for some $f(\ba)\in [\dommax]^d$. We claim that $f$ is a satisfying
assignment of $I$. First, it satisfies the unary constraints: the fact
that $\ba+\epsilon f(\ba)$ is in $B_\ba$ implies that $f(\ba)$ satisfies
the unary constraint on $\ba$. Moreover, let $\ba$ and
$\ba'=\ba+\be_i$ be two adjacent variables.  Then, as we have observed
above, the fact that $\ba+\epsilon f(\ba)$ and $\ba'+\epsilon f(\ba')$
do not intersect implies that the $i$-th coordinate of $f(\ba)$ is at
most the $i$-th coordinate of $f(\ba')$. That is, the geometric binary
constraint on $\ba$ and $\ba'$ is satisfied.
\end{proof}
 
The lower bound for packing $d$-dimensional axis-parallel unit-side cubes
is similar, but there is a slight difference. In the case of unit
balls, as we have seen, balls centered at $\ba+\epsilon \bx$ and
$\ba+\be_i+\be_j+\epsilon\bx'$ cannot intersect (if $\epsilon$ is
sufficiently small), but this is possible for unit cubes. Therefore,
we represent each variable with $2d+1$ cubes: a cube and its two
neighbors in each of the $d$ dimensions.
\begin{theorem}\label{th:packcube}\appstar
If for some fixed $d\ge 2$, there is an $f(k)n^{o(k^{1-1/d})}$ time algorithm for finding $k$ pairwise nonintersecting $d$-dimensional open axis-parallel unit  cubes in a collection of $n$ cubes, then ETH fails.
\end{theorem}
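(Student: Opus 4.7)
The plan is to adapt the reduction from the preceding theorem, starting from a $d$-dimensional geometric $\le$-CSP instance on $\Rs[n,d]$ (Theorem~\ref{th:geomcspunbounded}) and building a set of unit cubes whose maximum non-intersecting subset encodes the satisfying assignments. The specific difficulty that cubes pose is this: two axis-parallel open unit cubes centered at $\ba+\epsilon\bx$ and $\ba+\be_i+\be_j+\epsilon\bx'$, corresponding to diagonally adjacent CSP variables, can actually intersect, since in each of the coordinates $i$ and $j$ the displacement equals $1+\epsilon(x'_\ell-x_\ell)$ and falls below $1$ as soon as $x_\ell>x'_\ell$, while in every other coordinate the displacement is $O(\epsilon)<1$. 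For balls the Euclidean distance $\sqrt{2}$ between such centers exceeded the diameter $1$, so these spurious diagonal interactions simply could not occur.

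To neutralize them I would represent each $\bx\in R_\ba$ by a cluster of $2d+1$ open unit cubes: one ``value cube'' centered at $\ba+\epsilon\bx$, together with $2d$ ``neighbor cubes'' placed near the $2d$ grid neighbors $\ba\pm\be_i$, each with a small $\bx$-dependent shift $s_i^\pm(\bx)$ of order $\epsilon$. With $\epsilon=\Theta(1/(d\dommax^2))$ and the shifts designed carefully, I would verify the four properties that make the reduction work: (a) the $2d+1$ cubes of a single cluster are pairwise non-intersecting; (b) any two clusters corresponding to different values of the same variable mutually conflict, so at most one cluster per variable can be selected; (c) the clusters of two adjacent variables $\ba$ and $\ba+\be_i$ with values $\bx$, $\bx'$ are compatible iff $x_i\le x'_i$, via the interaction of the $+\be_i$-neighbor of the first cluster with the $-\be_i$-neighbor of the second; and (d) the clusters of two non-adjacent variables never conflict, because the neighbor cubes sit at offsets that shield the value cubes from coming into contact along diagonals.

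Given this encoding, the conclusion follows by a routine counting as in the ball case: the packing instance has $n=O(|V|\cdot\dommax^d)$ cubes, any set of $(2d+1)|V|$ pairwise non-intersecting cubes must contain exactly one cluster per variable, and such selections biject with satisfying assignments. Hence an $f(k)n^{o(k^{1-1/d})}$ packing algorithm applied with $k=(2d+1)|V|$ would yield an $f'(|V|)n^{o(|V|^{1-1/d})}$ algorithm for geometric $\le$-CSP, contradicting Theorem~\ref{th:geomcspunbounded} and therefore ETH. The hard part will be the explicit choice of the shifts $s_i^\pm(\bx)$ so that property (c) correctly reproduces the $\le$-constraint through one designated pair of neighbor cubes, while simultaneously property (d) kills every possible diagonal interaction among the value and neighbor cubes of non-adjacent variables. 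Getting these two requirements to coexist seems to force a careful, somewhat asymmetric placement that has no analogue in the ball case.
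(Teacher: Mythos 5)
Your high-level plan matches the paper's: reduce from a $d$-dimensional geometric $\le$-CSP instance, represent each variable--value pair by a cluster of $2d+1$ axis-parallel open unit cubes (one ``value cube'' plus $2d$ ``neighbor cubes''), take $k=(2d+1)|V|$, and show that selections of $(2d+1)|V|$ pairwise nonintersecting cubes biject with satisfying assignments, invoking Theorem~\ref{th:geomcspunbounded}. You also correctly identify the diagonal interaction as the new obstacle that cubes pose over balls.

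However, the gadget you sketch is broken at a structural level, and the fix is a missing idea rather than the detail you flag. You place the value cube of $\ba$ at $\ba+\epsilon\bx$ and the neighbor cubes at roughly $\ba\pm\be_i$ plus an $O(\epsilon)$ shift. But then the $-\be_i$-neighbor cube of the adjacent variable $\ba+\be_i$ sits at roughly $\ba$ (up to $O(\epsilon)$), i.e.\ within $O(\epsilon)\ll 1$ of the value cube of $\ba$ in \emph{every} coordinate, so those two cubes always intersect no matter what shifts you choose. The same happens between your $+\be_i$-neighbor of $\ba$ and the value cube of $\ba+\be_i$. Thus no choice of shifts $s_i^\pm(\bx)$ at the $O(\epsilon)$ scale can ever allow $(2d+1)|V|$ pairwise nonintersecting cubes, and the reduction cannot get off the ground. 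The paper avoids this by \emph{dilating the grid by a factor of~$3$}: the cluster for $\ba$ is placed around $3\ba$ (value cube at $3\ba+\epsilon\bx$, neighbors at $3\ba\pm\be_i+\epsilon\bx$), so adjacent clusters live around $3\ba$ and $3\ba+3\be_i$, and only the designated pair of facing neighbor cubes (at $3\ba+\be_i+\epsilon\bx$ and $3\ba+2\be_i+\epsilon\bx'$) can interact; their $i$-th coordinate difference is $1+\epsilon(x'_i-x_i)$, which is $\ge 1$ exactly when $x_i\le x'_i$. Moreover the paper uses a single \emph{uniform} shift $\epsilon\bx$ for all $2d+1$ cubes in a cluster (with $\epsilon=1/(2\dommax)$, not $1/(d\dommax^2)$) — no asymmetric per-direction shifts are needed. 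The diagonal interactions you worry about are then killed by elementary distance bookkeeping: if $\ba,\ba'$ differ by $\ge 2$ in some coordinate, their scaled centers are $\ge 6$ apart there; if they differ by $1$ in two coordinates, the scaled centers are $\ge 3$ apart in each, and the two $\pm\be_\ell$ shifts can bring at most one of these two coordinate gaps below $1$. So the ``careful asymmetric placement'' you anticipate as the hard part is not what's needed; the missing ingredient is the factor-$3$ spacing, after which uniform shifts and a short case analysis suffice.
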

\fi

\begin{proof}
  It will be convenient to work with open cubes of side length 1 in
  this proof: then two cubes are nonintersecting if and only if the
  centers differ by at least 1 in at least one of the coordinates.
  Let $I$ be a $d$-dimensional $\le$-CSP instance on variables $V$ and
  domain $[\dommax]^d$. We construct a set $B$ of $d$-dimensional
  axis-parallel cubes such that $B$ contains a set of $(2d+1)|V|$
  pairwise nonintersecting cubes if and only if $I$ has a satisfying
  assignment. Therefore, if we can find $k$ nonintersecting cubes in
  time $f(k)n^{o(k^{1-1/d})}$, then we can solve $I$ in time
  $f(k)n^{o(|V|^{1-1/d})}$. By Theorem~\ref{th:geomcspunbounded}, this
  would contradict ETH.

  Let $\epsilon=1/(2\dommax)$. Let
  $\ba=(a_1,\dots,a_d)$ be a variable of $I$ and let $\langle
  (\ba),R_\ba\rangle$ be the unary constraint on $\ba$. For every
  $\bx=(x_1,\dots,x_d)\in R_\ba\subseteq [\dommax]^d$, we introduce into
  $B$ the cube centered at $3\ba+\epsilon \bx$, and for every $1\le i
  \le d$, the two cubes centered at $3\ba+\be_i+\epsilon \bx$ and
  $3\ba-\be_i+\epsilon \bx$.  Let us call $B_{\ba,\bx}$ this set of
  $2d+1$ cubes.  Note that the $2d+1$ cubes in $B_{\ba,\bx}$ do not
  intersect each other (recall that the cubes are open). Moreover, at
  most $2d+1$ pairwise nonintersecting cubes can be selected from
  $B_{\ba}:=\bigcup_{\bx\in R_\ba} B_{\ba,\bx}$: for example, the
  cubes centered at $\ba+\be_i+\epsilon \bx$ and $\ba+\be_i+\epsilon
  \bx'$ intersect for any $\bx,\bx'\in[\dommax]^d$.

  Let $f$ be a satisfying assignment for $I$. We show that selecting
  the $2d+1$ cubes $B_{\ba,f(\ba)}$ for each variable $\ba\in V$ gives
  a set of $(2d+1)|V|$ pairwise nonintersecting cubes.  We claim that
  if $\ba$ and $\ba'$ are not adjacent, then the cubes in
  $B_{\ba,f(\ba)}$ do not intersect the cubes in $B_{\ba',f(\ba')}$.
  First, if $\ba$ and $\ba'$ differ by at least 2 in some coordinate,
  then $3\ba$ and $3\ba'$ differ by at least 6 in that coordinate and
  then it is clear that, e.g., the cubes centered at
  $3\ba+\be_i+\epsilon f(\ba)$ and $3\ba'-\be_j+\epsilon f(\ba')$
  cannot intersect (note that $\epsilon f(\ba)\le \epsilon \dommax \le
  1/2$). On the other hand, if $\ba$ and $\ba'$ differ in at least two
  coordinates, then $3\ba$ and $3\ba$ differ by at least 3 in those
  two coordinates and $3\ba+\epsilon f(\ba)$ and $3\ba'+\epsilon
  f(\ba')$ differ by at least $3-\epsilon \dommax \ge 2$ in those two
  coordinates. Then adding two unit vectors to the centers cannot
  decrease both differences to strictly less than 1, that is,
  $3\ba+\be_i+\epsilon f(\ba)$ and $3\ba+\be_j+\epsilon f(\ba)$ differ
  by at least 1 in at least one coordinate for any $1\le i,j\le
  d$. This means that the cubes in $B_{\ba,f(\ba)}$ do not intersect
  the cubes in $B_{\ba',f(\ba')}$. Consider now two variables $\ba$
  and $\ba'$ that are adjacent; suppose that $\ba'=\ba+\be_i$. Then
  there is a geometric binary constraint on $\ba$ and $\ba'$, which
  ensures that the $i$-th coordinate of $f(\ba)$ is at most the $i$-th
  coordinate of $f(\ba')$. It follows that $3\ba+\be_i+\epsilon
  f(\ba)\in B_{\ba,f(\ba)}$ does not intersect $3\ba'-\be_i+f(\ba')\in
  B_{\ba',f(\ba')}$. Furthermore, the other
  cubes in $B_{\ba,f(\ba)}$ have $i$-th coordinate less than the
  $i$-th coordinate of $3\ba+\be_i+\epsilon f(\ba)$ and the other cubes in $B_{\ba',f(\ba')}$ have $i$-th coordinate greater than the $i$-th coordinate of $3\ba'-\be_i+f(\ba')$, hence they cannot
  intersect either.

  Conversely, let $B'\subseteq B$ be a set of $(2d+1)|V|$ pairwise
  independent cubes. This is only possible if for every $\ba\in V$,
  set $B'$ contains $2d+1$ cubes selected from $B_\ba$; in particular,
  $B'$ contains a cube centered at $3\ba+\epsilon f(\ba)$ for some
  $f(\ba)\in [\dommax]^d$. We claim that $f$ is a satisfying assignment
  of $I$. First, it satisfies the unary constraints: the fact that
  $3\ba+\epsilon f(\ba)$ is in $B$ implies that $f(\ba)$ satisfies the
  unary constraint on $\ba$. Moreover, let $\ba$ and $\ba'=\ba+\be_i$ be two
  adjacent variables. Then, as $B'$ contains $2d+1$ cubes from each of $B_\ba$ and $B_{\ba'}$,
  it has to contain cubes $3\ba+\be_i+\epsilon\bx_1$ and
  $3\ba'-\be_i+\epsilon\bx_2=3\ba+2\be_i+\epsilon\bx_2$ for some
  $\bx_1,\bx_2\in [\dommax]^d$. Now the $i$-th coordinate of
  $3\ba+\be_i+\epsilon\bx_1$ cannot be less than the $i$-th
  coordinate of $3\ba+\epsilon f(\ba)$, that is, the $i$-th coordinate
  of $\bx_1$ is at least the $i$-th coordinate of $f(\ba)$. Similarly,
  the $i$-th coordinate of $\bx_2$ is at least the $i$-coordinate of
  $\bx_1$, and the $i$-th coordinate of $f(\ba')$ is at least the
  $i$-th coordinate of $\bx_2$. Putting together, we get that the
  $i$-coordinate of $f(\ba)$ is at least the $i$-th coordinate of
  $f(\ba)$, that is, the geometric binary constraint on $\ba$ and
  $\ba'$ is satisfied.
\end{proof}


\ifmain
\section{The reduction to TSP}
\label{sec:TSP}
\fi
\ifappendix
\ifabstract
\section{The reduction to TSP: details\\ omitted from Section~\ref{sec:TSP}}
\else
\section{The reduction to TSP: details omitted from Section~\ref{sec:TSP}}
\fi
\fi

\ifmain
Let $d$ be a positive integer.
An instance of \emph{TSP in $d$-dimensional Euclidean space} is a pair $\psi=(X, \alpha)$, where $X$ is a finite set of points in $\mathbb{R}^d$, and $\alpha>0$ is an integer; the goal is to decide whether the length of the shortest TSP tour for $X$ is at most $\alpha$.

Our reduction is inspired by the NP-hardness proof of TSP in $\mathbb{R}^2$ due to Papadimitriou \cite{DBLP:journals/tcs/Papadimitriou77}.
We remark that our proof critically assumes $d\geq 3$.
However, even though we don't know how to make our argument work in $\mathbb{R}^2$, we can still recover the desired lower bound on the running time for $d=2$ by the reduction of \cite{DBLP:journals/tcs/Papadimitriou77}.

We begin by introducing some terminology that will allow us to construct the desired TSP instances.
Some of the definitions are from \cite{DBLP:journals/tcs/Papadimitriou77}. However, some of them have been extended for our setting. In particular, we use a more elaborate construction that takes advantage of the fact that $d\geq 3$.

\textbf{1-Chains.}
Let $x,y\in \mathbb{R}^d$.
A \emph{$1$-chain} from $x$ to $y$ is a sequence of points $\{x_i\}_{i=1}^{k}$, with $x_1=x$, $x_k=y$, such that for any $i\in [k-1]$, the points $x_i$ and $x_{i+1}$ differ in exactly one coordinate, and $\|x_i-x_{i+1}\|_1 = 1$ (see Figure \ref{fig:1_chain}).
We also require that for any $i,j\in [k]$, with $|i-j|\leq 20$, we have $\|x_i-x_j\|_1 \geq \|i-j\|$.

\begin{figure}[t]
\begin{center}
\ifabstract
\scalebox{0.6}{\includegraphics{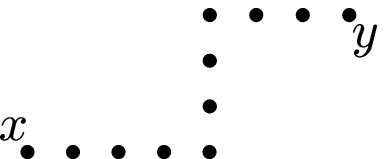}}
\else
\scalebox{0.8}{\includegraphics{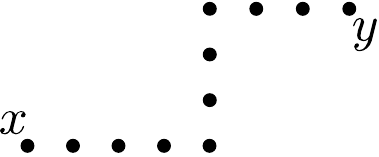}}
\fi
\caption{A 1-chain from $x$ to $y$.\label{fig:1_chain}}
\end{center}
\end{figure}

\textbf{2-Chains.}
Let $x,y\in \mathbb{R}^d$, and let $\theta \in [0, \pi/100)$.
A \emph{$\theta$-ribbon} from $x$ to $y$ is a pair $(P, {\cal H})$, where $P$ is a simple polygonal curve with endpoints $x$ and $y$, consisting of $k$ line segments $s_1,\ldots,s_k$, and ${\cal H}=\{H_i\}_{i=1}^k$ is a family of 2-dimensional planes in $\mathbb{R}^d$,  satisfying the following conditions:
\begin{description}
\item{(I)}
For any $i,j\in [k]$, with $|i-j|>20$, for any $p\in s_i$, $q\in s_j$, we have $\|p-q\|_2 > 40$.
\item{(II)}
The segments $s_1$, and $s_k$ are of length $1$.
All other segments are of length $2$.
\item{(III)}
For every segment $s_i$, we have $s_i\subset H_i$.
\item{(IV)}
For every two consecutive segments $s_i$, $s_{i+1}$, at least  one of the following conditions holds:
  \begin{description}
  \item{(IV-1)}
  Let $p$ be the common endpoint of $s_i$, and $s_{i+1}$, and let $\ell$ bet the line in $H_i$ passing through $p$, and being normal to $s_i$.
  Then the 2-plane $H_{i+1}$ is obtained by rotating $H_{i}$ around $\ell$ by some angle of at most $\theta$.
  \item{(IV-2)}
  The segments $s_i$ and $s_{i+1}$ are collinear, and $H_{i+1}$ is obtained by rotating $H_i$ around $s_i$ by some angle of at most $\theta$.
  \end{description}
\end{description}
Given a $\theta$-ribbon $R=(P,{\cal H})$ from $x$ to $y$, we define a set of points $Y=Y(R)$, which we refer to as a \emph{$2$-chain} with \emph{angular defect $\theta$} (or simply a $2$-chain when $\theta$ is clear from the context) from $x$ to $y$, corresponding to $R$.
We set 
$Y := \{x,y\} \cup \bigcup_{i=1}^{k-1} \{p_i, q_i\}$,
where for any $i$, the points $p_i,q_i \in \mathbb{R}^d$ are defined as follows:
Let $a$ be the common endpoint of $s_{i}$, and $s_{i+1}$.
Let $\ell$ be the line in $H_{i+1}$ normal to $s_{i+1}$ that passes through $a$.
Let $p_i$, $q_i$ be the two points in $\ell$ that are at distance $1/2$ from $a$.
We assign $p_i, q_i$ so that for any two consecutive segments $s_j,s_{j+1}$, with $j\in \{1,\ldots,k-1\}$, the angle between the vectors $p_i-q_i$, and $p_{i+1}-q_{i+1}$, is at most $\theta$ (this is always possible since $(P, {\cal H})$ is a $\theta$-ribbon).
Notice that there are precisely two distinct possibilities of assigning the points $p_1,\ldots,p_{k-1}$, and $q_1,\ldots,q_{k-1}$.
We refer to the points $\{p_i\}_{i=1}^k$ as the \emph{left side}, and the points $\{q_i\}_{i=1}^k$ as the \emph{right side} of the 2-chain.
This concludes the definition of a $2$-chain (see Figure \ref{fig:2_chain}).

\ifabstract
\begin{figure}[t]
\begin{center}
\scalebox{0.5}{\includegraphics{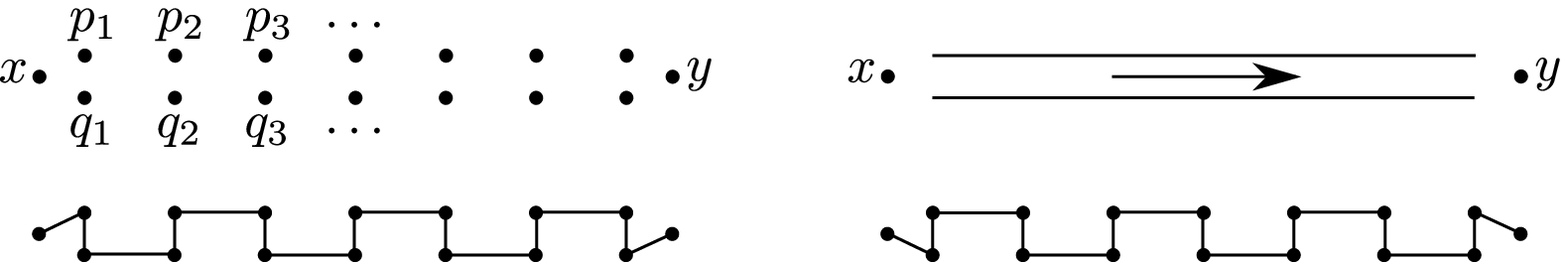}}
\caption{A 2-chain from $x$ to $y$ (top left), its schematic abbreviation (top right), and the two possible optimal paths: mode 1 (bottom left), and mode 2 (bottom right).\label{fig:2_chain}}
\end{center}
\end{figure}
\else
\begin{figure}[t]
\begin{center}
\scalebox{0.8}{\includegraphics{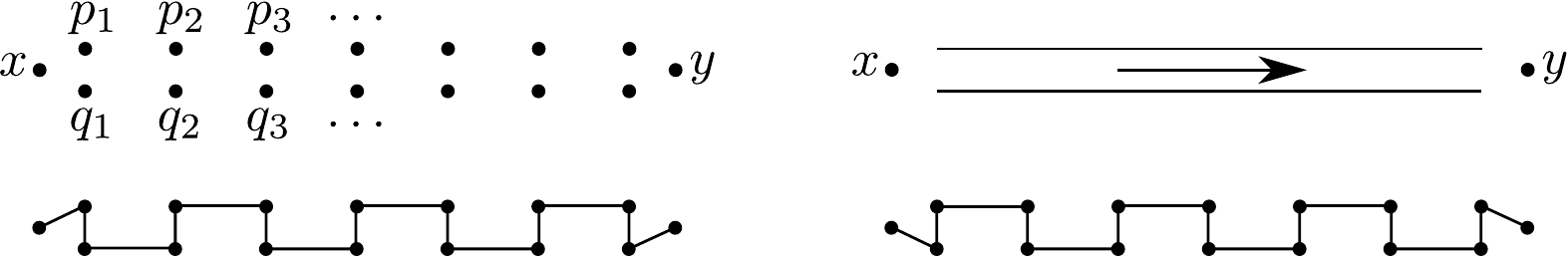}}
\caption{A 2-chain from $x$ to $y$ (top left), its schematic abbreviation (top right), and the two possible optimal paths: mode 1 (bottom left), and mode 2 (bottom right).\label{fig:2_chain}}
\end{center}
\end{figure}
\fi

\begin{lemma}\label{lem:two_chain}
There exists a constant $\theta^*>0$, such that the following holds.
Let $x,y\in \mathbb{R}^d$, 
and let $Y$ be a $2$-chain from $x$ to $y$,
and with angular defect $\theta^*$.
Let $\{p_i\}_{i=1}^{k-1}$, and $\{q_i\}_{i=1}^{k-1}$ be the left, and right sides of $Y$ respectively.
Then, there are precisely two optimal Traveling Salesperson paths from $x$ to $y$ for the set $Y$:
the first one is $x p_1 q_1 q_2 p_2 p_3 q_3 \ldots y$, and the second one is $x q_1 p_1 p_2 q_2 q_3 p_3 \ldots y$.
In the former case we say that $Y$ is traversed in \emph{mode 1}, and in the latter case in \emph{mode 2}.
\end{lemma}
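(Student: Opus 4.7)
The plan is to reduce the problem to a local, essentially planar, analysis that exploits the long-range separation condition~(I) and the slow-rotation condition~(IV). The argument proceeds in three steps.

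First, I would compute the length of the two candidate paths and show that they are equal. Each of them uses the $k-1$ rungs $p_iq_i$ (of exact length $1$), the $k-2$ rails of the form $p_ip_{i+1}$ or $q_iq_{i+1}$ (of length close to $|s_{i+1}|$, up to an $O(\theta)$ perturbation coming from the rotation between $H_{i+1}$ and $H_{i+2}$), and two endpoint connections involving $x$ and $y$. Because the rungs are placed inside the chosen $2$-planes normally to the adjacent segments, the two modes use the same multiset of rail- and rung-lengths, so they have equal total length $L$.

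Second, I would rule out ``long'' edges in an optimal Hamiltonian path, where an edge is \emph{long} if it joins a vertex of pair $i$ to one of pair $j$ with $|i-j|>20$ (treating $x$ and $y$ as pairs $0$ and $k$). By condition~(I) every such edge has Euclidean length greater than $40$. A local exchange argument shows that whenever a path contains a long edge one can reroute it through a monotone traversal of the intermediate pairs, which costs at most $60+O(1)$ but eliminates an edge of length $>40$ and replaces it by at most a constant number of short edges, strictly decreasing total length. Hence any optimal path visits the pairs in monotone order, up to bounded local inversions that we may absorb into the window analysis below.

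Third, I would perform a local analysis on each window of four consecutive pair-vertices $\{p_i,q_i,p_{i+1},q_{i+1}\}$. By repeated application of~(IV), these four points lie within $O(\theta)$ of a common $2$-plane, so the local TSP is a $\theta$-perturbation of the classical $4$-point planar ladder. Among the $12$ Hamiltonian-path edge sets on that ladder a direct calculation shows that the minimum cost $1+|s_{i+1}|+1$ is attained by exactly the two $\text{rung}+\text{rail}+\text{rung}$ configurations (one using the $p$-rail, one using the $q$-rail), while every other edge set uses at least one diagonal and costs at least $\sqrt{|s_{i+1}|^2+1}+1$, exceeding the minimum by a strictly positive absolute constant. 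Choosing $\theta^*$ so small that all edge perturbations stay well below this gap preserves the strict inequality. Finally, once the $p/q$ side is chosen in the first window, the rungs are forced and the arrival side of each pair dictates the departure side, producing globally precisely one of the two modes of the lemma.

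The main obstacle is Step~2. In the $d=2$ argument of \cite{DBLP:journals/tcs/Papadimitriou77}, long edges are excluded using planar separation. For $d\ge 3$ the $2$-chain is genuinely non-planar, so no topological obstruction is available; condition~(I) is explicitly calibrated to replace topology by a metric criterion (Euclidean distance $>40$ whenever combinatorial distance exceeds $20$), which lets us rule out shortcuts by a direct exchange cost comparison independent of the ambient embedding.
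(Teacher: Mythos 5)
The paper never proves this lemma; it is stated without a proof and used in the proof of Lemma~\ref{lem:tsp_reduction}, and the reader is implicitly expected to see it as a $d$-dimensional analogue of the corresponding statement about $2$-chains in Papadimitriou's planar construction \cite{DBLP:journals/tcs/Papadimitriou77}, with condition~(I) substituting for the planar topology. So there is no authorial proof to compare against; I can only assess your plan on its own terms.

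Your Step~1 claims the two modes have exactly equal length because they ``use the same multiset of rail- and rung-lengths.'' That is not immediate: mode~1 uses $q$-rails at odd indices and $p$-rails at even indices, while mode~2 does the opposite, so the multisets agree only if $\|p_ip_{i+1}\|=\|q_iq_{i+1}\|$ for every $i$. This in turn requires that consecutive rungs $p_iq_i$ and $p_{i+1}q_{i+1}$ be parallel, which is a nontrivial consequence of how the rungs are defined relative to the rotation axes in conditions~(IV-1)/(IV-2); it needs to be verified, not asserted. If it fails even by an $O(\theta)$ amount the lemma's ``precisely two optimal paths'' would be false, so this is not a lossy estimate you can wave away.

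Your Step~2 has an internal inconsistency that I believe is a genuine gap. You say rerouting a long edge ``through a monotone traversal of the intermediate pairs\dots costs at most $60+O(1)$\dots and replaces it by at most a constant number of short edges.'' A monotone traversal of the intermediate pairs uses $\Theta(j-i)$ edges of total length $\Theta(j-i)$, not a constant number. What one would actually have to do is a global path-surgery: remove the long edge, and splice the resulting two subpaths back together using edges between nearby pairs, exploiting that the original path must already have visited all intermediate pairs. Such surgery involves reconnecting at several places, and you have to show that the new edges you add are all short and that the net change is strictly negative. As stated, the accounting does not close, and since this is precisely the step you identify as the hard part that replaces planarity by the metric condition~(I), the gap is fatal: the rest of the argument (the $4$-point window analysis and the parity propagation) depends on having first established that the optimal path is combinatorially monotone. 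You also gloss over how the local window optima patch together into a single global parity, but that is a secondary concern next to Step~2.
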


\textbf{The configuration-$\mathbf{H}$.}
We recall the following gadget from \cite{DBLP:journals/tcs/Papadimitriou77}.
A set of points $Y \subset \mathbb{R}^d$ is called a \emph{configuration-$\mathbf{H}$} if there exists a $2$-plane $h$ containing $Y$, so that $Y$ on $h$ appears as in Figure \ref{fig:configuration_H}\footnote{The distance that is set to 8 in Figure \ref{fig:configuration_H}, was set to $6$ in the original construction from \cite{DBLP:journals/tcs/Papadimitriou77}. This appears to be a minor error in \cite{DBLP:journals/tcs/Papadimitriou77}.}.

\ifabstract
\begin{figure}[t]
\begin{center}
\scalebox{0.45}{\includegraphics{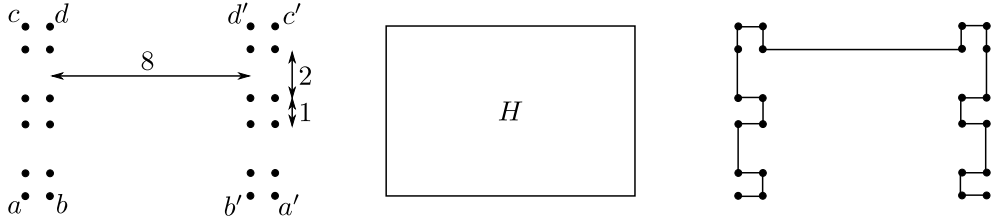}}
\caption{A configuration-$\mathbf{H}$ (left), its schematic abbreviation (middle), and an optimal path from $a$ to $a'$ (right).\label{fig:configuration_H}}
\end{center}
\end{figure}
\else
\begin{figure}[t]
\begin{center}
\scalebox{0.7}{\includegraphics{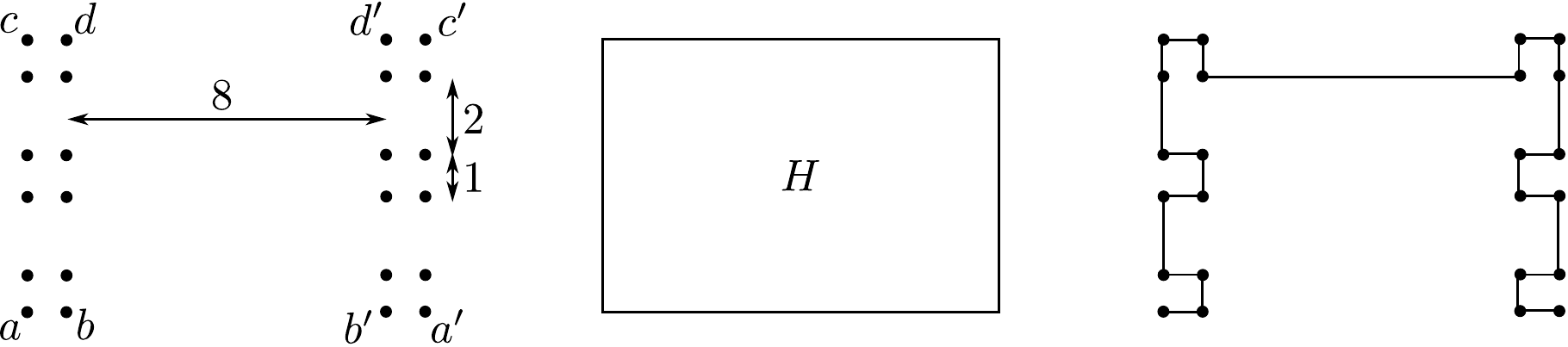}}
\caption{A configuration-$\mathbf{H}$ (left), its schematic abbreviation (middle), and an optimal path from $a$ to $a'$ (right).\label{fig:configuration_H}}
\end{center}
\end{figure}
\fi

\begin{lemma}[Papadimitriou \cite{DBLP:journals/tcs/Papadimitriou77}]\label{lem:configuration_H}
Among all Traveling Salesperson paths having as endpoints two of the points in $a$, $a'$, $b$, $b'$, $c$, $c'$, $c$, $d$ and $d'$, there are $4$ optimal paths with length 32, namely those with endpoints $(a,a')$, $(b,b')$, $(c,c')$, $(d,d')$.
\end{lemma}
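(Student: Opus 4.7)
Since the whole configuration $\mathbf{H}$ lies in a single $2$-plane $h$, the Euclidean distances in $\mathbb{R}^d$ between any two of its points coincide with distances inside $h$, so I would work entirely within $h$ using standard planar coordinates. The upper and lower bounds will be treated separately.

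For the upper bound, I would exhibit the concrete path of length $32$ from $a$ to $a'$ depicted in the right panel of Figure~\ref{fig:configuration_H} and verify by direct summation of its axis-aligned (and short oblique) segments that its total length is exactly $32$. The configuration $\mathbf{H}$ is invariant under the two reflections swapping its two vertical legs and its two horizontal halves; the resulting Klein four-group of symmetries acts on the label set $\{a,a',b,b',c,c',d,d'\}$ and sends the unordered pair $\{a,a'\}$ successively to $\{b,b'\}$, $\{c,c'\}$, $\{d,d'\}$. Applying these symmetries to the $a$-to-$a'$ path produces three further Hamiltonian paths of length exactly $32$ with the remaining matched endpoint pairs.

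For the lower bound, let $P$ be any Hamiltonian path on the point set of $\mathbf{H}$ whose endpoints both lie in $\{a,a',b,b',c,c',d,d'\}$. I would bound $|P|$ from below using the axis-aligned projection inequalities $\|v\|_2 \ge |\pi_x(v)|$ and $\|v\|_2 \ge |\pi_y(v)|$ together with a structural observation: any Hamiltonian traversal must visit all points lying on each vertical leg of $\mathbf{H}$, so it must cover at least the full vertical extent of length $8$ along each leg, and it must cross the horizontal crossbar at least a certain number of times in order to link the two legs. Summing these forced projected lengths gives $|P|\ge 32$, matching the upper bound.

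Finally, to conclude that equality forces the endpoint pair to be one of the four matched pairs, I would enumerate the $\binom{8}{2}=28$ unordered endpoint choices, which collapse under the $\mathbf{H}$-symmetry group to a small number of representative cases. For each non-matched representative, I expect to identify a specific forced oblique segment or forced backtrack in $P$ whose $\ell_2$-length strictly exceeds the corresponding projection-based lower bound by a definite positive amount, giving $|P|>32$. The main obstacle is precisely this case analysis: the cleanest formalization would be via a potential function that assigns to every Hamiltonian path of $\mathbf{H}$ a nonnegative local ``waste'' on each segment, such that the sum equals $|P|-32$ and vanishes only for the four symmetric paths already exhibited. Since $\mathbf{H}$ has only finitely many points, the verification is finite and ultimately reduces to planar geometry, so no new ingredient beyond Papadimitriou's original layout is needed.
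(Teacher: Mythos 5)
You should first note a meta-point: the paper does \emph{not} prove this lemma; it cites it verbatim from Papadimitriou~\cite{DBLP:journals/tcs/Papadimitriou77}, and even adds a footnote correcting a numerical slip in Papadimitriou's original (a distance that was $6$ is changed to $8$ in Figure~\ref{fig:configuration_H}). So there is no ``paper's own proof'' to compare against; the relevant benchmark is Papadimitriou's argument, which is a direct, finite geometric case analysis on the planar point set, essentially in the spirit of what you outline.

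That said, your proposal as written is a plan rather than a proof, and the two steps that carry the real content are left open. First, the lower bound: the claim that ``covering the vertical extent of each leg plus crossing the crossbar'' forces total length at least $32$ is not an argument. Projection inequalities give you that the total horizontal and vertical displacement along a path lower-bounds its length, but you must actually account for every forced back-and-forth segment in the H (the repeated traversals of the legs, the short oblique hops near the junctions) to reach the exact constant $32$; a single ``span'' estimate will undercount. Second, and more seriously, the lemma asserts strict suboptimality for all $24$ non-matched endpoint pairs, and you explicitly defer this to an unspecified ``waste'' potential function. Without constructing that function (or doing the reduced case analysis modulo the symmetry group and verifying strict inequality in each class), the uniqueness statement — which is precisely what the reduction to TSP later relies on — is not established. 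Finally, since the configuration here is the \emph{corrected} version of Papadimitriou's gadget, the constant $32$ must be re-derived from the figure as drawn in this paper, not taken on faith from the 1977 source; your sketch does not flag this. In short, the approach is the right one and matches the style of the original proof, but the exhibited sketch does not yet close either the exact lower bound or the strictness for non-matched pairs.
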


Let $(P, {\cal H})$ be a $\theta$-ribbon from $x$ to $y$, with ${\cal H}=\{H_i\}_{i=1}^k$, and let $Y$ be the corresponding $2$-chain.
Suppose that there exists some \emph{odd} $j\in \{1,\ldots,k-15\}$, such that for any $r,r'\in \{j,\ldots,j+14\}$, we have $H_r=H_{r'}$.
Let $Z$ be a configuration-$\mathbf{H}$ contained in the $2$-plane $H_j$.
Suppose further that $Z \cup Y$ appear in $H_j$ as in Figure \ref{fig:neighbor_H} (top).
Then, we say that the configuration-$\mathbf{H}$ $Z$ is a \emph{left neighbor} of the 2-chain $Y$ at $j$.
Similarly, we define a \emph{right neighbor} of a 2-chain (see bottom of Figure \ref{fig:neighbor_H}).

\ifabstract
\begin{figure}
\begin{center}
\scalebox{0.36}{\includegraphics{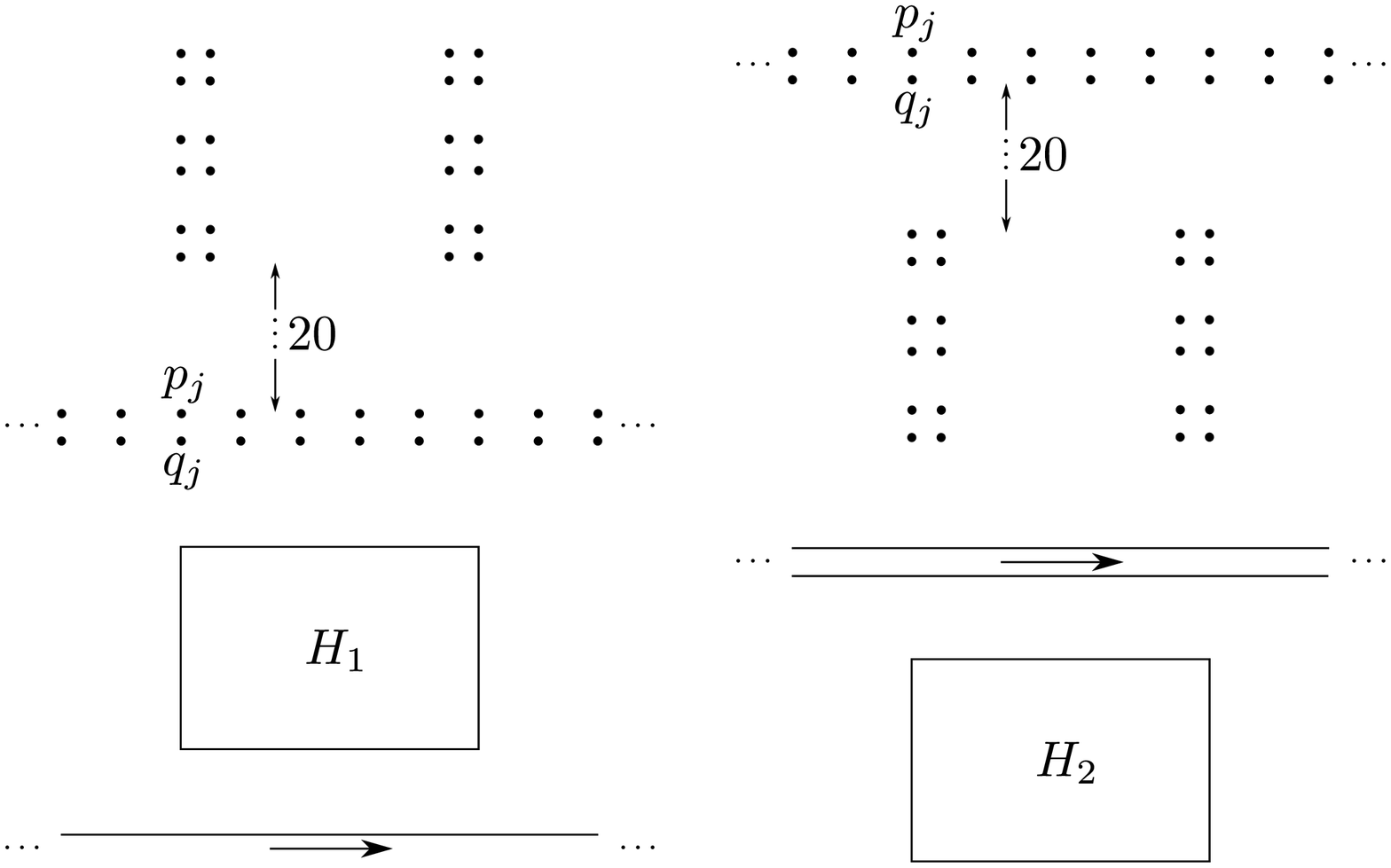}}
\caption{A configuration-$\mathbf{H}$ $H_1$ that is a left neighbor of a 2-chain, and its schematic abbreviation (left).
A configuration-$\mathbf{H}$ $H_2$ that is a right neighbor of a 2-chain, and its schematic abbreviation (right).\label{fig:neighbor_H}}
\end{center}
\end{figure}
\else
\begin{figure}[t]
\begin{center}
\scalebox{0.65}{\includegraphics{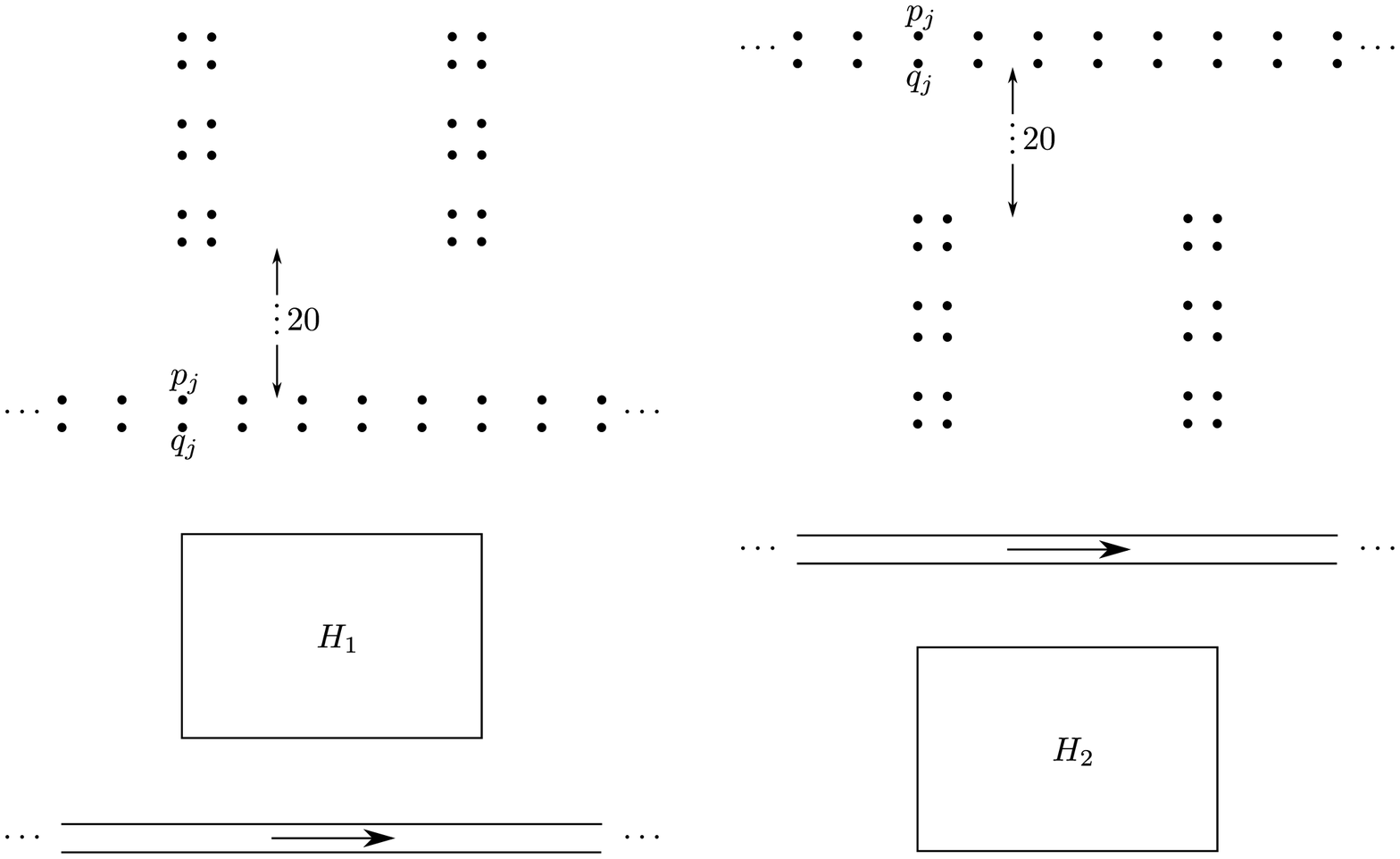}}
\caption{A configuration-$\mathbf{H}$ $H_1$ that is a left neighbor of a 2-chain, and its schematic abbreviation (left).
A configuration-$\mathbf{H}$ $H_2$ that is a right neighbor of a 2-chain, and its schematic abbreviation (right).\label{fig:neighbor_H}}
\end{center}
\end{figure}
\fi

\textbf{The configuration-$\mathbf{B}$.}
Following \cite{DBLP:journals/tcs/Papadimitriou77}, we say that a set of points $Y \subset \mathbb{R}^d$ is a \emph{configuration $B$} if there exists a $2$-plane $h$ containing $Z$, so that $Z$ on $h$ appears as in Figure \ref{fig:configuration_B}.

\ifabstract
\begin{figure}[t]
\begin{center}
\scalebox{0.5}{\includegraphics{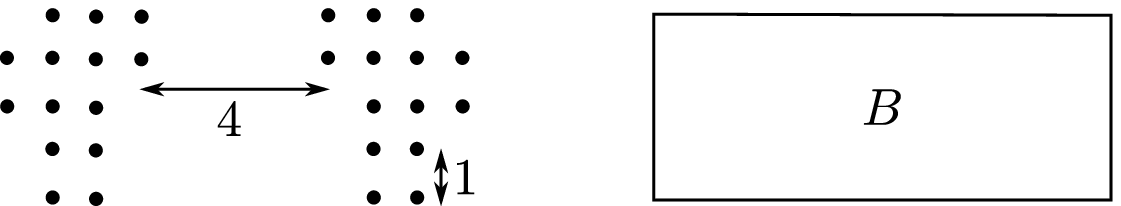}}
\caption{A configuration-$\mathbf{B}$ (left), and its schematic abbreviation (right).\label{fig:configuration_B}}
\end{center}
\end{figure}
\else
\begin{figure}[t]
\begin{center}
\scalebox{0.7}{\includegraphics{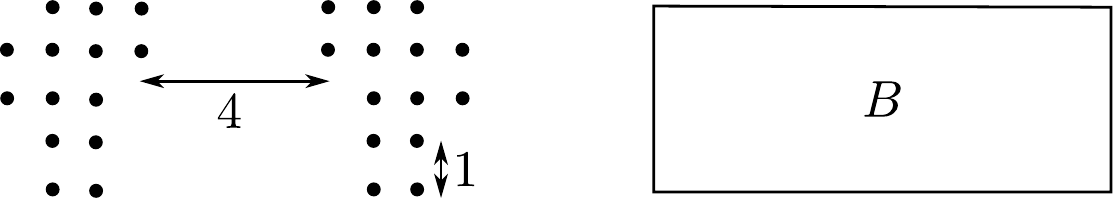}}
\caption{A configuration-$\mathbf{B}$ (left), and its schematic abbreviation (right).\label{fig:configuration_B}}
\end{center}
\end{figure}
\fi

Let $(P, {\cal H})$ be a $\theta$-ribbon from $x$ to $y$, with ${\cal H}=\{H_i\}_{i=1}^k$, and let $Y$ be the corresponding $2$-chain.
Suppose that there exists some \emph{odd} $j\in [k-15]$, such that for any $r,r'\in \{j,\ldots,j+14\}$, we have $H_r=H_{r'}$.
Let $Z$ be a configuration-$\mathbf{B}$ contained in the $2$-plane $H_j$.
Suppose that we replace a subset of $Y$ by a  configuration-$\mathbf{B}$ $Z$, such that $Z$ is contained in $H_j$, and is as in figure \ref{fig:attaching_B}.
Then, we say that resulting point-set $Y'$ is obtained by \emph{attaching} $Z$ to $Y$ at $j$.

\ifabstract
\begin{figure}[t]
\begin{center}
\scalebox{0.45}{\includegraphics{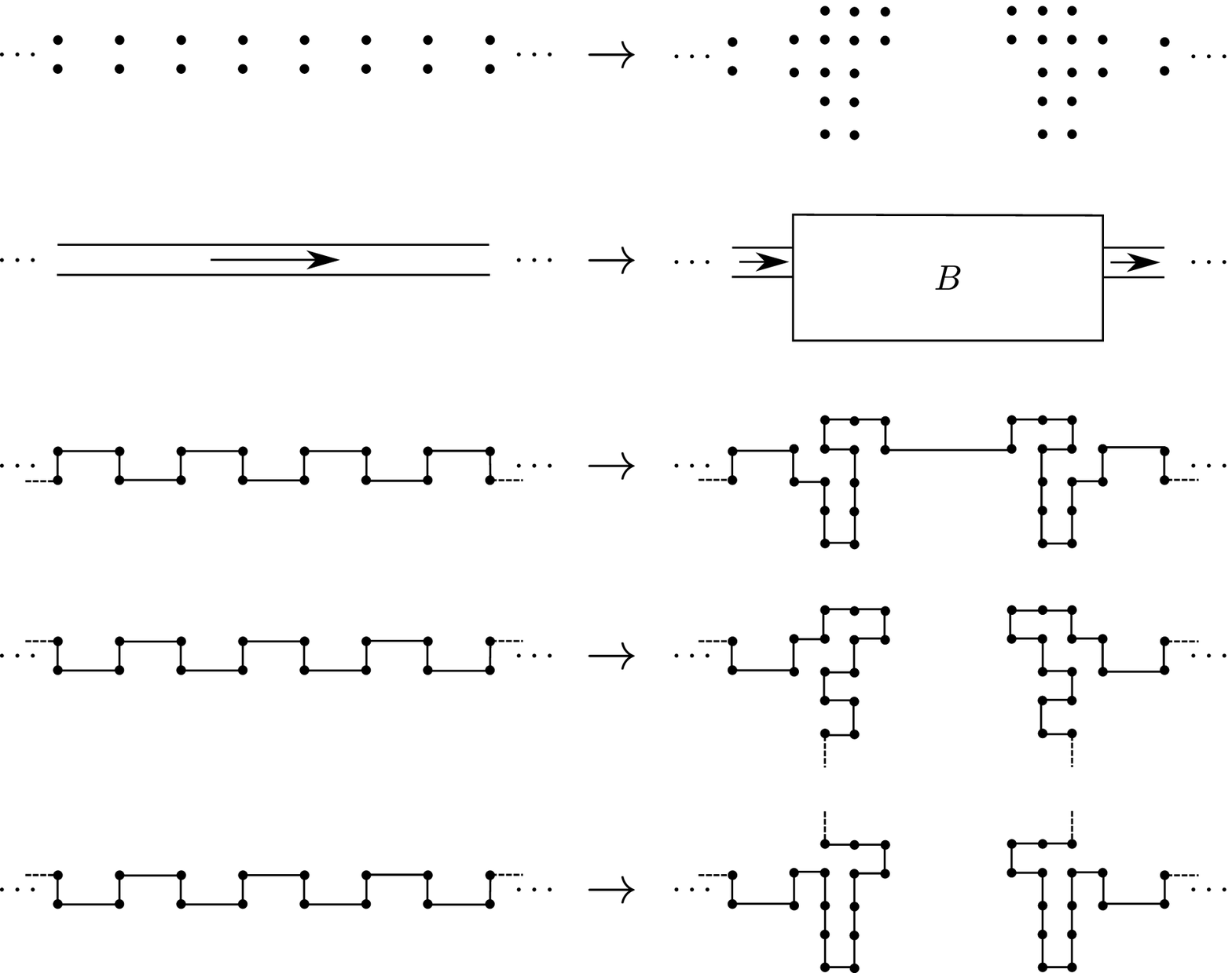}}
\caption{From top to bottom: Attaching a configuration-$\mathbf{B}$ to a 2-chain, its schematic abbreviation, the optimal path for mode 2, and the two possible optimal paths for mode 1.\label{fig:attaching_B}}
\end{center}
\end{figure}
\else
\begin{figure}[t]
\begin{center}
\scalebox{0.7}{\includegraphics{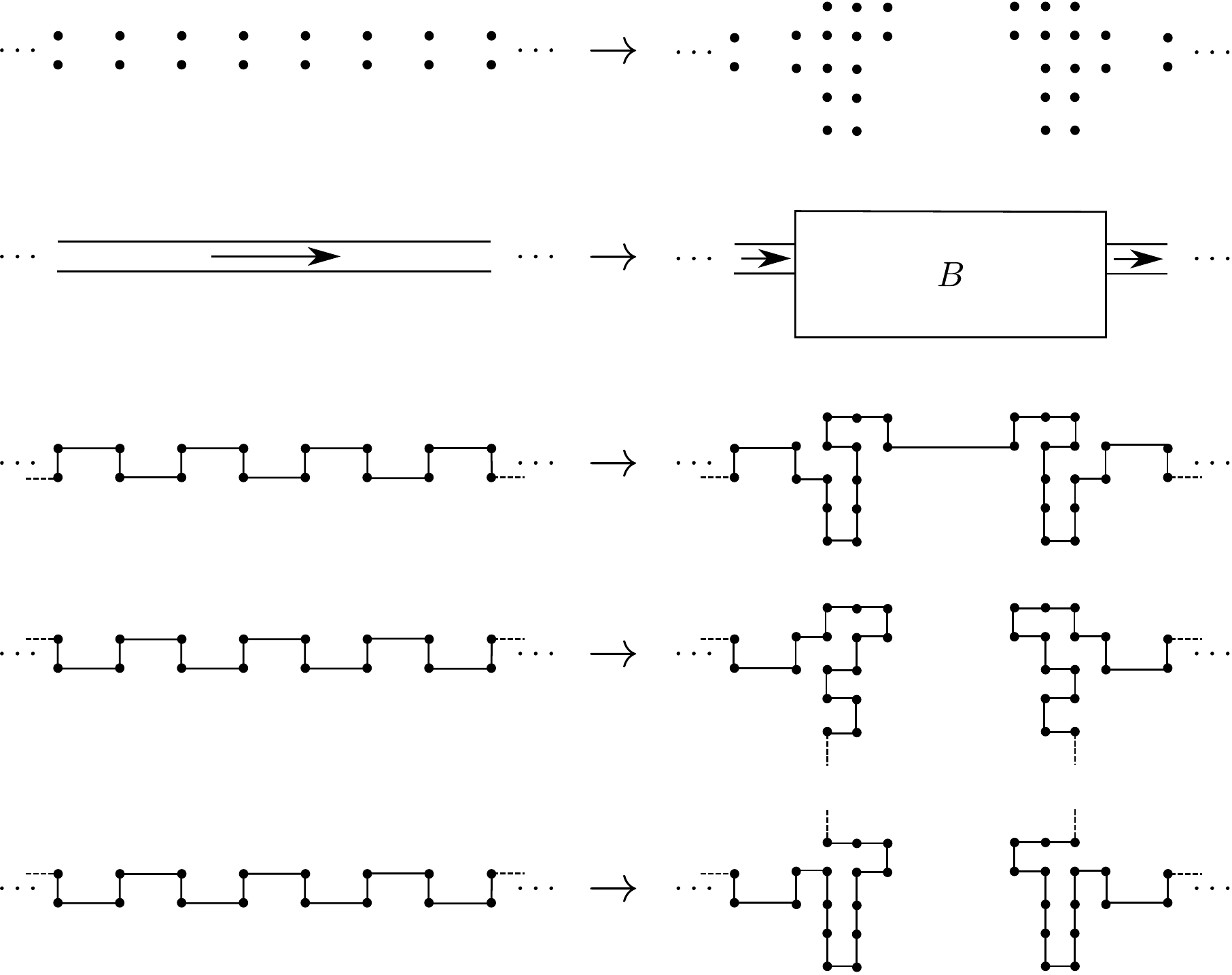}}
\caption{From top to bottom: Attaching a configuration-$\mathbf{B}$ to a 2-chain, its schematic abbreviation, the optimal path for mode 2, and the two possible optimal paths for mode 1.\label{fig:attaching_B}}
\end{center}
\end{figure}
\fi

\textbf{The reduction.}
Let $(V,D,C)$ be an instance of a binary constraint satisfaction problem, with primal graph $G=\Rs[n,d]$.
We may assume w.l.o.g.~that every constraint is binary (i.e.~there are no unary constraints).

First, we encode the variables.
Let $\gamma = \gamma(D)$ be a parameter to be determined later.
For any $\mathbf{r} = (r_1,\ldots,r_d) \in \mathbb{Z}^d$, let 
\ifabstract
$U(\mathbf{r}) := \{(x_1,\ldots,x_d) \in \mathbb{R}^d : \text{ for all } i\in [d], r_i \gamma \leq x_i < (r_i+1) \gamma\}$.
\else
\[
U(\mathbf{r}) := \{(x_1,\ldots,x_d) \in \mathbb{R}^d : \text{ for all } i\in [d], r_i \gamma \leq x_i < (r_i+1) \gamma\}.
\]
\fi
Let us identify $V$ with $[n]^d$, in the obvious way.
For each $\uu\in V$, we introduce a family of $d$ 2-chains $\Gamma(\uu,1), \ldots, \Gamma(\uu,|D|)$.
We will ensure that
\ifabstract
$\bigcup_{i=1}^{|D|} \Gamma(\uu,i) \subset U(\uu) \cup \bigcup_{i=1}^d \left(U(\uu-\mathbf{e_i}) \cup U(\uu+\mathbf{e_i}) \right)$,
\else
$\bigcup_{i=1}^{|D|} \Gamma(\uu,i) \subset U(\uu) \cup \bigcup_{i=1}^d \left(U(\uu-\mathbf{e_i}) \cup U(\uu+\mathbf{e_i}) \right)$,
\fi
where $\mathbf{e}_1,\ldots,\mathbf{e}_d$ is the standard orthonormal basis in $\mathbb{R}^d$.
We need to enforce that in any optimal solution, exactly one of the 2-chains $\Gamma(\uu,1)$, $\ldots$, $\Gamma(\uu,|D|)$ is traversed in mode 2.
Intuitively, this will correspond to assigning the value $i$ to variable $\uu$.
To that end, we construct the 2-chains $\Gamma(\uu,i)$ such that there exists a 2-plane $h(\uu,i)$, where subsets of the 2-chains are arranged as in Figure \ref{fig:var}.
Namely, for any $i\in [|D|]$ we introduce a configuration-$\mathbf{B}$ $B(\uu,i)$ that is attached to $\Gamma(\uu,i)$ at some $j_i$.
Moreover, for any $i\in [|D|-1]$, we introduce a configuration-$\mathbf{H}$ $H(\uu, i)$, such that $H(\uu,i)$ is the right neighbor of $\Gamma(\uu,i)$ at $j_i$, and the left neighbor of $\Gamma(\uu,i+1)$ at $j_{i+1}$.

\ifabstract
\begin{figure}[t]
\begin{center}
\scalebox{0.45}{\includegraphics{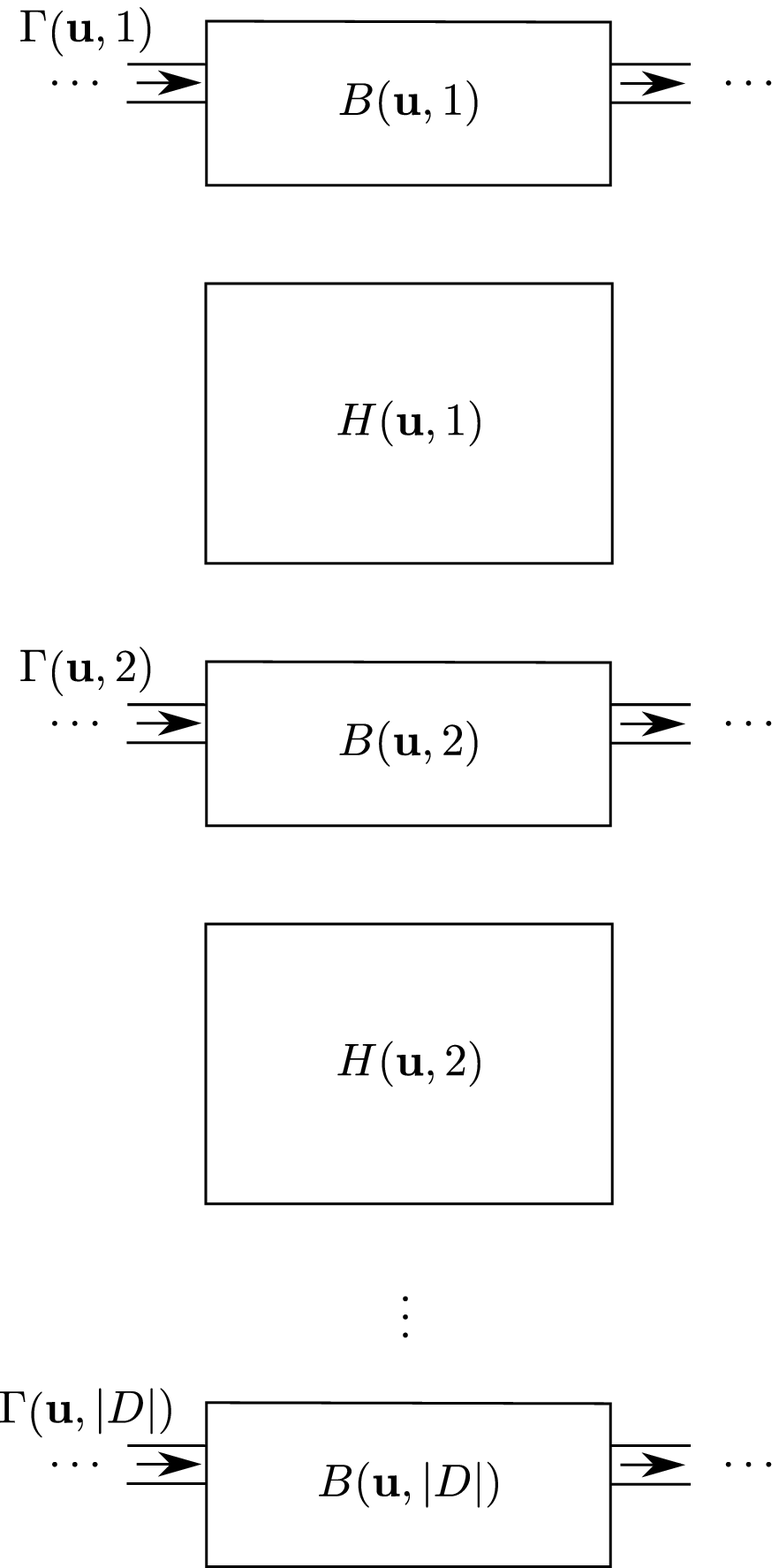}}
\caption{Gadget corresponding to some variable $u$.\label{fig:var}}
\end{center}
\end{figure}
\else
\begin{figure}[t]
\begin{center}
\scalebox{0.6}{\includegraphics{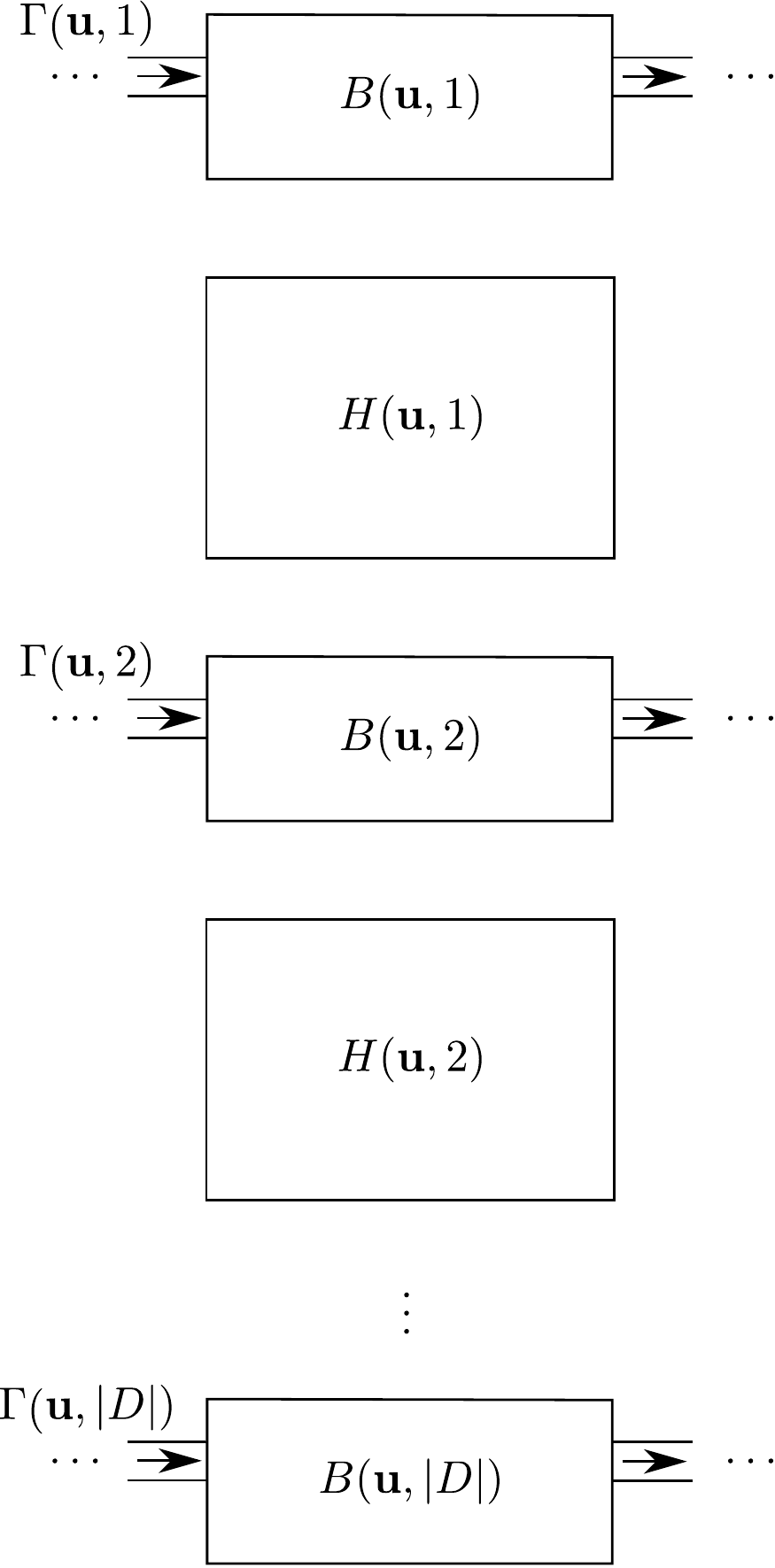}}
\caption{A gadget corresponding to some variable $u$.\label{fig:var}}
\end{center}
\end{figure}
\fi

Next, we encode the constraints.
Let $\langle (\uu,\vv), R \rangle \in C$ be a constraint.
For every pair of values $(i,j)\in D^2$, with $(i,j)\notin R$, we need to ensure that in any optimal solution, at most one of the 2-chains $\Gamma(\uu,i)$, $\Gamma(\vv,j)$ is traversed in mode 1.
To that end, we add two new vertices $x(\uu,i,\vv,j)$, $y(\uu,i,\vv,j)$, and a new 2-chain $\Gamma(\uu,i,\vv,j)$ from $x(\uu,i,\vv,j)$ to $y(\uu,i,\vv,j)$.
We arrange the 2-chains such that there exists a 2-plane $h(\uu,\vv,i,j)$, with subsets of the 2-chains $\Gamma(\uu,i)$, and $\Gamma(\vv,j)$ being arranged in $h(\uu,\vv,i,j)$ as in Figure \ref{fig:constraint}.
Namely, we introduce configurations-$\mathbf{B}$ $B(\uu,i,\vv,j,1)$, $B(\uu,i,\vv,j,2)$, $B(\uu,i,\vv,j,3)$, such that $B(\uu,i,\vv,j,1)$ is attached to $\Gamma(\uu,i)$ at some $\ell_i$, $B(\uu,i,\vv,j,2)$ is attached to $\Gamma(\uu,i,\vv,j)$ at some $\ell_2$, and $B(\uu,i,\vv,j,3)$ is attached to $\Gamma(\vv,j)$ at some $\ell_3$.
Moreover, we add a configuration-$\mathbf{H}$ $H(\uu,i,\vv,j,1)$ that is a right neighbor of $\Gamma(\uu,i)$ at $\ell_1$, and a left neighbor of $\Gamma(\uu,i,\vv,j)$ at $\ell_2$, and a configuration-$\mathbf{H}$ $H(\uu,i,\vv,j,2)$ that is a right neighbor of $\Gamma(\uu,i,\vv,j)$ at $\ell_2$, and a left neighbor of $\Gamma(\vv,j)$ at $\ell_3$.

\ifabstract
\begin{figure}[t]
\begin{center}
\scalebox{0.45}{\includegraphics{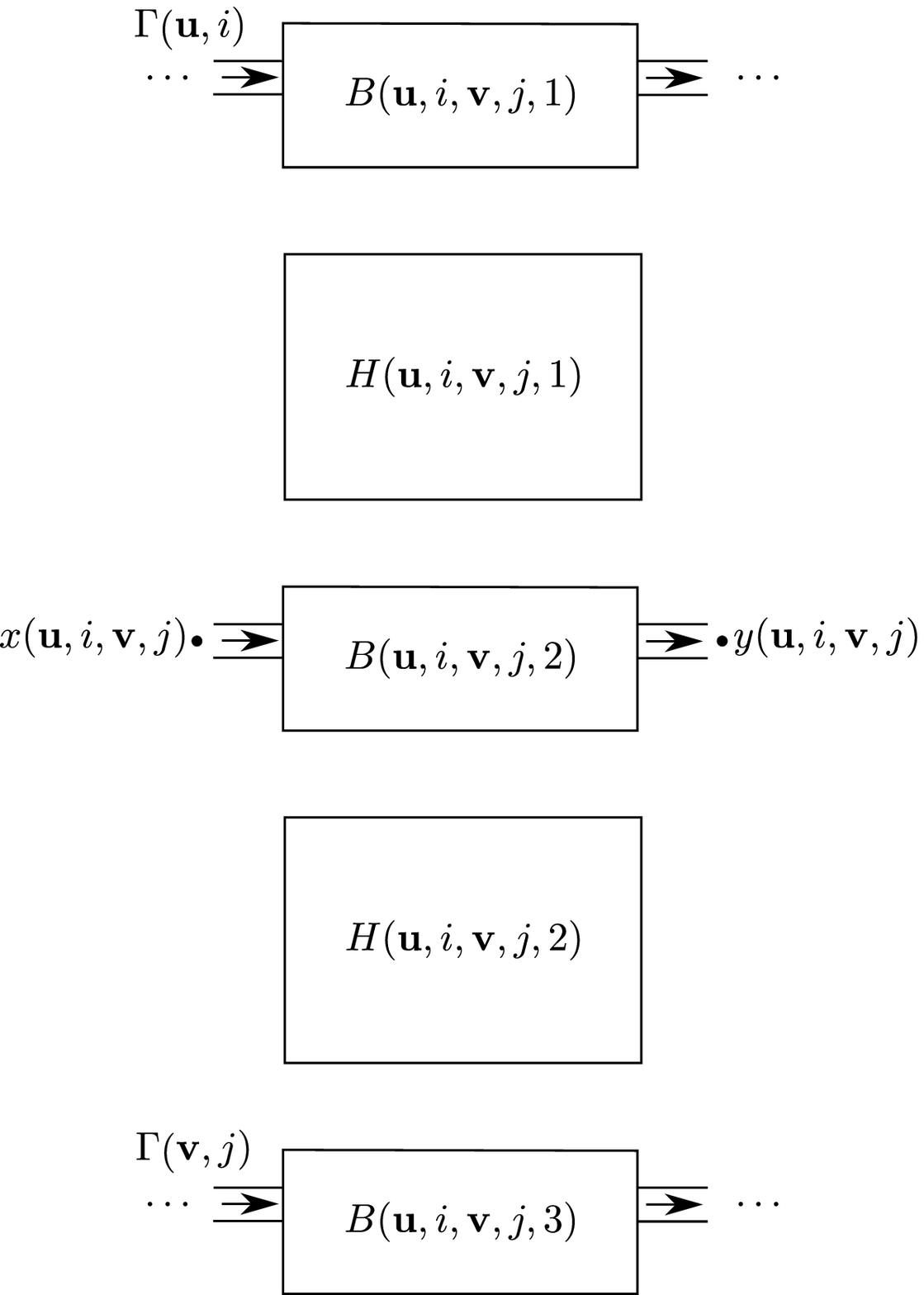}}
\caption{Part of a gadget encoding $(i,j)\notin R$, for some constraint $\langle (\uu,\vv),R\rangle$.\label{fig:constraint}}
\end{center}
\end{figure}
\else
\begin{figure}[t]
\begin{center}
\scalebox{0.6}{\includegraphics{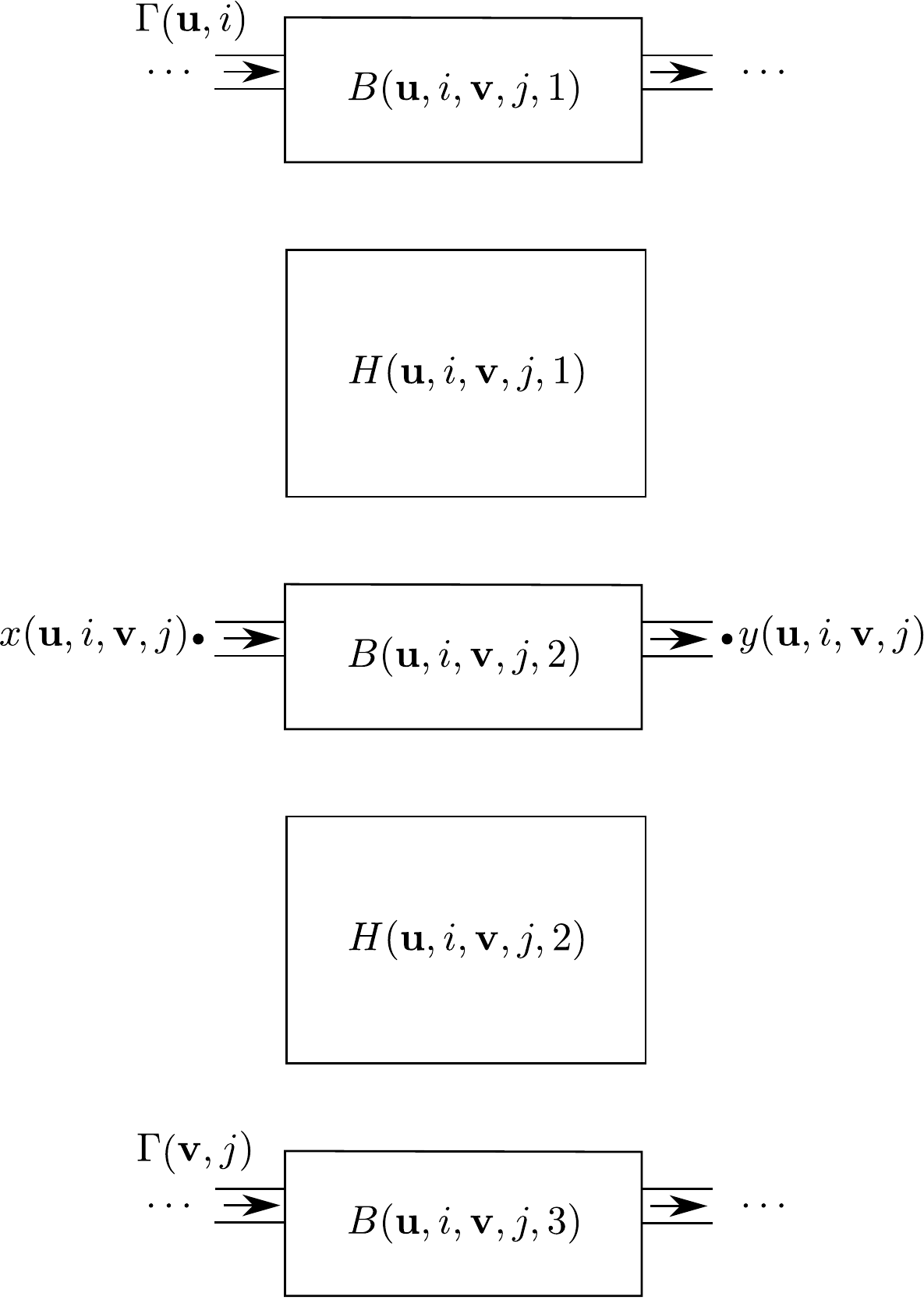}}
\caption{Part of a gadget encoding $(i,j)\notin R$, for some constraint $\langle (\uu,\vv),R\rangle$.\label{fig:constraint}}
\end{center}
\end{figure}
\fi

Finally, we need to ensure that the optimal solution induces a traversal of all the 2-chains, such that each 2-chain is traversed without interruptions.
This can be done by introducing 1-chains between the endpoints of 2-chains which we want to appear consecutively in the optimal traversal.
Initially, we unmark all 2-chains in the construction.
Observe that the graph $G$ is Hamiltonian.
Fix a Hamiltonian path $P$ in $G$.
We construct a total ordering of all the 2-chains in the construction as follows.
We start from the empty ordering, and we consider all vertices in the order they are visited by $P$.
When considering a vertex $u$, we extend the ordering by appending all the 2-chains $\Gamma(\uu,i)$, for all $i\in [|D|]$.
Next, we also append all the 2-chains $\Gamma(\uu,i,\vv,j)$, for all $i,j\in [|D|]$, and $v\in V(G)$, that we have not added to the ordering yet.
This process  clearly results in a total ordering of all the 2-chains in the construction.
Let $k$ be the total number of 2-chains, for any $i\in [k]$, let $p_i$, $q_i$ be points such that the $i$-th chain is from $p_i$ to $q_i$.
For any $i\in [k-1]$, we add a 1-chain from $q_i$ to 
$q_{i+1}$.
When adding a 1-chain $Y$ between the endpoints of two 2-chains involving only one variable $\uu$ (i.e.~$\Gamma(\uu,i)$, and $\Gamma(\uu,i+1)$ for some $i$), we ensure that $Y\subset U(\uu)$.
Similarly, when we add a 1-chain $Y$ between the endpoints of two 2-chains involving only two variable $\uu,\vv$ (i.e.~either $\Gamma(\uu,i)$ and $\Gamma(\uu,j,\vv,\ell)$ for some $i,j,\ell$, or $\Gamma(\uu,i)$ and $\Gamma(\vv,j)$ for some $i,j$), we ensure that $Y\subset U(\uu) \cup U(\vv)$.

Finally, we introduce two points $p^*$, and $q^*$, and we add a 1-chain from $p^*$ to $p_1$, and a 1-chain from $q_t$ to $q^*$.
We choose the point $p^*$ to be in $U(-2 \cdot \mathbf{e}_1)$,
and the point $q^*$ to be in $U((n+2)\cdot \mathbf{e}_1)$.
We can clearly chose the 1-chains from $p^*$ to $p_1$, and from $q_t$ to $q^*$ so that the following is satisfied.

\begin{lemma}\label{lem:endpoints}
Any optimal Traveling Salesperson path from in the constructed instance has endpoints $p^*$ and $q^*$.
\end{lemma}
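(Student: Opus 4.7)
The plan is to argue that any Hamiltonian path whose endpoints differ from $\{p^*, q^*\}$ is strictly longer than a canonical Hamiltonian path with these endpoints. The argument exploits two structural features of the construction: $p^*$ lies alone in the cell $U(-2\mathbf{e}_1)$ and $q^*$ lies alone in $U((n+2)\mathbf{e}_1)$, so that every point other than the first few vertices of the 1-chain from $p^*$ to $p_1$ (resp.\ from $q_k$ to $q^*$) lies at Euclidean distance $\Omega(\gamma)$ from $p^*$ (resp.\ $q^*$); and the 1-chain rigidity property $\|x_i-x_j\|_1 \geq |i-j|$ for $|i-j| \leq 20$ prevents local folding of any 1-chain.

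First, I would exhibit a reference Hamiltonian path $H^{\mathrm{ref}}$ with endpoints $p^*$ and $q^*$: starting at $p^*$, traverse the 1-chain from $p^*$ to $p_1$ monotonically; then walk through the sequence of 2-chains dictated by the construction, each in an optimal mode per Lemma~\ref{lem:two_chain}, interleaved with the connecting 1-chains in their given order; finally traverse the 1-chain from $q_k$ to $q^*$ monotonically, ending at $q^*$. A direct accounting bounds its length $L^{\mathrm{ref}}$ polynomially in the size of the CSP instance, and by choosing the cell size $\gamma$ large enough one may assume $\gamma > 2 L^{\mathrm{ref}}$.

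Second, I would show that $p^*$ must be an endpoint of any optimal Hamiltonian path $H$. If $p^*$ were internal in $H$ with neighbors $y, z$, then cell-separation forces at least one of $y, z$ to lie in the bounded set $\{x_2, x_3, \ldots\}$ of early 1-chain vertices near $p^*$, since otherwise both edges $p^*y$ and $p^*z$ would have length $\geq \gamma > L^{\mathrm{ref}}$, immediately contradicting optimality. In the remaining case, both $y, z$ are early chain vertices; a local surgery—remove the two edges $p^*y, p^*z$, insert the shortcut $yz$ (whose length is at most $|p^*y| + |p^*z| - 1$ by the $L^1$ rigidity), and reattach $p^*$ as a new endpoint via its chain-neighbor $x_2$—produces a Hamiltonian path of strictly smaller length with $p^*$ at one end, again contradicting optimality of $H$. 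The same argument applied to $q^*$ concludes that $\{p^*, q^*\}$ are the endpoints of the optimal path.

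The main obstacle is the reattachment step of the surgery: after shortcutting $yz$, one must append $p^*$ at some endpoint of the residual path by a short edge, and it is not a priori clear that the relevant chain-neighbor $x_2$ is available as an endpoint. This is handled by first proving, as a preliminary structural claim, that in any near-optimal Hamiltonian path the 1-chain from $p^*$ is traversed essentially monotonically from one end to the other; this one-dimensional rigidity follows by a standard uncrossing induction using $\|x_i-x_j\|_1 \geq |i-j|$, since any backward step across the chain can be replaced by a monotone traversal with strictly smaller total length.
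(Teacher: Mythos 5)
The paper does not actually supply a proof of this lemma: after placing $p^*$ in $U(-2\mathbf{e}_1)$ and $q^*$ in $U((n+2)\mathbf{e}_1)$ and attaching 1-chains to $p_1$ and $q_t$, the authors simply assert that the 1-chains ``can clearly be chosen'' so that the lemma holds. So your attempt has to stand on its own, and I think the high-level intuition (exploit the geographic isolation of $p^*$ together with the $L^1$ rigidity of the 1-chain) is the right one. However, two steps in your execution do not hold up.

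First, the condition $\gamma > 2L^{\mathrm{ref}}$ is unattainable: the reference path must traverse every 1-chain and 2-chain with unit-length steps, each chain spans $\Theta(1)$ cells, there are $\Theta(n^d|D|^2)$ chains, and so $L^{\mathrm{ref}} = \Theta(n^d|D|^2\gamma)$ grows linearly in $\gamma$. Increasing $\gamma$ can never make a single cell-crossing edge of length $\approx\gamma$ exceed $L^{\mathrm{ref}}$, so you cannot dispatch the ``both neighbors far'' case by the brute observation that one long edge already beats the reference length. You need a comparative accounting (``the alternative path must waste at least as much as the reference somewhere else'') rather than an absolute one, and your proposal does not set that up.

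Second, the surgery argument is circular as structured. You cut $p^*y$ and $p^*z$, insert $yz$, and then must reattach $p^*$ as a new endpoint of the resulting Hamiltonian path; but the chain-neighbor $x_2$ is generally an internal vertex of that residual path, so the cheap reattachment edge $p^*x_2$ is not available. You acknowledge this and defer to a ``preliminary structural claim'' that any near-optimal path traverses the 1-chain monotonically from one end to the other. But if that claim is proved, it already yields the lemma outright: $p^*$ is a degree-one vertex of the 1-chain, so a monotone traversal forces $p^*$ to be an endpoint, and the surgery plays no role. The claim is therefore where the entire burden of the proof lies, yet it is only gestured at. Moreover, the rigidity $\|x_i-x_j\|_1\geq|i-j|$ is stipulated only for $|i-j|\le 20$; a ``standard uncrossing induction'' applied naively would also need to exclude long-range reorderings of the 1-chain (jumps with $|i-j|>20$), and this requires a separate argument, presumably using that the middle of the $p^*$-chain is far (distance $\Omega(\gamma)$) from every other gadget. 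That distance argument is compatible with the construction but is absent from your write-up. As submitted, the proposal identifies the right ingredients but does not assemble them into a complete proof.
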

\fi

\ifmove
\textbf{Analysis.}
Let $X \subset \mathbb{R}^d$.
For some integer $b$, a subset $Y \subseteq X$ is called a \emph{$b$-component} if for all $c\in Y$, we have 
\[
\min\{\|c-c'\|_2 : c'\notin Y\} \geq b,
\]
and
\[
\max\{\|c-c'\|_2 : c' \in Y\} < b,
\]
and $Y$ is maximal with respect to this property.
Note that $X$ might not contain any $b$-components.

For some integer $k$, a \emph{$k$-path} in $X$ is a set of $k$ vertex-disjoint, not closed paths covering $X$.
In particular, Traveling Salesperson paths are $1$-paths.
We say that a subset $Y \subseteq X$ is \emph{$(b,k_0)$-compact}, if for all positive integers $k\leq k_0$, an optimal $k$-path has length less than $b$ plus the length of the optimal $(k+1)$-path.
Note that any $b$-component is trivially $(b,|G|-1)$-compact.

We recall the following from \cite{DBLP:journals/tcs/Papadimitriou77}.
For completeness, we reproduce its proof here.

\begin{lemma}[Papadimitriou \cite{DBLP:journals/tcs/Papadimitriou77}]\label{lem:pap}
Let $d\geq 2$, and let $X \subset \mathbb{R}^d$ be a finite set of points.
Suppose that we have $N$ $a$-components $G_1,\ldots,G_N$ in $X$, such that the distance between any two components is at least $2a$, and $G_0$, the remaining part of $X$, is $(a,N+1)$-compact.
Suppose that any optimal Traveling Salesperson path of $X$ has its endpoints on $G_0$, and that it does not contain edges between any two $a$-components of $X$.
Let $L_1,\ldots,L_N$ be the lengths of the optimal $1$-paths of $G_1,\ldots,G_N$ and $L_0$ the length of the optimal $(N+1)$-path of $G_0$.
If there is a $1$-path $P$ of $X$ consisting of the union of an optimal $(N+1)$-path of $G_0$, $N$ optimal $1$-paths of $G_1,\ldots,G_N$, and $2N$ edges of length $a$ connecting $a$-components to $G_0$, then $P$ is optimal.
If no such $1$-path exists, the optimal $1$-path of $X$ has length greater than $L=L_0 + L_1 + \ldots + L_N + 2Na$.
\end{lemma}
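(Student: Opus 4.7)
My plan is to show that every Traveling Salesperson path $P'$ of $X$ has length at least $L$, with equality exactly when a $1$-path $P$ of the canonical form described in the statement exists; this yields both halves of the lemma at once. Since $P'$ has both endpoints in $G_0$ and contains no edge between two distinct $a$-components, $P'$ alternates between runs inside $G_0$ and excursions into individual components $G_i$. Let $k_i \ge 1$ be the number of excursions into $G_i$ and set $t = \sum_{i=1}^N k_i \ge N$. Then $P'$ decomposes canonically into a $(t+1)$-path of $G_0$ (one piece for each gap between consecutive excursions, plus the two endpoint pieces), a $k_i$-path of each $G_i$, and exactly $2t$ transition edges, each joining some $G_i$ to $G_0$.

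The main estimate then comes from three lower bounds. Every transition edge has length at least $a$, since by definition of an $a$-component a vertex of $G_i$ lies at distance at least $a$ from every vertex outside $G_i$. Because $G_i$ is itself $(a,|G_i|-1)$-compact, iterating the defining inequality gives $L_i^{(k_i)} > L_i - (k_i-1)a$. Analogously, the $(a,N+1)$-compactness of $G_0$ yields $L_0^{(t+1)} > L_0 - (t-N)a$. Combining these,
\[
|P'| \;\ge\; L_0^{(t+1)} + \sum_{i=1}^N L_i^{(k_i)} + 2ta \;=\; L + \bigl(L_0^{(t+1)} - L_0\bigr) + \sum_{i=1}^N \bigl(L_i^{(k_i)} - L_i\bigr) + 2(t-N)a,
\]
and the three compactness estimates together absorb a deficit of at most $2(t-N)a$ in the parenthesised terms, exactly matched by the $+2(t-N)a$ transition surplus. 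Thus $|P'| \ge L$, and equality forces $t = N$, every $k_i = 1$, every internal sub-path in $G_i$ to be an optimal $1$-path, the $G_0$-piece to be an optimal $(N+1)$-path, and every transition edge to have length exactly $a$---which is precisely the data of the canonical $P$.

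The main obstacle I anticipate is applying the $G_0$-compactness when $t > N+1$, since the stated $(a,N+1)$-compactness only directly iterates once past index $N+1$. The fix is that any edge $e$ of the optimal $(N+1)$-path of $G_0$ must satisfy $|e| < a$: removing $e$ produces an $(N+2)$-path of length $L_0 - |e|$, so the compactness inequality $L_0 < a + L_0^{(N+2)}$ forces $|e| < a$. Removing any $t - N$ such edges then yields a $(t+1)$-path of length greater than $L_0 - (t-N)a$, establishing $L_0^{(t+1)} > L_0 - (t-N)a$ for all $t \ge N$ and closing the argument. The residual bookkeeping---that endpoints in $G_0$ really produce $t+1$ (not $t$ or $t+2$) $G_0$-pieces, and that each excursion contributes exactly two transition edges---is routine but must be verified carefully, using the hypothesis that no optimal path has edges across two $a$-components.
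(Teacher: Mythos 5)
Your decomposition and arithmetic mirror the paper's proof exactly: restrict $P'$ to each $G_i$, record the excursion counts $k_i$, note $k_0 = \sum_i k_i + 1$ since both endpoints lie in $G_0$ and no edge crosses two $a$-components, count $2(k_0 - 1)$ transition edges of length $\ge a$, lower-bound each restriction via compactness, and sum. You even use the sharper estimate $L_0^{(t+1)} \ge L_0 - (t-N)a$ (where the paper writes $L_0' \ge L_0 - (k_0 - N)a$, which loses an $a$ in the final bookkeeping), and the resulting chain of inequalities collapses exactly to $|P'| \ge L$. So far this is the paper's argument, slightly cleaned up.

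The one step you flag as needing attention --- iterating the $(a,N+1)$-compactness past index $N+1$ --- is a real subtlety, but your proposed fix is logically backwards. Showing that every edge $e$ of the optimal $(N+1)$-path has $|e| < a$ is correct (via $L_0 < a + L_0^{(N+2)}$ and $L_0^{(N+2)} \le L_0 - |e|$). However, removing $t-N$ such edges from that $(N+1)$-path produces \emph{one particular} $(t+1)$-path of length greater than $L_0 - (t-N)a$. This gives an \emph{upper} bound $L_0^{(t+1)} \le L_0 - \sum_j |e_j|$, and says nothing about whether the \emph{optimal} $(t+1)$-path could be much shorter. What you actually need is the lower bound $L_0^{(t+1)} > L_0 - (t-N)a$, which your construction does not deliver: constructing a short path never lower-bounds the optimum. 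The paper (following Papadimitriou) simply asserts this inequality without addressing the iteration issue, so it is implicitly relying on $G_0$ being $(a,k)$-compact for every $k$ that arises, not just $k \le N+1$ --- which does hold in the construction where the lemma is applied. Your instinct to worry about it was sound, but the patch you offer does not close the gap; you would instead need to observe that compactness in fact holds for all relevant $k$, or argue directly from the structure of $G_0$.
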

\begin{proof}
Let $P'$ be an optimal path for $X$.
Suppose that for any $i\in \{0,\ldots,N\}$, the restriction $P_i'$ of $P'$ on $G_i$ is a $k_i$-path, for some integer $k_i\geq 1$.
Since any optimal path for $X$ must have both endpoints in $X_0$, it follows that $k_0=\sum_{i=1}^N k_i + 1$.
Let $L'$ be the length of $P'$, and for each $i\in \{0,\ldots,N\}$, let $L_i'$ be the length of the optimal $k_i$-path for $X_i$.
We have $L_0'\geq L_0 - (k_0-N)a$, and for any $i\in [N]$, we have $L_i'\geq L_i-(k_i-1)a$.
It follows that $L'\geq \sum_{i=1}^N (L_i-(k_i-1)a) + 2(k_0-1)a + L_0 - (k_0-N)a$.
Since $P'$ is optimal, we have $L'=L$, which implies that for any $i\in \{1,\ldots,N\}$, we have $k_i=1$, concluding the proof.
\end{proof}
\fi
\ifmain
\begin{lemma}\appstar\label{lem:tsp_reduction}
Let $d \geq 2$.
Let $\phi=(V,D,C)$ be an instance of a constraint satisfaction problem with domain size $|D|=\delta$, with constraint graph $G=\Rs[n,d]$.
Then, there exists a polynomially-time computable instance $(X,\alpha)$ of TSP in $d$-dimensional Euclidean space, with $|X| \leq n^d \cdot |D|^{O(1)}$, such that the length of the shortest TSP tour for $X$ is at most $\alpha$, if and only if $\phi$ is satisfiable.
\end{lemma}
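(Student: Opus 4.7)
The plan is to prove that the TSP instance $(X,\alpha)$ admits a tour of length $\alpha$ if and only if the CSP instance $\phi$ is satisfiable, by exploiting the compactness principle of Lemma~\ref{lem:pap}. First I would fix the scaling parameter $\gamma=\gamma(D)$ large enough (say $\gamma\gg a\gg $ diameter of any single 2-chain/configuration-$\mathbf{H}$/configuration-$\mathbf{B}$) that every gadget placed inside a cell $U(\mathbf{r})$ is an $a$-component, and all these components are pairwise at distance at least $2a$ from one another. The only part of $X$ that is not inside a gadget is the union of the connecting 1-chains; by construction they form an $(a,N{+}1)$-compact outer skeleton linking the gadgets in the Hamiltonian order chosen on $G$, and by Lemma~\ref{lem:endpoints} any optimal path terminates at $p^\star,q^\star$ on this skeleton. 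Lemma~\ref{lem:pap} then reduces the analysis of an optimal TSP path to the separate analysis of the optimal local traversal of each gadget plus two length-$a$ connecting edges per gadget.

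Next I would analyze the variable gadget of a single $\uu\in V$. By Lemma~\ref{lem:two_chain}, each chain $\Gamma(\uu,i)$ admits exactly two optimal traversals (mode~1 and mode~2) when viewed in isolation. The attached configuration-$\mathbf{B}$ $B(\uu,i)$ (Figure~\ref{fig:attaching_B}) can be traversed at the local optimum only if $\Gamma(\uu,i)$ is in mode~1, whereas the shared configuration-$\mathbf{H}$ $H(\uu,i)$ between $\Gamma(\uu,i)$ and $\Gamma(\uu,i+1)$ can be covered at optimum (Lemma~\ref{lem:configuration_H}) only if \emph{exactly} one of these two chains is in mode~2. Propagating this parity around the sequence of chains and configurations-$\mathbf{H}$ in Figure~\ref{fig:var} forces the gadget, at its total optimum, into exactly one of $|D|$ feasible local configurations, each corresponding to a unique index $i^\star\in[|D|]$ for which $\Gamma(\uu,i^\star)$ is in mode~2 and all other $\Gamma(\uu,i)$ are in mode~1. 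I would interpret $i^\star$ as the value $f(\uu)$.

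The constraint gadget for a pair $(i,j)\notin R$ of a constraint $\langle(\uu,\vv),R\rangle$ is analyzed analogously. Here the two configurations-$\mathbf{H}$ $H(\uu,i,\vv,j,1)$ and $H(\uu,i,\vv,j,2)$ together with the three configurations-$\mathbf{B}$ $B(\uu,i,\vv,j,1),B(\uu,i,\vv,j,2),B(\uu,i,\vv,j,3)$ sharing the auxiliary 2-chain $\Gamma(\uu,i,\vv,j)$ are arranged so that the local optimum is attained unless both $\Gamma(\uu,i)$ and $\Gamma(\vv,j)$ are in mode~2 on the sides touching this constraint; in the forbidden case one is forced into a detour whose extra length is strictly greater than $a$, which by Lemma~\ref{lem:pap} prevents the global tour from attaining the target length $\alpha$. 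Choosing $\alpha$ to be the sum of all gadget-local optima plus the lengths of the connecting 1-chains and the $2Na$ entry/exit edges, we obtain that $(X,\alpha)$ is a yes-instance iff an assignment $f$ exists with $(f(\uu),f(\vv))\in R$ for every constraint, i.e.\ iff $\phi$ is satisfiable. The bound $|X|\le n^d\cdot |D|^{O(1)}$ follows because $G=\Rs[n,d]$ has $n^d$ vertices and $O(d n^d)$ edges, each variable contributes $O(|D|)$ 2-chains, each constraint contributes $O(|D|^2)$ auxiliary 2-chains, and each 2-chain, configuration-$\mathbf{H}$, configuration-$\mathbf{B}$, and 1-chain has $O(\gamma\cdot\mathrm{poly}(d,|D|))$ points.

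The main obstacle is the geometric embedding itself, not the combinatorial bookkeeping. We must simultaneously (i) place all 2-chains inside the prescribed cells $U(\uu)\cup\bigcup_i U(\uu\pm\be_i)$, (ii) orient the 2-planes $h(\uu,i)$ and $h(\uu,\vv,i,j)$ so that the configurations-$\mathbf{H}$ can serve as left/right neighbors of the two chains they are supposed to bridge, and (iii) route all connecting 1-chains through the cells $U(\uu)\cup U(\vv)$ with only $\theta^\star$-small rotations between consecutive segments, so that Lemma~\ref{lem:two_chain} applies and so that no two gadgets come within distance $2a$. It is precisely at this routing step that the assumption $d\ge 3$ is used: the extra dimension lets a 2-ribbon corresponding to some chain $\Gamma(\uu,i,\vv,j)$ navigate around the 2-planes of unrelated chains without intersecting them and lets consecutive 2-planes along a ribbon rotate freely around a common normal line, a flexibility unavailable in the plane. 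Once this embedding is realized, everything reduces to the local analyses described above.
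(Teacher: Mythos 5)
Your high-level plan (use Lemma~\ref{lem:pap} to decouple local gadget analysis from the global argument, derive mode-1/mode-2 constraints on the 2-chains from the requirement to cover each configuration-$\mathbf{H}$ with exactly two length-$a$ edges, and set $\alpha$ to the sum of local optima plus $2Na$) is indeed the approach the paper takes. However, there are several concrete errors in how you set up and apply this machinery.

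First, you misidentify the $a$-components. You claim that ``every gadget placed inside a cell $U(\mathbf{r})$ is an $a$-component'' and that ``the only part of $X$ that is not inside a gadget is the union of the connecting 1-chains,'' which forms the compact remainder $G_0$. This is not how the construction decomposes, and it cannot be: a configuration-$\mathbf{B}$ replaces a subset of the points of a 2-chain, so it sits at distance $O(1)<a$ from the rest of the chain, and the 2-chains themselves stretch across several cells and are joined end-to-end by 1-chains. In the paper's proof the $a$-components $G_1,\dots,G_N$ are \emph{only} the configurations-$\mathbf{H}$; everything else — all 2-chains, all configurations-$\mathbf{B}$, and all 1-chains — together form the single set $G_0$, whose $(a,N{+}1)$-compactness has to be checked separately (this is property (P4) in the paper, verified via the separation conditions (i)--(iv)). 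Your intended scaling $a\gg\text{diam}(\text{2-chain})$ is also impossible, since a 2-chain has far more than $a=20$ points.

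Second, your per-$\mathbf{H}$ parity claim in the variable gadget is wrong. You assert that $H(\uu,i)$ can be covered at optimum ``only if \emph{exactly} one of $\Gamma(\uu,i)$, $\Gamma(\uu,i{+}1)$ is in mode 2.'' The correct local condition is only that \emph{not both} are in mode 2: if both are in mode 1 the $\mathbf{H}$ can still be covered (from one side, with the other side's $\mathbf{B}$ routed to a different $\mathbf{H}$). ``Exactly one per pair in mode 2'' would force an alternating pattern among $\Gamma(\uu,1),\dots,\Gamma(\uu,|D|)$, which is not what you want and does not imply ``exactly one chain overall in mode 2.'' The paper's reasoning is instead a counting argument across the whole gadget: each mode-1 $\mathbf{B}$ routes to exactly one neighboring $\mathbf{H}$ (choosing left or right) and a mode-2 $\mathbf{B}$ routes to none, so with $|D|$ chains and $|D|-1$ configurations-$\mathbf{H}$, all of which must be hit exactly once in a structurally optimal 1-path, exactly one chain must be in mode 2. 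The constraint gadget then uses the same accounting on $H(\uu,i,\vv,j,1)$ and $H(\uu,i,\vv,j,2)$ together with the fact that the auxiliary chain $\Gamma(\uu,i,\vv,j)$ can serve at most one of them.

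Third, your claim that a violated constraint forces ``a detour whose extra length is strictly greater than $a$'' is both unnecessary and not established by the construction. Lemma~\ref{lem:pap} already provides the dichotomy: either a 1-path of the prescribed structural form (optimal $(N{+}1)$-path of $G_0$, $N$ optimal 1-paths of the $\mathbf{H}$'s, and exactly $2N$ length-$a$ connectors) exists, in which case it is optimal with length $L=\alpha$, or no such 1-path exists and every 1-path has length $>L$. No quantitative bound on the excess is needed, and indeed the excess need not exceed $a$. The correct argument is that if $\phi$ is unsatisfiable then in every traversal some $\mathbf{H}$ fails to be covered with exactly two connectors, so the structurally optimal form is unavailable and the optimum exceeds $\alpha$.

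Your last paragraph, about the role of $d\ge 3$ in allowing ribbons to rotate and route without crossing, and the size bound $|X|\le n^d\cdot|D|^{O(1)}$, match the paper's discussion and are fine as stated. But the core bookkeeping — which sets are the $a$-components, what local condition each configuration-$\mathbf{H}$ imposes, and how the counting across a gadget forces exactly one mode-2 chain — needs to be corrected before the reduction is sound.
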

\fi
\ifmove
\ifappendix
\begin{proof}[Proof (of Lemma~\ref{lem:tsp_reduction})]
\else
\begin{proof}
\fi
Given $\phi$, we compute the instance $(X,\alpha)$ as described above.
Let $G_1,\ldots,G_N$ be all those components-$\mathbf{H}$ in the constructed instance, and let $G_0 = X \setminus \bigcup_{i=1}^N G_i$.

Let $a=20$.
By leaving enough space between the different components in the above construction, we can ensure that the following properties are satisfied:
\begin{description}
\item{(P1)}
Every configuration-$\mathbf{H}$ used in the construction is an $a$-component.
\item{(P2)}
The distance between any two distinct configurations-$\mathbf{H}$ is at least $2a$.
\item{(P3)}
Every optimal Traveling Salesperson path of $X$ has its endpoints in $G_0$; this follows by Lemma \ref{lem:endpoints} since $p^*,q^*\in G_0$.
\item{(P4)}
The set $G_0$ is $(a,N+1)$-compact.
We can satisfy this property by making sure that the following conditions hold:
(i) Distinct 2-chains are at distance at least 20 from each other.
(ii) Distinct 1-chains are at distance at least 20 from each other. (iii) For any 1-chain $C$ and 2-chain $C'$ that do not share common endpoints, we have that $C$ and $C'$ are at distance at least 20 from each other.
(iv) Let $C$ be a 1-chain that shares an endpoint $p$ with some 2-chain $C'$.
Let $(Q,{\cal H})$ be the ribbon corresponding to $C'$.
Then, there exists a line $\ell$ in $\mathbb{R}^d$, with $p\in \ell$, such that the $20$ segments in $Q$ closest to $q$ lie in $\ell$, and the $20$ points in $C$ closest to $p$ also lie in $\ell$; all other points in $C$ are at distance at least 20 from $C'$.
\end{description}

It is easy to check that the above conditions can be satisfied by setting the width $\gamma$ of each $U(\uu)$ to be a sufficiently large polynomial in $|D|$, say $\gamma = \Theta(|D|^5)$.

Suppose now that $\phi$ is satisfiable, and let $T:V(G) \to \{1,\ldots,|D|\}$ be a satisfying assignment for $\phi$.
We can build a Traveling Salesperson path for $X$ as follows.
We start at $p^*$, and we traverse the first 1-chain originating at $p^*$.
When we reach a 2-chain $C$, we distinguish between the following cases:

(i) Suppose first that $C$ is a 2-chain of the form $\Gamma(\uu, i)$ for some $\uu \in V(G)$, and $i\in [|D|]$.
If $T(\uu)=i$, then we begin traversing $C$ in mode 2, and if $T(\uu)\neq i$, then we begin traversing $C$ in mode 1.
When we reach the configuration-$\mathbf{B}$ $B(\uu,i)$, if $C$ is traversed in mode 2, then we traverse $B(\uu,i)$ as in figure \ref{fig:configuration_B}; if $C$ is traversed in mode 1, then if $T(\uu)>i$, then we traverse $B(\uu,i)$ and the configuration-$\mathbf{H}$ $H(\uu,i)$; similarly, if $T(\uu)<i$, then we traverse $B(\uu,i)$ and the configuration-$\mathbf{H}$ $H(\uu,-1)$ (see Figure \ref{fig:configuration_B}).
Similarly, if we reach a configuration-$\mathbf{B}$ $B(\uu,i,\vv,j,1)$, if $C$ is traversed in mode 1 then we traverse $B(\uu,i,\vv,j,1)$ together with $H(\uu,i,\vv,j,1)$;
if we reach a configuration-$\mathbf{B}$ $B(\vv,i,\uu,j,3)$, if $C$ is traversed in mode 1 then we traverse $B(\vv,i,\uu,j,3)$ together with $H(\vv,i,\uu,j,2)$ 

(ii)
Suppose next that $C$ is a 2-chain of the form $\Gamma(\uu, i, \vv, j)$ for some $\uu,\vv \in V(G)$, and $i,j\in [|D|]$.
This implies that there exists a constraint $\langle (\uu,\vv), R\rangle$, with $(i,j)\notin R$.
Since $T$ satisfies $\phi$, it follows that either $T(\uu)\neq i$, or $T(\vv)\neq j$.
We have the following sub-cases:
(ii.1) If $T(\uu)\neq i$, and $T(\vv)\neq j$, then we traverse $\Gamma(\uu,i,\vv,j)$ in mode 2.
(ii.2) If $T(\uu)=i$, then we traverse $\Gamma(\uu,i,\vv,j)$ in mode 1, and we traverse $B(\uu,i,\vv,j,2)$ together with $H(\uu,i,\vv,j,1)$.
(ii.3) If $T(\vv)=j$, then we traverse $\Gamma(\uu,i,\vv,j)$ in mode 1, and we traverse $B(\uu,i,\vv,j,2)$ together with $H(\uu,i,\vv,j,2)$.

Upon reaching the end of a 2-chain, we traverse the 1-chain that is attached to it, and we proceed to the next 2-chain.
Eventually, we reach $q^*$, and the path terminates.
It is immediate to check that the resulting path $P$ visits all points in $X$.

By Lemma \ref{lem:configuration_H} we have that every configuration-$\mathbf{H}$ $G_i$ is visited in an optimal way.
By Lemma \ref{lem:two_chain} we have that traversing a 2-chain in either mode 1, or mode 2, results in an optimal path.
Moreover, it follows by \cite{DBLP:journals/tcs/Papadimitriou77} that traversing each configuration-$\mathbf{B}$ as in Figure \ref{fig:configuration_B} results in an optimal $(N+1)$-path for $G_0$.
Therefore, we have the following property:
\begin{description}
\item{(P5)}
The path $P$ consists of the union of an optimal $(N+1)$-path for $G_0$, $N$ optimal $1$-paths for $G_1,\ldots,G_N$, and $2N$ edges of length $a$ connecting $a$-components to $G_0$.
\end{description}
Combining properties (P1)--(P5) with Lemma \ref{lem:pap} we deduce that $P$ is optimal.

On the other hand, any optimal path $P$ must consist of an optimal $(N+1)$-path for $G_0$, $N$ optimal $1$-paths for $G_1,\ldots,G_N$, and $2N$ edges of length $a$ connecting $G_0$ with $G_1,\ldots,G_N$ (two edges for each $a$-component).
By Lemma \ref{lem:configuration_H} we have that for each $i\in \{1,\ldots,N\}$, the configuration-$\mathbf{H}$ $G_i$ is visited as in Figure \ref{fig:configuration_H}.
This implies that $P$ must contain two edges between each configuration-$\mathbf{H}$ $G_i$ and some configuration-$\mathbf{B}$ that is attached to some 2-chain $C$, such that $G_i$ is a neighbor of $C$, and $C$ is traversed in mode 1.
This implies by the construction that for each $\uu\in V(G)$, exactly one of the 2-chains $\Gamma(\uu,j)$ is traversed in mode 2.
We can construct an assignment $T$ for $\phi$ by setting $T(\uu)=j$, where $\Gamma(\uu,j)$ is traversed in mode 2.
It remains to check that $T$ satisfies $\phi$.
Let $\langle(\uu,\vv),R\rangle$ be a constraint, and let $(i,j)\notin R$.
Again, by the construction, we have that at least one of the 2-chains $\Gamma(\uu,i)$, $\Gamma(\vv,j)$ is traversed in mode 2, and therefore either $T(\uu)\neq i$, or $T(\vv)\neq j$.
It follows that $T$ satisfies $\phi$.

It remains to bound $|X|$.
Observe that there are $O(n^d \cdot |D|)$ configurations~-$\mathbf{H}$, and configurations-$\mathbf{B}$, and each one has a constant number of points.
There are also $O(n^d)$ 1-chains, and 2-chains.
Each such chain intersects at most a constant number of sets $U(\uu)$, for $\uu\in \mathbb{Z}^d$.
Therefore, the total number of points in each such chain is at most $O(\gamma) = O(|D|^5)$.
It follows that $|X| = O(n^d \cdot |D|^6)$.
\end{proof}
\fi

\ifmain
\begin{proof}[\ifabstract (of Theorem \ref{th:intro-tsp})\else Proof of Theorem \ref{th:intro-tsp}\fi]
It follows by Theorem \ref{th:gridnolog} \& Lemma \ref{lem:tsp_reduction}.
\end{proof}

Recall that cycle-TSP is the variant of TSP where one seeks to find a cycle visiting all points.
We can prove the same lower bound for cycle-TSP, using a simple modification of the above reduction.
We remark that the same modification was used in \cite{DBLP:journals/tcs/Papadimitriou77} to show that cycle-TSP in the Euclidean plane is NP-complete.

\ifabstract
\begin{theorem}\label{th:cycle-tsp}
If for some $d\ge 2$ and $\epsilon>0$, cycle-{TSP} in $d$-dimensional Euclidean space can be solved in time\\ $2^{O(n^{1-1/d-\epsilon})}$, then ETH fails.
\end{theorem}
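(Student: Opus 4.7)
The plan is to reduce cycle-TSP to path-TSP by taking the construction from Lemma~\ref{lem:tsp_reduction} and closing it into a cycle with a single long 1-chain added ``outside'' the existing configuration. Concretely, I would augment the point set $X$ from Lemma~\ref{lem:tsp_reduction} with a new 1-chain $C^*$ joining $q^*$ back to $p^*$, routed far outside the bounding box of $X$ so that every point of $C^*$ is at distance much larger than the diameter of $X$ from any point of $X$ other than its two endpoints. The new length budget is $\alpha' = \alpha + \len(C^*)$, where $\len(C^*)$ is the length of this added chain.

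I would then argue (analogously to properties (P1)--(P5) and Lemma~\ref{lem:pap}) that in any optimal Hamiltonian cycle, the interior of $C^*$ must be traversed contiguously, since any detour off $C^*$ and back would cost far more than the savings available inside $X$: the distances on $C^*$ are unit but the gap between $C^*$ and $X$ is made arbitrarily large. Thus the cycle splits into $C^*$ (traversed in one shot between $p^*$ and $q^*$) plus a Hamiltonian path on $X$ from $p^*$ to $q^*$. By the same argument as in the proof of Lemma~\ref{lem:tsp_reduction}, the latter has length at most $\alpha$ if and only if $\phi$ is satisfiable.

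The only subtlety is choosing $C^*$ so that it is compatible with the existing compactness properties used in Lemma~\ref{lem:pap}, i.e.\ so that $G_0 \cup C^*$ is still $(a, N{+}1)$-compact and the endpoints $p^*, q^*$ still lie in this enlarged base set. This is a bookkeeping step: route $C^*$ through a region of $\mathbb{R}^d$ disjoint from all $U(\mathbf{r})$ used in Section~\ref{sec:TSP}, and at distance $\gg 20$ from everything except at its two endpoints. Since $d\ge 2$, there is plenty of room to do this (and for $d\ge 3$ even more), and the number of added points is $O(\len(C^*))$, which can be kept polynomial in the size of $X$.

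The main (minor) obstacle is verifying that adding $C^*$ does not accidentally create a shorter cycle that ``cheats'' by merging traversal of $X$ with traversal of $C^*$ in a non-contiguous way; this is handled by making the gap between $C^*$ and $X$ so large that any two crossings incur a penalty far exceeding $\alpha$. Given this, the final bound follows exactly as in the proof of Theorem~\ref{th:intro-tsp}: combining the modified reduction with Theorem~\ref{th:gridnolog} gives that a $2^{O(n^{1-1/d-\epsilon})}$ algorithm for $d$-dimensional cycle-TSP would yield an $f(|V|)|I|^{o(|V|^{1-1/d})}$ algorithm for $\csp(\R_d)$, contradicting ETH.
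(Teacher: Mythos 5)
Your proposal matches the paper's proof: the paper likewise reduces cycle-TSP to path-TSP by adding a single 1-chain joining $p^*$ and $q^*$, kept at distance at least $20$ (the constant $a$ in Lemma~\ref{lem:pap}) from all other gadgets. Your requirement that the new chain be at distance much larger than the diameter of $X$ is more conservative than necessary but otherwise the argument is the same.
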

\else
\begin{theorem}\label{th:cycle-tsp}
If for some $d\ge 2$ and $\epsilon>0$, cycle-{TSP} in $d$-dimensional Euclidean space can be solved in time $2^{O(n^{1-1/d-\epsilon})}$, then ETH fails.
\end{theorem}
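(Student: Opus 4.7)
The plan is to reduce cycle-TSP to the path-TSP instance $(X,\alpha)$ produced by Lemma~\ref{lem:tsp_reduction} by forcing any optimal tour to behave like a path with prescribed endpoints $p^*$ and $q^*$. Concretely, I would augment $X$ with a long 1-chain $C^*$ connecting $q^*$ back to $p^*$, routed through a region of $\mathbb{R}^d$ that is disjoint from the bounding region of the original construction. Using that $d\ge 2$ (in fact $d\ge 3$, since Lemma~\ref{lem:tsp_reduction} already uses this), there is ample room to push $C^*$ far enough from every gadget so that its distance to any other 1-chain, 2-chain, configuration-$\mathbf{H}$, or configuration-$\mathbf{B}$ is much larger than the constant $a=20$. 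Let $L^*$ be the length of $C^*$; the new instance is $(X', \alpha')$ with $X' = X \cup C^*$ and $\alpha' = \alpha + L^*$, and $|X'|$ remains bounded by $n^d\cdot|D|^{O(1)}$.

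Next I would verify the correspondence between satisfying assignments of $\phi$ and optimal cycles for $X'$. For the easy direction, given a satisfying assignment, take the optimal path from $p^*$ to $q^*$ on $X$ guaranteed by Lemma~\ref{lem:tsp_reduction} and concatenate with $C^*$ to obtain a cycle of length exactly $\alpha + L^* = \alpha'$. For the converse, I would re-run the analysis from the proof of Lemma~\ref{lem:tsp_reduction}: since $C^*$ is placed so that properties (P1)--(P4) (suitably re-stated for $X'$) continue to hold, and since $C^*$ is itself an $a$-component (or, more directly, a $(a,|C^*|-1)$-compact subset whose distance to $X$ is much larger than $a$), Lemma~\ref{lem:pap} implies that any optimal cycle traverses $C^*$ as a single contiguous arc. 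Deleting this arc yields a Hamiltonian path on $X$ of length at most $\alpha' - L^* = \alpha$, whose endpoints must be $p^*$ and $q^*$ by Lemma~\ref{lem:endpoints}. This path then encodes a satisfying assignment for $\phi$ exactly as in the proof of Lemma~\ref{lem:tsp_reduction}.

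The main obstacle is arranging $C^*$ so that (a) it does not interact with the carefully calibrated distances that make (P4) work for $G_0$, and (b) the argument that $C^*$ is traversed contiguously is robust. Both are handled by choosing $C^*$ to leave the bounding box of the original construction along a coordinate direction (say $\mathbf{e}_2$) for a distance much larger than $\gamma = \Theta(|D|^5)$, making a wide detour, and returning. Because every edge between $C^*$ and $X$ would cost at least this detour-distance, any deviation of an optimal cycle from traversing $C^*$ as a single arc would be strictly suboptimal by a wide margin, which is exactly the quantitative statement needed to invoke the $(a,N+1)$-compactness machinery of Lemma~\ref{lem:pap}.

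Combining this reduction with Theorem~\ref{th:gridnolog} in the same way that Theorem~\ref{th:intro-tsp} was deduced from Lemma~\ref{lem:tsp_reduction}, a $2^{O(n^{1-1/d-\epsilon})}$-time algorithm for cycle-TSP would yield a $2^{o(|V|^{1-1/d})}$-time algorithm for $\csp(\R_d)$, contradicting ETH.
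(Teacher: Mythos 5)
Your proposal is correct and matches the paper's proof of Theorem~\ref{th:cycle-tsp}: the paper closes the path into a cycle by adding a 1-chain from $p^*$ to $q^*$ kept at distance at least $20$ from every other gadget, which is precisely the construction and analysis you describe (you impose somewhat larger separations than strictly necessary, but the mechanism via Lemma~\ref{lem:pap} is the same).
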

\fi
\begin{proof}
We use the same reduction as for the case of path-TSP above.
The only modification needed is to connect $p^*$ with $q^*$ via a 1-chain, that is at distance at least 20, say, from all other gadgets used in the reduction.
\end{proof}
\fi

\ifabstract\clearpage\fi
\bibliographystyle{abbrv}
\bibliography{d-dim-geo}

\appendix
\appendixtrue\ifabstract\movetrue
\mainfalse

\fi

\section{An alternative proof of Theorem \ref{thm:embedding}}\label{sec:embedding_alternative}

We give an alternative proof of Theorem \ref{thm:embedding}, based on the expansion properties of the graph $\Hs[n,d]$, and the multi-commodity max-concurrent-flow/sparsest-cut duality.

We begin with some definitions.
Let $G$ be a graph, and let $(\capa, \dem)$ be a multi-commodity flow instance on $G$, i.e.~with $\capa:E(G)\to \mathbb{R}_{\geq 0}$, and $\dem:V(G) \times V(G) \to \mathbb{R}_{\geq 0}$.
For a subset $S \subset V(G)$, with $S\neq \emptyset$, the \emph{sparsity} of $S$ (w.r.to~$\mu$) is defined to be
\[
\phi(S; \mu) := \frac{\sum_{\{u,v\}\in E(G):u\in S, v\notin S}\capa(\{u,v\})}{\sum_{u \in S, v\notin S} \dem(u,v)}.
\]
We also define 
\[
\phi(G; \mu) := \min_{S\subset V(G): S \neq \emptyset} { \phi(S) }.
\]
When every edge $e\in E(G)$ has unit capacity, i.e.~$\capa(e)=1$, and there is a unit demand between every pair of distinct vertices, i.e. $\dem(u,v)=1$ for all $u\neq v\in V(G)$, we use the notation $\phi(S;\mu)=\phi(S)$, and $\phi(G;\mu)=\phi(G)$.

The \emph{maximum concurrent flow} of a multi-commodity flow instance $\mu=(\capa,\dem)$ supported on a graph $G$, denoted by $\maxflow(G; m)$, is defined to be the supremum $\eps>0$, such that the multi-commodity flow instance $(\capa, \eps\cdot \dem)$ admits a routing in $G$ with congestion at most 1 (i.e.~without violating the capacity constrains).

\begin{theorem}[Linial, London \& Rabinovich \cite{DBLP:journals/combinatorica/LinialLR95}]\label{thm:LLR}
For any $n$-vertex graph $G$, and for any multi-commodity flow instance $\mu$ on $G$, we have
$\maxflow(G; \mu) = \Omega(1/\log n) \cdot \phi(G; \mu)$.
\end{theorem}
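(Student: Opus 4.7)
The plan is to follow the classical LP-duality approach combined with Bourgain's embedding theorem, which is the standard route to this result. The first step is to write down the maximum concurrent flow as a linear program: variables are the amount of flow sent along each path between each demand pair, constraints enforce that (i) every demand pair $(u,v)$ routes at least $\eps\cdot \dem(u,v)$ units, and (ii) the total flow on each edge $e$ does not exceed $\capa(e)$; we maximize $\eps$. Taking the LP dual and manipulating the dual variables in the usual way yields a program where the variables form a nonnegative length function $\ell\colon E(G)\to \mathbb{R}_{\geq 0}$, and the dual value equals
\[
\maxflow(G;\mu) \;=\; \min_{\ell} \frac{\sum_{e\in E(G)} \capa(e)\,\ell(e)}{\sum_{u,v\in V(G)} \dem(u,v)\,d_\ell(u,v)},
\]
where $d_\ell$ is the shortest-path semi-metric on $V(G)$ induced by $\ell$. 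Hence it suffices to lower bound this minimum by $\Omega(1/\log n)\cdot \phi(G;\mu)$.

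The second step is the key inequality relating arbitrary semi-metrics to cut semi-metrics. Let $d$ be the optimal semi-metric attaining the minimum above. Invoke Bourgain's theorem, which states that every $n$-point semi-metric embeds into $\ell_1$ with distortion $O(\log n)$. So we obtain a map $f\colon V(G)\to \ell_1$ with $d(u,v)/O(\log n) \leq \|f(u)-f(v)\|_1 \leq d(u,v)$ for all $u,v$. Since any $\ell_1$ semi-metric is a nonnegative combination of cut semi-metrics, we can write $\|f(u)-f(v)\|_1 = \int \mathbf{1}[S \text{ separates } u,v]\,d\nu(S)$ for a suitable measure $\nu$ on subsets $S\subset V(G)$.

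Third, apply an averaging argument over the decomposition. For a cut $S$, let $\mathrm{cap}(S)=\sum_{e\in \partial S}\capa(e)$ and $\mathrm{dem}(S)=\sum_{u\in S, v\notin S}\dem(u,v)$, so $\phi(S;\mu) = \mathrm{cap}(S)/\mathrm{dem}(S)$. Integrating $\mathrm{cap}$ and $\mathrm{dem}$ against $\nu$, we have
\[
\int \mathrm{cap}(S)\,d\nu(S) \;=\; \sum_e \capa(e)\,\|f(u_e)-f(v_e)\|_1 \;\leq\; \sum_e \capa(e)\,\ell(e),
\]
where $u_e,v_e$ are the endpoints of $e$ and we used that $\ell(e)\geq d(u_e,v_e)\geq \|f(u_e)-f(v_e)\|_1$, while
\[
\int \mathrm{dem}(S)\,d\nu(S) \;=\; \sum_{u,v} \dem(u,v)\,\|f(u)-f(v)\|_1 \;\geq\; \frac{1}{O(\log n)} \sum_{u,v}\dem(u,v)\,d(u,v).
\]
By an averaging argument there exists a cut $S^*$ in the support of $\nu$ with
\[
\phi(S^*;\mu) \;=\; \frac{\mathrm{cap}(S^*)}{\mathrm{dem}(S^*)} \;\leq\; \frac{\int \mathrm{cap}(S)\,d\nu(S)}{\int \mathrm{dem}(S)\,d\nu(S)} \;\leq\; O(\log n)\cdot \maxflow(G;\mu),
\]
which gives $\phi(G;\mu)\leq \phi(S^*;\mu) \leq O(\log n)\cdot \maxflow(G;\mu)$, as desired.

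The main obstacle is Bourgain's embedding theorem itself; its proof is nontrivial (it uses random subsets of geometrically growing sizes to construct a Fréchet-type embedding), and the $O(\log n)$ distortion is tight for expanders, which is precisely why the $\Omega(1/\log n)$ factor in the statement cannot in general be improved. Everything else in the argument is linear-programming duality and an averaging step, both of which are routine.
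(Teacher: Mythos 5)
The paper does not prove this statement; it is quoted verbatim as a known external result of Linial, London and Rabinovich, so there is no ``paper's own proof'' to compare against. Your argument is the standard proof of that theorem: LP duality turns maximum concurrent flow into a minimization over shortest-path semi-metrics of $\bigl(\sum_e \capa(e)\ell(e)\bigr)\big/\bigl(\sum_{u,v}\dem(u,v)\,d_\ell(u,v)\bigr)$, Bourgain's theorem embeds the optimizing metric into $\ell_1$ with distortion $O(\log n)$, the cut-cone decomposition of $\ell_1$ lets you write the embedded distances as a nonnegative combination of cut metrics, and averaging over that combination produces a single cut of sparsity at most $O(\log n)\cdot \maxflow(G;\mu)$. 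Each step is correct as written; the only cosmetic caveats are that $\nu$ can and should be taken finitely supported (the metric is finite, so the embedding lands in $\ell_1^k$), and cuts with $\dem(S)=0$ should be dropped before applying the mediant inequality $\min_i a_i/b_i \le (\sum_i a_i)/(\sum_i b_i)$, but neither affects the argument.
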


For an $n$-vertex graph $G$, its adjacency matrix $A_G \in \mathbb{R}^{n\times n}$ is defined by $A_G = \{a_{i,j}\}_{i,j}$, with $a_{i,j} = 1$ if $\{i,j\} \in E(G)$, and $a_{i,j}=0$ otherwise, were we have identified $V(G)$ with the set $\{1,\ldots,n\}$.
Let also $D_G \in \mathbb{R}^{n \times n}$ be the diagonal matrix defined by $D_G=\{d_{i,j}\}_{i,j}$, where $d_{i,i} = \deg_G(i)$.
The \emph{Laplacian} matrix of $G$, denoted by $L_G$ is defined to be
\[
L_G := D_G - A_G.
\]
For any $i\in \{1,\ldots,n\}$, let $\lambda_i(G)$ denote the $i$-th smallest eigenvalue of $L_G$.
For any graph $G$, we have
\begin{align}
\phi(G) &\geq \frac{\lambda_2(G)}{2\cdot |V(G)|} \label{eq:phi_lambda2}
\end{align}

For a pair of graphs $G$, $G'$, with the \emph{Cartesian product} of $G$ and $G'$ is defined to be the graph $\Gamma$, denoted by $G \times G'$, with 
$V(\Gamma) = V(G) \times V(G')$,
and
$\{(u,u'), (v,v')\} \in E(G)$ if and only if either
$u=v$, and $\{u',v'\}\in E(G')$, or $\{u,v\}\in E(G)$, and $u'=v'$.

\begin{lemma}[Fiedler \cite{Fiedler1973}]\label{lem:eig_interleaving}
For any pair of graphs $G$, $G'$, with $|V(G)|=n$, $|V(G')|=n'$, we have
\[
\bigcup_{i=1}^{n \cdot n'}\{ \lambda_i(G \times G')\} = \bigcup_{i=1}^{n} \bigcup_{j=1}^{n'} \{\lambda_i(G) + \lambda_j(G')\}.
\]
\end{lemma}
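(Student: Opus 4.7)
The plan is to identify the Laplacian of the Cartesian product as a Kronecker sum of the component Laplacians, and then read off the eigenvalues using the tensor product of eigenvector bases. Concretely, I would first verify the identity
\[
L_{G \times G'} \;=\; L_G \otimes I_{n'} \;+\; I_n \otimes L_{G'},
\]
where $\otimes$ denotes the Kronecker product and $I_k$ is the $k\times k$ identity. This is immediate from the definition of $G \times G'$: the adjacency matrix satisfies $A_{G\times G'} = A_G \otimes I_{n'} + I_n \otimes A_{G'}$ because $(u,u')$ and $(v,v')$ are adjacent exactly when one coordinate agrees and the other is an edge; the degree matrix satisfies $D_{G\times G'} = D_G \otimes I_{n'} + I_n \otimes D_{G'}$ because $\deg_{G\times G'}((u,u')) = \deg_G(u) + \deg_{G'}(u')$; subtracting gives the claimed identity.

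Next I would exploit this product structure to exhibit eigenvectors. Since $L_G$ and $L_{G'}$ are real symmetric, pick orthonormal eigenbases $v_1,\ldots,v_n$ of $L_G$ with $L_G v_i = \lambda_i(G)\, v_i$ and $w_1,\ldots,w_{n'}$ of $L_{G'}$ with $L_{G'} w_j = \lambda_j(G')\, w_j$. For each pair $(i,j)$, using the mixed-product property $(A\otimes B)(x\otimes y) = (Ax)\otimes(By)$, I would compute
\[
L_{G\times G'}(v_i \otimes w_j) \;=\; (L_G v_i)\otimes w_j + v_i \otimes (L_{G'} w_j) \;=\; \bigl(\lambda_i(G) + \lambda_j(G')\bigr)(v_i\otimes w_j),
\]
so $v_i \otimes w_j$ is an eigenvector of $L_{G\times G'}$ with eigenvalue $\lambda_i(G) + \lambda_j(G')$.

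Finally I would observe that the family $\{v_i \otimes w_j\}_{i\in[n], j\in[n']}$ is orthonormal in $\mathbb{R}^{nn'}$ (since $\langle v_i\otimes w_j, v_{i'}\otimes w_{j'}\rangle = \langle v_i, v_{i'}\rangle \langle w_j, w_{j'}\rangle$ equals $1$ only when $i=i'$ and $j=j'$), and therefore forms an orthonormal basis of $\mathbb{R}^{nn'}$. Hence the collection of $nn'$ eigenvalues found above exhausts the spectrum of $L_{G\times G'}$, giving the claimed equality as multisets.

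There is no serious obstacle here; the only subtle point is bookkeeping, namely that Lemma~\ref{lem:eig_interleaving} is an equality of multisets (with multiplicity), which follows because the tensor basis has exactly $nn'$ elements and the Laplacian $L_{G\times G'}$ is an $nn'\times nn'$ matrix, so the $nn'$ eigenvalues produced by the construction account for the full spectrum.
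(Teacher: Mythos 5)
Your proof is correct and complete. Note that the paper does not actually supply a proof of this lemma; it simply cites Fiedler's 1973 paper, so there is no internal argument to compare against. Your derivation is the standard one: identifying $L_{G \times G'}$ as the Kronecker sum $L_G \otimes I_{n'} + I_n \otimes L_{G'}$, applying the mixed-product rule to show each $v_i \otimes w_j$ is an eigenvector with eigenvalue $\lambda_i(G)+\lambda_j(G')$, and counting to see that these $nn'$ orthonormal tensor vectors exhaust the spectrum, giving the multiset equality. The only caveat worth flagging is the one you already raised: the lemma's $\bigcup$ notation should be read as a multiset union, and your last paragraph handles that bookkeeping correctly.
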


\begin{lemma}[Fiedler \cite{Fiedler1973}]\label{lem:lambda2_clique}
For any $n>0$, we have $\lambda_2(K_n) = n$.
\end{lemma}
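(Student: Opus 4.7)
The plan is to compute the spectrum of $L_{K_n}$ directly from its explicit form as an $n \times n$ matrix, since $K_n$ is a highly symmetric graph. This reduces the problem to finding the eigenvalues of the all-ones matrix, which is standard linear algebra.

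First I would write down the Laplacian. In $K_n$ every vertex has degree $n-1$, so $D_{K_n} = (n-1) I_n$. The adjacency matrix satisfies $A_{K_n} = J_n - I_n$, where $J_n$ is the $n \times n$ all-ones matrix. Therefore
\begin{equation*}
L_{K_n} = D_{K_n} - A_{K_n} = (n-1) I_n - (J_n - I_n) = n I_n - J_n.
\end{equation*}

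Next I would determine the spectrum of $J_n$. The matrix $J_n$ has rank $1$: the all-ones vector $\mathbf{1}$ is an eigenvector with eigenvalue $n$, and every vector orthogonal to $\mathbf{1}$ lies in the kernel of $J_n$. Hence the eigenvalues of $J_n$ are $n$ with multiplicity $1$ and $0$ with multiplicity $n-1$. Consequently, the eigenvalues of $L_{K_n} = n I_n - J_n$ are $n - n = 0$ with multiplicity $1$ (the eigenvector $\mathbf{1}$) and $n - 0 = n$ with multiplicity $n-1$ (on $\mathbf{1}^{\perp}$).

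Finally I would read off $\lambda_2(K_n)$. Ordering the eigenvalues in nondecreasing order gives $\lambda_1(K_n) = 0$ and $\lambda_2(K_n) = \lambda_3(K_n) = \cdots = \lambda_n(K_n) = n$, which yields $\lambda_2(K_n) = n$, as claimed. There is no real obstacle here: the whole argument is a one-line spectral computation, and the only thing to be careful about is the sign convention (taking $L = D - A$ so that $L$ is positive semidefinite with smallest eigenvalue $0$ corresponding to the constant eigenvector) and the ordering convention for $\lambda_i$.
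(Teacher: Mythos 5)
Your proof is correct and complete: the decomposition $L_{K_n} = nI_n - J_n$, together with the rank-one spectral decomposition of $J_n$, immediately gives the Laplacian spectrum $\{0, n, n, \ldots, n\}$ and hence $\lambda_2(K_n) = n$. The paper does not prove this lemma at all — it is simply cited to Fiedler's 1973 paper — so there is no in-paper argument to compare against; your one-line computation is the standard way to verify it.
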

 
We are now ready to give an alternative proof of Theorem 2.15, with a slightly worse bound on the depth of the resulting embedding.

\ifabstract
\begin{proof}[Alternative proof of Theorem \ref{thm:embedding}]
\else
\begin{proof}[Alternative proof of Theorem \ref{thm:embedding} with slightly worse depth]
\fi
We observe that the graph $\Hs[n,d]$ is the $d$-wise Cartesian product of $K_k$, i.e.
\[
\Hs[n,d] = \underbrace{K_n \times \ldots \times K_n}_{d \text{ times}}
\]
For any graph $K$, we have $\lambda_1(K) = 0$.
Therefore, by Lemma \ref{lem:eig_interleaving} it follows that for any pair of graphs $J$, $J'$, we have
\begin{align}
\lambda_2(J\times J') &= \min\{ \lambda_2(J), \lambda_2(J') \} \label{eq:lambda2_product}
\end{align}
Combining \eqref{eq:lambda2_product} with Lemma \ref{lem:lambda2_clique}, we obtain
\begin{align}
\lambda_2(\Hs[n,d]) &= \lambda_2(K_n) = n. \label{eq:H_lambda2}
\end{align}
By \eqref{eq:H_lambda2} \& \eqref{eq:phi_lambda2} we get
\begin{align}
\phi(\Hs[n,d]) &\geq \frac{n}{2m}. \label{eq:H_phi}
\end{align}

Let $f:V(G) \to [n]^d$ be an arbitrary map that sends at most $2$ vertices in $V(G)$ to any vertex in $[n]^d$.
Consider a multi-commodity flow instance 
$\mu = (\capa, \dem)$
on $\Hs[n,d]$, where every edge $e \in E(\Hs[n,d])$ has capacity $\capa(e)=1$, 
and for every $u,v\in [n]^d$, we set
\[
\dem(u,v) = |\{\{u',v'\}\in E(G) : f(u')=u \text{, and } f(v')=v \}|.
\]

For any $A\subseteq [n]^d$, and for any $u\in [n]^d$, let us write
\[
\dem(u,A)=\sum_{v\in A} \dem(u,v).
\]
Since $G$ has maximum degree $\dmax$,
and for every $u\in V(G)$, $|f^{-1}(u)| \leq 2$,
it follows that the total demand incident to any vertex of $\Hs[n,d]$ is at most $2 \dmax$, i.e.~for any $u\in [n]^d$, we have $\dem(u,[n]^d) \leq 2\dmax$.
Therefore, for any $S\subset [n]^d$, with $S\neq \emptyset$, we have
\begin{align}
\sum_{u\in S, v\notin S} \dem(u,v) &= \min\left\{ \sum_{u\in S} \dem(u,[n]^d)\setminus S), \sum_{u\in [n]^d) \setminus S} \dem(u, S) \right\} \notag \\
 &\leq  \min\left\{ \sum_{u\in S} \dem(u,[n]^d), \sum_{u\in [n]^d) \setminus S} \dem(u, [n]^d) \right\} \notag \\
 &\leq 2\dmax \cdot \min\{|S|, |[n]^d\setminus S|\} \notag \\
 &\leq \frac{2 \dmax}{m} \cdot |S| \cdot |[n]^d \setminus S| \label{eq:sum_d}
\end{align}
Combining \eqref{eq:sum_d} \& \eqref{eq:H_phi} we obtain
\begin{align}
\phi(\Hs[n,d]; \mu) &= \min_{S\subset [n]^d: S \neq \emptyset} \frac{\sum_{\{u,v\}\in E(\Hs[n,d]):u\in S, v\notin S}\capa(\{u,v\})}{\sum_{u \in S, v\notin S} \dem(u,v)}\notag\\
&\geq \frac{m}{2\dmax} \min_{S\subset [n]^d: S \neq \emptyset} \frac{|\{\{u,v\}\in E(\Hs[n,d]):u\in S, v\notin S|}{|S| \cdot |[n]^d \setminus S|}\notag\\
&= \frac{m}{2 \dmax} \phi(\Hs[n,d]) \notag \\
&\geq \frac{n}{4\dmax} \label{eq:sparsity}
\end{align}
By \eqref{eq:sparsity} \& Theorem \ref{thm:LLR} we deduce that
\begin{align*}
\maxflow(\Hs[n,d]; \mu) &\geq \Omega\left(\frac{n}{\dmax \cdot \log m}\right).
\end{align*}
We can therefore route $\mu$ on $\Hs[n,d]$ with maximum edge congestion at most $O\left(\frac{\dmax \cdot \log m}{n}\right)$.

Suppose that we assign unit capacity to every vertex of $\Hs[n,d]$.
Since every vertex of $\Hs[n,d]$ has degree $O(d\cdot n)$, it follows
that $\mu$ can be routed with maximum vertex congestion at most at
most $O\left(d \cdot \dmax \cdot \log m\right)$.
Arguing as in the previous proof, by applying the rounding techniques of Raghavan \& Thompson \cite{DBLP:journals/combinatorica/RaghavanT87}, and Raghavan \cite{DBLP:journals/jcss/Raghavan88}, we obtain in deterministic polynomial time an embedding with depth $O\left(\frac{d\cdot \dmax \cdot \log^2 m}{\log\log m}\right)$.
\end{proof}

\ifabstract
\section{An exact algorithm for packing unit balls in {\large $\mathbb{R}^{\lowercase{d}}$}}
\label{sec:packing-alg}
\else
\section{An exact algorithm for packing unit balls in $\mathbb{R}^{d}$}
\label{sec:packing-alg}
\fi
We present an $n^{O(k^{1-1/d})}$ time algorithm for finding a pairwise
nonintersecting set $k$ unit balls in $d$-dimensional space (generalizing the result of Alber and Fiala~\cite{MR2070519} for $d=2$). As the technique (combining a sweeping argument, brute force, and dynamic programming) is fairly standard, we keep the discussion brief. Exactly the same argument works for finding $k$ pairwise nonintersecting $d$-dimensional unit cubes.

\begin{theorem}
Let $d\geq 2$ be a fixed constant.
There exists an algorithm that, given a set $X$ of unit $d$-dimensional balls in $\mathbb{R}^d$ and an integer $k\geq 0$, decides  in time $n^{O(k^{1-1/d})}$ whether there exist $k$ pairwise nonintersecting balls in $X$.
\end{theorem}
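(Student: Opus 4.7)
My plan is to design a sweep-line dynamic programming algorithm that exploits a packing bound on the size of the relevant DP state. I would first sort the $n$ balls in $X$ by the $x_1$-coordinate of their centers and process them in this order. At each step, the DP state consists of a pair $(j, A)$, where $j \le k$ is the number of balls selected so far and $A$ is the \emph{active profile}: the set of already-selected balls whose centers lie within $x_1$-distance $2$ of the current sweep position. The transition, when examining the next ball $b$ in the sorted order, is either to skip $b$ or to include $b$ in the selection, provided $b$ does not intersect any ball in $A$. This is a complete check because any already-selected ball whose center has $x_1$-coordinate farther than $2$ from that of $b$ cannot intersect $b$, since the balls have unit radius.

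The heart of the running-time analysis is a geometric claim: after choosing an appropriate sweep direction from among the $d$ coordinate axes, the active profile $A$ in any DP state consistent with a valid $k$-selection has size $O(k^{1-1/d})$. The proof is a volume/packing argument. The $k$ pairwise disjoint unit balls have total volume $\Theta(k)$, so the axis-aligned bounding box of their centers, with side lengths $L_1, \ldots, L_d$, satisfies $\prod_i L_i = \Omega(k)$; therefore some $L_{i^*} \ge k^{1/d}$. Sweeping along coordinate $i^*$, the balls with centers in any $x_{i^*}$-slab of width $2$ are confined to a $(d-1)$-dimensional cross-section whose area is at most $O(k / L_{i^*}) = O(k^{1-1/d})$ in the tight packing regime, and a standard packing bound then limits the number of pairwise disjoint unit balls with centers in that slab to $O(k^{1-1/d})$.

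Since the correct choice of sweep direction depends on the (unknown) solution, the algorithm runs the DP once for each of the $d$ coordinate directions and reports YES if any run succeeds; the factor $d = O(1)$ is absorbed. Given the $O(k^{1-1/d})$ bound on $|A|$, the total number of distinct DP states is at most $\binom{n}{O(k^{1-1/d})} = n^{O(k^{1-1/d})}$, and each DP transition is polynomial-time, giving the claimed running time $n^{O(k^{1-1/d})}$.

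The main obstacle is making the cross-section bound precise in the \emph{sparse} regime, where the bounding box of the chosen balls is much larger than the tight volume $\Theta(k)$ (e.g.\ clusters of balls separated by large gaps); in that case the naive $\prod_{j \ne i^*} L_j \le k^{1-1/d}$ estimate can fail. The clean way to handle this is to precede the DP with a preliminary decomposition of $X$ into connected components of the \emph{proximity graph} on $X$, in which two balls are adjacent if and only if their centers are within distance $4$. Balls in distinct components cannot intersect, so an optimal selection decomposes as a sum of independent per-component selections, and within each component the chosen balls are forced to be packed tightly enough that the volume/cross-section argument goes through cleanly. Enumerating the distribution of $k$ across components adds only a $\mathrm{poly}(n, k)$ factor, preserving the $n^{O(k^{1-1/d})}$ bound.
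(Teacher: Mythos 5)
Your approach is genuinely different from the paper's, and it contains a real gap. The core geometric claim --- that for some coordinate direction $i^*$, \emph{every} width-$2$ slab perpendicular to that axis contains only $O(k^{1-1/d})$ of the chosen balls --- is false, and the proximity-graph decomposition you propose does not rescue it.

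Here is a concrete counterexample for $d=3$. Let $m=\Theta(\sqrt{k})$ and take three flat grids of $\Theta(k)$ disjoint unit balls each: cluster $A$ at $\{(3i,3j,0)\}$, cluster $B$ at $\{(0,3j,3\ell)\}$, cluster $C$ at $\{(3i,0,3\ell)\}$ with $1\le i,j,\ell\le m$, plus $O(1)$ bridge balls so that the whole configuration forms a single component of the distance-$4$ proximity graph. The bounding box has $L_1\approx L_2\approx L_3\approx\sqrt{k}$, so $\prod_i L_i\approx k^{3/2}\gg k$: the balls sit on a low-dimensional skeleton whose ambient box has volume much larger than $k$. Now for \emph{every} coordinate $i$, the width-$2$ slab containing the coordinate hyperplane that hosts one entire flat cluster contains $\Theta(k)$ of the chosen balls. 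At the moment the sweep crosses that slab, your active profile $A$ has size $\Theta(k)$ and the DP has $n^{\Theta(k)}$ states, no matter which axis you sweep along. Connectedness of the proximity graph limits the size of gaps, but it does not force $L_i=O(k^{1/d})$ in each direction, which is what your cross-section bound actually needs.

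The quantity $O(k/L_{i^*})$ that you invoke is the \emph{average} number of chosen balls per width-$2$ slab, so some slab is sparse --- but a sweep is forced to pass through \emph{every} slab, including the worst one. The paper avoids exactly this by using a randomly shifted family of hyperplanes $H(i)$ with spacing $s=k^{1/d}$: in expectation only $O(k^{1-1/d})$ chosen balls cross $H(i)$, so the algorithm explicitly guesses the offset $i$ and the $O(k^{1-1/d})$ crossing balls, which decomposes the instance into independent hypercubes of side $s$. Inside such a small hypercube a sweep DP is safe, because there any width-$1$ slab has cross-section $s^{d-1}$ and hence holds at most $O(s^{d-1})=O(k^{1-1/d})$ disjoint unit balls by a pure volume bound. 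The missing idea in your proposal is this freedom to \emph{choose} which cut slab to pay for (by guessing the shift and the crossing balls) rather than having to survive the worst slab encountered during a single global sweep.
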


\begin{proof}
Let $s=k^{1/d}$.
For any $i\in \{0,\ldots,s-1\}$, and for any $j\in [d]$, let
$H(i,j) = \{(x_1,\ldots,x_d)\in \mathbb{R}^d : \text{ there exists } r\in \mathbb{Z} \text{ s.t.~} x_{j} = i + r\cdot s\}$.
Note that $H(i,j)$ is the union of parallel $(d-1)$-dimensional hyperplanes, with any two consecutive ones being at distance $s$.
For any $i\in \{0,\ldots,s-1\}$, let
$H(i) = \bigcup_{j=1}^d H(i,j)$.
Observe that $\mathbb{R}^d \setminus H(i)$ is the union of open hypercubes.

Let $A\subset \mathbb{R}^d$ be a unit $d$-dimensional ball.
By the union bound, we have
\ifabstract
\begin{align*}
\mathbf{Pr}_{i\in \{0,\ldots,s-1\}} &\left[ A \cap H(i) \neq \emptyset\right]\\
 &\leq \sum_{j=1}^d \mathbf{Pr}_{i\in \{0,\ldots,s-1\}}\left[ A\cap H(i,j) \neq \emptyset \right]\\
 &= O(d/s) = O(1/s),
\end{align*}
\else
\begin{align*}
\mathbf{Pr}_{i\in \{0,\ldots,s-1\}} \left[ A \cap H(i) \neq \emptyset\right] &\leq \sum_{j=1}^d \mathbf{Pr}_{i\in \{0,\ldots,s-1\}}\left[ A\cap H(i,j) \neq \emptyset \right] = O(d/s) = O(1/s),
\end{align*}
\fi
where $i\in \{0,\ldots,s-1\}$ is chosen uniformly at random.

Suppose that there exists a subset $X^* \subseteq X$ of pairwise nonintersecting balls with $|X^*|=k$.
By linearity of expectation, we obtain
\ifabstract
\begin{align*}
\mathbf{E}_{i\in \{0,\ldots,s-1\}} [|\{A\in & X^* : A\cap H(i) \neq \emptyset\}|]\\
 &\leq \sum_{A\in X^*} \mathbf{Pr}_{i\in \{0,\ldots,s-1\}} \left[ A\cap H(i) \neq \emptyset \right]\\
 &= O(k/s) = O(k^{1-1/d}).
\end{align*}
\else
\begin{align*}
\mathbf{E}_{i\in \{0,\ldots,s-1\}} \left[|\{A\in X^* : A\cap H(i) \neq \emptyset\}|\right] &\leq \sum_{A\in X^*} \mathbf{Pr}_{i\in \{0,\ldots,s-1\}} \left[ A\cap H(i) \neq \emptyset \right]\\
 &= O(k/s) = O(k^{1-1/d}).
\end{align*}
\fi
By averaging, there exists $i^*\in \{0,\ldots,s-1\}$, such that 
\[
|\{A\in X^* : A\cap H(i^*) \neq \emptyset\}| = O(k^{1-1/d}).
\]

The algorithm proceeds as follows.
We guess a value $i\in \{0,\ldots,s-1\}$, and we guess a subset $Y$ of at most $O(k^{1-1/d})$ balls in $X$ that intersect $H(i)$.
Let
\ifabstract
$X' = X \setminus \{A\in X : A \cap H(i)\neq \emptyset\}$.
\else
\[
X' = X \setminus \{A\in X : A \cap H(i)\neq \emptyset\}.
\]
\fi
Let ${\cal C}$ be the set of open hypercubes in $\mathbb{R}^d \setminus H(i)$.
We partition $X'$ into a collection of subsets $\{X'_C\}_{C\in {\cal C}}$, where every $X'_C$ contains all the balls that intersect the open hypercube $C\in {\cal C}$.
For each $C\in {\cal C}$, we define the subset $X''_C\subseteq X'_C$ containing all balls that do not intersect any of the balls in $Y$, i.e.
\ifabstract
$X''_C = \{A\in X'_C : A\cap \bigcup_{A'\in Y} A' = \emptyset\}$.
\else
\[
X''_C = \{A\in X'_C : A\cap \bigcup_{A'\in Y} A' = \emptyset\}.
\]
\fi

We now proceed to compute a maximum set of pairwise nonintersecting balls in each $X''_C$.
By translating, we may assume that $C=(0,s)^d$.
For each $j\in \{0,\ldots,s-1\}$, let
\ifabstract
$C(j) = C \cap \left([j,j+1] \times [0,s]^{d-1}\right)$.
\else
\[
C(j) = C \cap \left([j,j+1] \times [0,s]^{d-1}\right).
\]
\fi
Let $X''_{C,j}$ be the set of all balls intersecting $C(j)$ and 
let $Y''_{C,j}=\bigcup_{j'=0}^{j'=j}X''_{C,j'}$.
For any $j\in \{0,\ldots,s-1\}$, there can be at most $O(s)$ balls in
the solution $X^*$ that intersect $C(j)$.  For each $j\in
\{0,\ldots,s-1\}$, we compute all possible subsets of at most $O(s)$
pairwise nonintersecting balls in $X''_{C,j}$; let $\X_{C,j}$ be the collection of all these sets.  We
can now compute an maximum set $Y_C$ of pairwise nonintersecting balls
in $X''_C$ via dynamic programming, as follows. For every $j\in
\{0,\ldots,s-1\}$ and subset $Z\in \X_{C,j}$, we compute the
maximum size of a set of pairwise nonintersecting balls in $Y''_{C,j}$
whose intersection with $X''_{C,j}$ is precisely $Z$. The important observation is that the sets $X''_{C,j}$ and $X''_{C,j-2}$ are disjoint. Hence the maximum for a given $j$ and $Z\in \X_{C,j}$ can be computed if we know the maximum for $j-1$ and every $Z\in \X_{C,j-1}$. 

After computing the maximum set $Y_C$ for each open hypercube $C$, we
output the set of pairwise nonintersecting balls in $X$ that we find
is $Y \cup \bigcup_{C\in {\cal C}} Y_C$.  The final set of balls is
the maximum such set computed for all choices of $i$, and $Y$.  This
concludes the description of the algorithm.

Let us first argue that the algorithm is correct.
Indeed, when we choose $i=i^*$, the algorithm will eventually correctly choose the correct set $Y=X^* \cap H(i^*)$.
Once we remove the balls that intersect $H(i^*)$, and all the balls that intersect the balls in $Y$, the remaining subproblems are independent, and are solved optimally.  Therefore, the resulting global solution is optimal.

Lastly, let us bound the running time.
There are $s=n^{O(1/d)}$ choices for $i$, and for each such choice, there are at most $n^{O(k^{1-1/d})}$ choices for $Y$.
For every such choice, we solve at most $n$ different subproblems (one for every open hypercube in ${\cal C}$).
Each subproblem uses dynamic programming with a table with $O(s)$ entries, where each entry stores $n^{O(s)}= n^{O(k^{1-1/d})}$ different partial solutions.
It follows that the total running time is $n^{O(k^{1-1/d})}$, as required.
\end{proof}


\end{document}